\newtheorem{definition}{Definition}
\newtheorem{lemma}[definition]{Lemma}
\newtheorem{theorem}[definition]{Theorem}
\newtheorem{corollary}[definition]{Corollary}
\def\squareforqed{$\square$}
\def\qed{\ifmmode\squareforqed\else{\unskip\nobreak\hfil
\penalty50\hskip1em\null\nobreak\hfil\squareforqed
\parfillskip=0pt\finalhyphendemerits=0\endgraf}\fi}
\def\endenv{\ifmmode\;\else{\unskip\nobreak\hfil
\penalty50\hskip1em\null\nobreak\hfil\;
\parfillskip=0pt\finalhyphendemerits=0\endgraf}\fi}
\newenvironment{proof}{\noindent \textbf{{Proof.~} }}{\qed}
\def\bpf{\begin{proof}}
\def\epf{\end{proof}}
\def\bea{\begin{eqnarray}}
\def\eea{\end{eqnarray}}
\def\beq{\begin{equation}}
\def\eeq{\end{equation}}
\def\bal{\begin{aligned}}
\def\eal{\end{aligned}}
\def\bma{\begin{bmatrix}}
\def\ema{\end{bmatrix}}
\def\diag{\mathop{\rm diag}}
\def\tr{{\rm Tr}}
\def\bigox{\bigotimes}
\def\dg{\dagger}
\def\lc{\lceil}
\def\rc{\rceil}
\def\lf{\lfloor}
\def\rf{\rfloor}
\def\op{\oplus}
\def\ox{\otimes}
\def\lin{\mathop{\rm span}}
\def\a{\alpha}
\def\b{\beta}
\def\g{\gamma}
\def\t{\theta}
\def\l{\lambda}
\def\m{\mu}
\def\n{\nu}
\def\r{\rho}
\def\s{\sigma}
\def\ps{\psi}
\def\G{\Gamma}
\newcommand{\bra}[1]{\langle{#1}|}
\newcommand{\ket}[1]{|{#1}\rangle}
\newcommand{\proj}[1]{|{#1}\rangle \langle {#1}|}
\newcommand{\abs}[1]{\left\lvert {#1} \right\rvert}
\newcommand{\etal}{{\sl et~al.}}
\newcommand{\nc}{\newcommand}
\nc{\bbA}{\mathbb{A}} \nc{\bbB}{\mathbb{B}} \nc{\bbC}{\mathbb{C}}
\nc{\bbD}{\mathbb{D}} \nc{\bbE}{\mathbb{E}} \nc{\bbF}{\mathbb{F}}
\nc{\bbG}{\mathbb{G}} \nc{\bbH}{\mathbb{H}} \nc{\bbI}{\mathbb{I}}
\nc{\bbJ}{\mathbb{J}} \nc{\bbK}{\mathbb{K}} \nc{\bbL}{\mathbb{L}}
\nc{\bbM}{\mathbb{M}} \nc{\bbN}{\mathbb{N}} \nc{\bbO}{\mathbb{O}}
\nc{\bbP}{\mathbb{P}} \nc{\bbQ}{\mathbb{Q}} \nc{\bbR}{\mathbb{R}}
\nc{\bbS}{\mathbb{S}} \nc{\bbT}{\mathbb{T}} \nc{\bbU}{\mathbb{U}}
\nc{\bbV}{\mathbb{V}} \nc{\bbW}{\mathbb{W}} \nc{\bbX}{\mathbb{X}}
\nc{\bbY}{\mathbb{Y}} \nc{\bbZ}{\mathbb{Z}}
\nc{\bA}{{\bf A}} \nc{\bB}{{\bf B}} \nc{\bC}{{\bf C}}
\nc{\bD}{{\bf D}} \nc{\bE}{{\bf E}} \nc{\bF}{{\bf F}}
\nc{\bG}{{\bf G}} \nc{\bH}{{\bf H}} \nc{\bI}{{\bf I}}
\nc{\bJ}{{\bf J}} \nc{\bK}{{\bf K}} \nc{\bL}{{\bf L}}
\nc{\bM}{{\bf M}} \nc{\bN}{{\bf N}} \nc{\bO}{{\bf O}}
\nc{\bP}{{\bf P}} \nc{\bQ}{{\bf Q}} \nc{\bR}{{\bf R}}
\nc{\bS}{{\bf S}} \nc{\bT}{{\bf T}} \nc{\bU}{{\bf U}}
\nc{\bV}{{\bf V}} \nc{\bW}{{\bf W}} \nc{\bX}{{\bf X}}
\nc{\bY}{{\bf Y}} \nc{\bZ}{{\bf Z}}
\nc{\bmA}{{\bm A}} \nc{\bmB}{{\bm B}} \nc{\bmC}{{\bm C}}
\nc{\bmD}{{\bm D}} \nc{\bmE}{{\bm E}} \nc{\bmF}{{\bm F}}
\nc{\bmG}{{\bm G}} \nc{\bmH}{{\bm H}} \nc{\bmI}{{\bm I}}
\nc{\bmJ}{{\bm J}} \nc{\bmK}{{\bm K}} \nc{\bmL}{{\bm L}}
\nc{\bmM}{{\bm M}} \nc{\bmN}{{\bm N}} \nc{\bmO}{{\bm O}}
\nc{\bmP}{{\bm P}} \nc{\bmQ}{{\bm Q}} \nc{\bmR}{{\bm R}}
\nc{\bmS}{{\bm S}} \nc{\bmT}{{\bm T}} \nc{\bmU}{{\bm U}}
\nc{\bmV}{{\bm V}} \nc{\bmW}{{\bm W}} \nc{\bmX}{{\bm X}}
\nc{\bmY}{{\bm Y}} \nc{\bmZ}{{\bm Z}}
\nc{\cA}{{\cal A}} \nc{\cB}{{\cal B}} \nc{\cC}{{\cal C}}
\nc{\cD}{{\cal D}} \nc{\cE}{{\cal E}} \nc{\cF}{{\cal F}}
\nc{\cG}{{\cal G}} \nc{\cH}{{\cal H}} \nc{\cI}{{\cal I}}
\nc{\cJ}{{\cal J}} \nc{\cK}{{\cal K}} \nc{\cL}{{\cal L}}
\nc{\cM}{{\cal M}} \nc{\cN}{{\cal N}} \nc{\cO}{{\cal O}}
\nc{\cP}{{\cal P}} \nc{\cQ}{{\cal Q}} \nc{\cR}{{\cal R}}
\nc{\cS}{{\cal S}} \nc{\cT}{{\cal T}} \nc{\cU}{{\cal U}}
\nc{\cV}{{\cal V}} \nc{\cW}{{\cal W}} \nc{\cX}{{\cal X}}
\nc{\cY}{{\cal Y}} \nc{\cZ}{{\cal Z}}
\nc{\hA}{{\hat{A}}} \nc{\hB}{{\hat{B}}} \nc{\hC}{{\hat{C}}}
\nc{\hD}{{\hat{D}}} \nc{\hE}{{\hat{E}}} \nc{\hF}{{\hat{F}}}
\nc{\hG}{{\hat{G}}} \nc{\hH}{{\hat{H}}} \nc{\hI}{{\hat{I}}}
\nc{\hJ}{{\hat{J}}} \nc{\hK}{{\hat{K}}} \nc{\hL}{{\hat{L}}}
\nc{\hM}{{\hat{M}}} \nc{\hN}{{\hat{N}}} \nc{\hO}{{\hat{O}}}
\nc{\hP}{{\hat{P}}} \nc{\hR}{{\hat{R}}} \nc{\hS}{{\hat{S}}}
\nc{\hT}{{\hat{T}}} \nc{\hU}{{\hat{U}}} \nc{\hV}{{\hat{V}}}
\nc{\hW}{{\hat{W}}} \nc{\hX}{{\hat{X}}} \nc{\hZ}{{\hat{Z}}}
\nc{\hn}{{\hat{n}}}
\def\diag{\mathop{\rm diag}}
\def\ine{\mathop{\rm In}}
\def\lin{\mathop{\rm span}}
\def\max{\mathop{\rm max}}
\def\slocc{\mathop{\rm SLOCC}}
\def\tr{\mathop{\rm Tr}}
\def\bigox{\bigotimes}
\def\dg{\dagger}
\def\lc{\lceil}
\def\rc{\rceil}
\def\lf{\lfloor}
\def\rf{\rfloor}
\def\op{\oplus}
\def\ox{\otimes}
\begin{document}

\preprint{APS/123-QED}

\title{Inertias of entanglement witnesses} 

\author{Yi Shen}
\email[]{yishen@buaa.edu.cn}
\affiliation{School of Mathematical Sciences, Beihang University, Beijing 100191, China}
\affiliation{Department of Mathematics and Statistics, Institute for Quantum Science and Technology, University of Calgary, AB, Canada T2N 1N4}


\author{Lin Chen}
\email[]{linchen@buaa.edu.cn (corresponding author)}
\affiliation{School of Mathematical Sciences, Beihang University, Beijing 100191, China}
\affiliation{International Research Institute for Multidisciplinary Science, Beihang University, Beijing 100191, China}

\author{Li-Jun Zhao}
\affiliation{School of Mathematical Sciences, Beihang University, Beijing 100191, China}

\date{\today}

\begin{abstract}
Entanglement witnesses (EWs) are a fundamental tool for the detection of entanglement. 
We study the inertias of EWs, i.e., the triplet of the numbers of negative, zero, and positive eigenvalues respectively. 
We focus on the EWs constructed by the partial transposition of states with non-positive partial transposes.
We provide a method to generate more inertias from a given inertia by the relevance between inertias.
Based on that we exhaust all the inertias for EWs in each qubit-qudit system.
We apply our results to propose a separability criterion in terms of the rank of the partial transpose of state.
We also connect our results to tripartite genuinely entangled states and the classification of states with non-positive partial transposes.
Additionally, the inertias of EWs constructed by X-states are clarified.

\end{abstract}

                             
\maketitle


\section{Introduction}
\label{sec:intro}

Quantum entanglement, discovered by Einstein, Podolsky, Rosen (EPR), and Schr\"odinger \cite{EPR1947,Sch1935}, is a remarkable feature of quantum mechanics. It involves nonclassical correlations between subsystems, and lies in the heart of quantum information theory \cite{entrmp2009,entdect2009}. In recent decades, entanglement has been recognized as a kind of valuable resouce \cite{entrmp2009,rtgour2019,rtent2019}. It plays a central role in various quantum information processing tasks such as quantum computing \cite{qcnature2005}, teleportation \cite{telepor04}, dense coding \cite{densecoding2002}, cryptography \cite{crypt2020}, and quantum key distribution \cite{qkdrmp2020}. 

Although several useful separability criteria such as positive-partial-transpose (PPT) criterion \cite{ppt96,ppte1997}, range criterion \cite{ppte1997}, and realignment criterion \cite{realignment03} were developed, all of them cannot strictly distinguish between the set of entangled states and that of separable ones. According to PPT criterion, any state with non-positive partial transpose (NPT) must be entangled. Nevertheless, the converse only holds for two-qubit and qubit-qutrit systems. There exist PPT entangled (PPTE) states in higher-dimensional Hilbert spaces \cite{ppte1997}.  It has been shown that determining whether a bipartite state is entangled or not is an NP-hard problem \cite{Gurvits03}. It is even harder to tame multipartite entanglement \cite{gme2011}, since multipartite entangled states can be further classified as genuinely multipartite entangled states and biseparable states \cite{geys2020}. 
In 2000, Terhal first introduced the term {\em entanglement witness} (EW) by indicating that a violation of a Bell inequality can be expressed as a witness for entanglement \cite{Terhal1999Bell}. Nowadays, EWs are a fundamental tool for the detection of entanglement both theoretically and experimentally \cite{entdect2009}. 
More and more EWs have been implemented with local measurements \cite{med2019,lewo2020,mdiew2020}.

EWs are observables that enable us to detect entanglement physically. They can be classified as decomposable and non-decomposable EWs. An EW $W$ is decomposable if it can be written as $W=P+Q^\G$ \cite{optew2000}, where $P$ and $Q$ are both positive semidefinite operators, and $Q^\G$ means the partial transpose of $Q$. Otherwise, it is non-decomposable. It is noteworthy that the partial transposition of NPT states is an easy way to construct EWs by the so-called Choi-Jamiolkowski isomorphism \cite{choi_1982}. Furthermore, the partial transpose of an NPT state can be used to construct optimal EWs for decomposable EWs \cite{optew2000}. However, it cannot be directly realized in experiments because the partial transposition is not a physical operation \cite{nielsen00}. To avoid such weakness, the third-moment of the partial transposed density matrix by using bi-local random unitaries was studied in \cite{youzhou2020}. Moreover, in \cite{ptmoments2020} authors proposed and experimentally demonstrated conditions for mixed-state entanglement and measurement protocols based on PPT criterion. These results shed light on the experiments involving the partial transposition. It has been shown that an EW can detect PPTE states if and only if it is non-decomposable \cite{optew2000}. Therefore, much effort has been devoted to constructing non-decomposable EWs \cite{oew2006,pmew2009,oewmap2010,optmap2011}. 

For an NPT state $\rho$, the negative eigenvalues of $\rho^\G$ are a signature of entanglement. They are closely related to other problems in entanglement theory. For instance, the negativity \cite{negativity2002}, a well-known computable entanglement measure, is the sum of the absolute values of negative eigenvalues. Also, by the definition of $1$-distillable state \cite{distill1998}, the more negative eigenvalues $\rho^\G$ has, the more likely $\rho$ is $1$-distillable. Thus, it is important to explore the negative eigenvalues of $\rho^\G$.
The problem of determining how many negative eigenvalues the partial transpose of NPT state could contain has attracted great interest \cite{twoqubitine2008,2xnppt,Neigenv2013,Johnston2013Non}. 
It was first specified in \cite{twoqubitine2008} that $\rho^\G$ has one negative eigenvalue and three positive eigenvalues for any two-qubit entangled state $\rho$. For this reason, an easier method to identify two-qubit entangled states was proposed. That is any two-qubit state is separable if and only if $\det\rho^\G\geq 0$ \cite{twoqubitine2008}. Then some restrictions on the spectral properties of EWs were first derived in \cite{ewspectral2008}.
For NPT state $\rho$ supported on $\mathbb{C}^m\ox\mathbb{C}^n$, it is known that $\rho^\G$ has at most $(m-1)(n-1)$ negative eigenvalues, and all eigenvalues of $\rho^\G$ lie within $[-1/2,1]$ when $\rho$ is normalized \cite{Neigenv2013}. Furthermore, Nathaniel Johnston \etal ~ discussed an interesting problem on the eigenvalues for EWs, namely the inverse eigenvalue problem \cite{inverseew2018}. This problem on EWs inspires us to investigate the matrix inertia of EW instead of considering the number of negative eigenvalues only. We will show the inertia is a finer index to characterize EWs than the number of negative eigenvalues.


In this paper, we study the inertia of EW with a focus on the partial transpose of NPT state in the qubit-qudit system. In the bipartite setting, qubit-qudit states appear in many problems, and have received a lot of attention. Several important properties of qubit-qudit states have been derived. First, all qubit-qudit NPT states are distillable \cite{2ndis2000}. However, the distillability of NPT states in the two-qutrit system still remains as a major open problem in entanglement theory. Second, a first systematic study on the separability of qubit-qudit PPT states was discussed in \cite{2Nsep2000}. Moreover, the birank of qubit-qudit PPT state and the length of qubit-qudit separable state were investigated in \cite{2xnppt}. Very recently, the absolutely separable states in qubit-qudit systems were studied in \cite{2nasep2020} for they are useful in quantum computation. Third, one of the most known analytical formulas for entanglement measures is the entanglement of formation of two-qubit states \cite{eof1998}. Later, a lower bound on entanglement of formation for the qubit-qudit system was derived \cite{2neof2003}. Fourth, the optimization of decomposable EWs acting on the qubit-qudit system was studied \cite{2xnoptdecew2011,optdecewpra2011}. It is known that for a qubit-qudit NPT state $\rho$, $\rho^\G$ is an optimal decomposable EW if and only if the range of $\rho$ contains no product vector \cite{2xnoptdecew2011}.

Here, we introduce a useful tool to study the inertia of EW. That is the inertia of a Hermitian operator is invariant under $\slocc$. Using such a tool we derive the main results in this paper as follows. We first obtain the lower and upper bounds on the number of negative (positive) eigenvalues for an arbitrary bipartite EW in Lemma \ref{le:inertia}. Second we completely determine the inertias of two-qubit EWs in Theorem \ref{cr:twoqubit}. It generalizes the result in \cite{twoqubitine2008}. Third, we show the relation between EWs and the partial transposes of NPT states in Lemma \ref{le:rew>=0}. Then we deeply study the partial transpose of NPT state. In Lemma \ref{le:rho+xid} we reveal the essential relevance between inertias, and propose a method to generate more inertias from a given inertia. This method is also applicable to PPT states. Thus, we can generate inertias for the partial transposes of PPT states as by-products. Moreover, the existence of product vectors in the kernel of $\rho^\G$ is essential to characterize its inertia. Therefore, we discuss this problem in Lemma \ref{le:mxnkerprod}. Based on that we present a sufficient and necessary condition for a sequence to be the inertia in Theorem \ref{le:n-1negative}. Combining Lemma \ref{le:rho+xid} and Theorem \ref{le:n-1negative} we further exhaust all inertias in every qubit-qudit system in Theorem \ref{le:numcN2n-1}. Then we extend our study to general NPT states in Lemma \ref{le:mxnpt}. Finally, we build the connections between our results and other problems in quantum information theory. 
In Theorem \ref{le:newsep} we present a separability criterion in terms of the rank of $\rho^\G$. Then we propose a method to generate the inertia of $\rho^\G$ for higher-dimensional state $\rho$. Using this method we can characterize the inertias of the partial transposes of tripartite genuinely entangled states in a systematic way. We also indicate that the inertia of $\rho^\G$ provides a tool to classify states under SLOCC equivalence. In Theorem \ref{le:xstate} we explicitly express the eigenvalues of $\rho^\G$, and quantify the number of negative ones when $\rho$ is a qubit-qudit X-state \cite{gxstate2010}.

The remainder of this paper is organized as follows. In Sec. \ref{sec:pre} we introduce the preliminaries by clarifying the notations and presenting necessary definitions and useful results. In Sec. \ref{sec:restrict} we show the inertia of an EW is not arbitrary. We first derive some restrictions on the inertia of EW. Second we completely determine the inertia of two-qubit EW. In Sec. \ref{sec:inebiew} we present a sufficient and necessary condition for a sequence to be the inertia, and exhaust all inertias in the qubit-qudit system. In Sec. \ref{sec:app} we show some applications of our results. The concluding remarks are given in Sec. \ref{sec:con}. In the final part, we prove some of our results in the three appendices. In Appendix \ref{sec:proofsec2} we provide the proofs of results in Sec. \ref{sec:restrict}. In Appendix \ref{sec:proof2} we provide the proofs of results in Sec. \ref{sec:inebiew}. In Appendix \ref{sec:proof3} we present the proofs of results in Sec. \ref{sec:app}.


\section{Preliminaries}
\label{sec:pre}

In this section we introduce the preliminaries. First we clarify the notations. Second we introduce some necessary definitions. Finally we present useful results related to the inertia of EW.

We use $\bigox_{i=1}^n \mathbb{C}^{d_i}$ to represent an $n$-partite Hilbert space, where $d_i$'s are local dimensions. If $\rho\in\cB(\bigox_{i=1}^n \mathbb{C}^{d_i})$ is positive semidefinite, then $\rho$ is an $n$-partite state. Unless stated otherwise, the state in this paper is non-normalized. We say $\rho$ is an $m\times n$ state for convenience if $\rho\in\cB(\mathbb{C}^m\ox\mathbb{C}^n)$. Without loss of generality, we may assume $m\leq n$. Since the two partial transposes of $\rho$ with respect to the first and second subsystems respectively are equivalent up to the global transposition, the two partial transposes have the same inertia. Hence, it suffices to consider the partial transpose of $\rho$ with respect to the first system, denoted by $\rho^\G$. For any Hermitian operator $X$, denote by $\cR(X),\cK(X)$, and $r(X)$ the range, kernel and rank of $X$, respectively. Specifically, we will investigate $\cR(\rho^\G),\cK(\rho^\G)$, and $r(\rho^\G)$ for an NPT state $\rho$.
We use $X\geq 0$ to represent a positive semidefinite operator $X$. Denote by $\cM_n$ the set of $n\times n$ matrices, and by $I_n(\in\cM_n)$ the identity matrix. In order to study the inertia of Hermitian $X$ conveniently, we shall refer to the positive (zero, negative) eigen-space of $X$ as the subspace spanned by the eigenvectors corresponding to positive (zero, negative) eigenvalues of $X$. 

In the following we introduce some necessary definitions. In Definition \ref{def:ewdew} we define EWs. The principle of EWs to detect entanglement is depicted in Fig. \ref{fig:EW}. In Definition \ref{def:ine} we define the matrix inertia. In Definition \ref{df:equivalence} we introduce $\slocc$ equivalence. This is a useful tool to study inertias.

\begin{definition}\cite{mew2013}
\label{def:ewdew}
Suppose $W\in\cB(\bigox^n_{i=1} \cH_{i})$ is Hermitian. We call $W$ is an $n$-partite entanglement witness (EW) if (1) it is non-positive semidefinite, and (2) $\bra{\psi}W\ket{\psi}\geq 0$ for any product vector $\ket{\psi}=\bigox^n_{i=1}\ket{a_i}$ with $\ket{a_i}\in\cH_{i}$.
\end{definition}

Suppose $W$ is an $n$-partite EW, and $\rho$ is an $n$-partite state. If $\tr(W\rho)<0$, we determine $\rho$ is an entangled state detected by $W$. On the other hand, the entangled states were indicated as high-level witnesses, i.e., the witnesses for EWs \cite{wew2018}. It is conducive to understand EWs from the perspective of geometry. In convex set theory the {\em Separation Theorem} states that there is a hyperplane separating two disjoint convex sets \cite{cvx2004}. Since the set of all separable states is convex, there is a hyperplane separating the set of all separable states and a subset of entangled ones. Here the hyperplane plays the role of EW. We illustrate how a bipartite EW detects entanglement in Fig. \ref{fig:EW}. 
It is known that a state is entangled if and only if it can be detected by some EW \cite{Terhal1999Bell}. Therefore, the detection of entanglement can be transformed to contructing proper EWs using the positive but not completely positive maps \cite{charew2001}. The transpose map is a typical positive but not completely positive map. This explains why the partial transpose of an NPT state is an EW.


\begin{figure}[ht]
\centering
\includegraphics[width=3in]{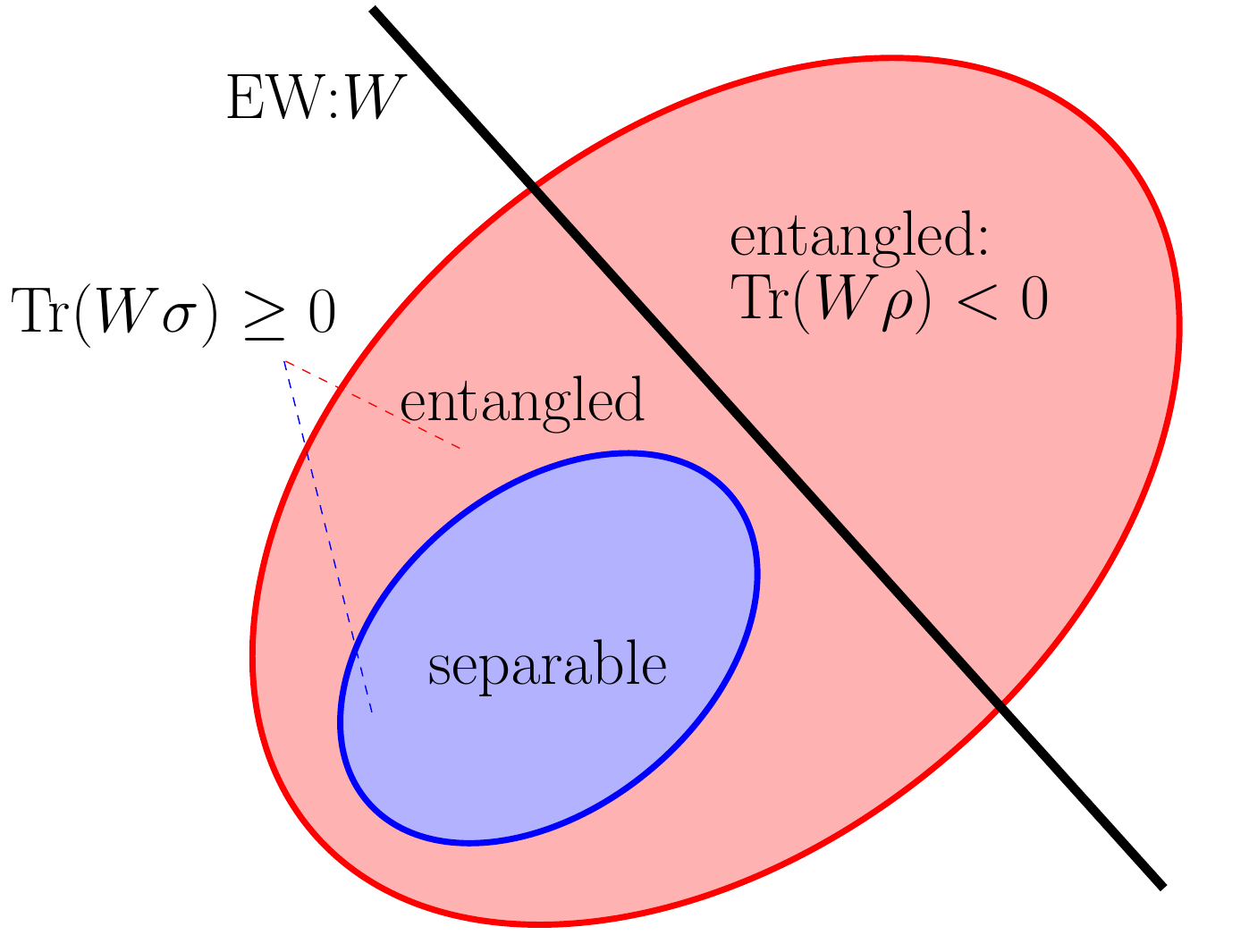}
\caption{The blue oval represents the set of separable states. The red part represents the set of entangled states. The black line represents an EW $W$. It separates the set of all states into upper and lower two parts. For any state $\rho$ in the part above the black line, we conclude that $\tr(W\rho)<0$. Thus this is an entangled state detected by $W$.  For any state $\sigma$ in the part below the black line, we conclude that $\tr(W\rho)\geq 0$. Thus the separability of $\sigma$ cannot be detected by $W$.}
\label{fig:EW}
\end{figure}

In the following we formulate the definition of inertia.

\begin{definition}
\label{def:ine}
Let $A\in\cM_n$ be Hermitian. The inertia of $A$, denoted by $\ine(A)$, is defined as the following sequence
\beq
\label{eq:define}
\ine(A):=(\n_{-},\n_0,\n_{+}),
\eeq
where $\n_{-},\n_0$ and $\n_{+}$ are respectively the numbers of negative, zero and positive eigenvalues of $A$.
\end{definition}

Inertia is an important concept in matrix theory. There is an essential proposition for the matrix inertia, namely Sylvester Theorem \cite{bookma}. It states that Hermitian matrices $A,B\in\cM_n$ have the same inertia if and only if there is a non-singular matrix $S$ such that $B=SAS^\dg$.




Next, we introduce SLOCC equivalence which is an important concept in quantum information theory.

\begin{definition}\cite{SLOCC2000}
\label{df:equivalence}
We refer to SLOCC as stochastic local operations and classical communications.
Two $n$-partite pure states $\ket{\a},\ket{\b}$ are SLOCC equivalent if there exists a product invertible operation $Y=Y_1\ox...\ox Y_n$
such that $\ket{\a}=Y\ket{\b}$.

We further extend the above definitions to spaces. Let $V=\lin\{\ket{\a_1},...,\ket{\a_m}\}$ and $W=\lin\{\ket{\b_1},...,\ket{\b_m}\}$ be two $n$-partite subspaces of $m$-dimension. 
$V$ and $W$ are SLOCC equivalent if there exist a product invertible operation $Y$ such that $\ket{\a_i}\propto Y\ket{\b_i}$ for any $i$.
\end{definition}

Sylvester Theorem implies that inertias are invariant under $\slocc$ equivalence. It allows us to study the inertia under SLOCC equivalence.


In the last part of this section we present several useful results related to the inertia of $\rho^\G$ for NPT state $\rho$. 

\begin{lemma}
\label{le:pure+upper}
Suppose $\r$ is an $m\times n$ NPT state. Then

(i) \cite{Neigenv2013} the number of negative eigenvalues of $\r^\G$ is in the interval $[1,(m-1)(n-1)]$;

(ii) \cite{Neigenv2013} if $\rho$ is normalized, i.e., $\tr(\rho)=1$, every negative eigenvalue of $\r^\G$ is not less than $-\frac{1}{2}$;

(iii) \cite{2xnppt} if $m=2$, for each $k\in[1,n-1]$ there exists a state $\rho$ such that the number of negative eigenvalues of $\rho^\G$ is $k$;

(iv) \cite{inverseew2018} if $\rho$ is a pure state with Schmidt rank $r$, then
\begin{eqnarray}
\label{eq:puresrine}
\ine\r^\G
=
\big(
\frac{r^2-r}{2},
mn-r^2,
\frac{r^2+r}{2}
\big).
\end{eqnarray}
\end{lemma}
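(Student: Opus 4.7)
The plan is to verify each of the four clauses separately, drawing on the cited sources, since this lemma really collects prior results. I would handle (iv) first because its direct computation essentially underpins how I would think about (i) and (iii).

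For part (iv), I would just compute in the Schmidt basis. Writing $\ket{\psi}=\sum_{i=1}^{r}\sqrt{\lambda_i}\ket{i}\ket{i}$ with $\lambda_i>0$, the operator $\rho^\G=(\ketbra{\psi}{\psi})^\G$ is block-diagonal with respect to the partition of the standard basis into the singletons $\{\ket{ii}\}_{i=1}^r$, the pairs $\{\ket{ij},\ket{ji}\}$ for $1\le i<j\le r$, and the remaining $mn-r^2$ basis vectors. On $\ket{ii}$ it acts as multiplication by $\lambda_i$ (contributing $r$ positive eigenvalues); on each pair it acts as the off-diagonal swap $\sqrt{\lambda_i\lambda_j}(\ketbra{ij}{ji}+\ketbra{ji}{ij})$, giving eigenvalues $\pm\sqrt{\lambda_i\lambda_j}$, hence $\binom{r}{2}$ positives and $\binom{r}{2}$ negatives; the remaining basis vectors lie in the kernel. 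Summing yields $\ine\rho^\G=\big(\binom{r}{2},mn-r^2,r+\binom{r}{2}\big)$, which is exactly \eqref{eq:puresrine}.

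For part (i), the lower bound is immediate from the definition of NPT. For the upper bound I would reproduce the argument of \cite{Neigenv2013}: one shows that the positive eigen-space of $\rho^\G$ has dimension at least $m+n-1$ by exhibiting that many linearly independent product vectors $\ket{a}\ket{b}$ with $\bra{a}\bra{b}\rho^\G\ket{a}\ket{b}=\bra{\bar a}\bra{b}\rho\ket{\bar a}\ket{b}>0$ from the support of $\rho$. Since the positive and negative eigen-spaces are orthogonal, the negative eigen-space has dimension at most $mn-(m+n-1)=(m-1)(n-1)$. Part (iii) is constructive: for each $k\in[1,n-1]$ I would take a qubit--qudit pure state of Schmidt rank $k+1$, which by (iv) yields $\binom{k+1}{2}-\binom{k}{2}=k$ negative eigenvalues in $\ine\rho^\G$ when $m=2$ (note that for $m=2$, $\binom{r}{2}=r-1$ when $r\in\{1,2\}$ but more generally one uses a rank-$(k+1)$ Schmidt state to realize exactly $k$ negatives via a thin SLOCC/projection argument within the $2\times n$ block).

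Part (ii) I would obtain by normalization. For normalized $\rho$, the partial transpose $\rho^\G$ satisfies $\|\rho^\G\|_\infty\le 1$, so every eigenvalue lies in $[-1,1]$. To sharpen to $-1/2$, I would use the Horodecki-type inequality $\rho^\G+\half I\ge 0$, which follows from applying the partial transpose to $\rho+\frac{1}{2}F\ge 0$, where $F$ is the swap operator on the appropriate $2\times 2$ block, combined with $\tr\rho=1$ and positivity of the reduction map. The main obstacle among these items is the upper bound in (i), which requires exhibiting enough product vectors in $\cR(\rho)$; the other parts are short computations or direct references.
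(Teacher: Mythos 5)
The paper states Lemma \ref{le:pure+upper} without proof---it is a preliminary collecting results cited from \cite{Neigenv2013}, \cite{2xnppt} and \cite{inverseew2018}---so your attempt can only be measured against the standard arguments and against the paper's own closely related Lemma \ref{le:inertia}. Your computation for (iv) is correct and is the standard one. The other three parts each contain a genuine gap. For the upper bound in (i), the intermediate claim that the positive eigen-space of $\rho^\G$ has dimension at least $m+n-1$ is false: the paper's Lemma \ref{le:mxnpt} and Theorem \ref{le:numcN2n-1} exhibit $2\times n$ NPT states with inertia $(1,2n-4,3)$, whose positive eigen-space is $3$-dimensional while $m+n-1=n+1$. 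Moreover, exhibiting $m+n-1$ linearly independent product vectors with positive expectation value does not even bound the number of negative eigenvalues: the min--max theorem requires a subspace on which the quadratic form is nonnegative \emph{everywhere}, and the span of vectors that each individually have positive expectation need not have this property. The correct (and shorter) argument runs in the opposite direction and is exactly the one the paper uses for general EWs in the proof of Lemma \ref{le:inertia}(ii): every product vector satisfies $\bra{a,b}\rho^\G\ket{a,b}=\bra{a^*,b}\rho\ket{a^*,b}\ge 0$, so the strictly negative eigen-space contains no product vector, and by Lemma \ref{le:dimprod} any subspace of dimension greater than $(m-1)(n-1)$ must contain one.

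Part (iii) as written cannot work: a pure state in $\mathbb{C}^2\ox\mathbb{C}^n$ has Schmidt rank at most $2$, so a ``qubit--qudit pure state of Schmidt rank $k+1$'' does not exist for $k\ge 2$, and in any case (iv) assigns $\binom{r}{2}$ negative eigenvalues to Schmidt rank $r$, not $\binom{k+1}{2}-\binom{k}{2}$. The witness must be a mixed state, e.g.\ $\rho=\sum_{j=1}^{k}(\ket{0,j-1}+\ket{1,j})(\bra{0,j-1}+\bra{1,j})$, which realizes exactly $k$ negative eigenvalues; this is the mechanism behind the paper's example $\rho_4$ in the proof of Corollary \ref{th:2x3ine} and the construction of \cite{2xnppt}. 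For (ii), the inequality $\rho+\frac{1}{2}F\ge 0$ you invoke is false in general (take $\rho=\proj{00}$ and the antisymmetric vector $\ket{w}=\tfrac{1}{\sqrt 2}(\ket{01}-\ket{10})$, for which $\bra{w}(\rho+\frac12 F)\ket{w}=-\frac12$), the swap is not even defined when $m\ne n$, and $(\frac12 F)^\G=\frac{d}{2}\proj{\Phi^+}\ne\frac12 I$, so the partial-transpose step does not produce $\rho^\G+\frac12 I$. The standard one-line proof leans on your own part (iv): for any unit vector $\ket{v}$ with Schmidt coefficients $\sqrt{\mu_i}$, the most negative eigenvalue of $(\proj{v})^\G$ is $-\max_{i\ne j}\sqrt{\mu_i\mu_j}\ge -\frac12$ by AM--GM, hence $\bra{v}\rho^\G\ket{v}=\tr\big(\rho\,(\proj{v})^\G\big)\ge -\frac12\tr\rho=-\frac12$.
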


Based on the above preliminary knowledge we are ready to study the inertia of EW.

\section{Restrictions on the inertia of entanglement witness}
\label{sec:restrict}

In this section we propose restrictions on the inertia of EW. Specifically, we derive lower and upper bounds on the number of negative (positive) eigenvalues of an EW in Lemma \ref{le:inertia}. In virtue of these restrictions we completely determine the inertia of two-qubit EW in Theorem \ref{cr:twoqubit}. We would like to emphasize that these restrictions will be used frequently in Sec. \ref{sec:inebiew} to further exhaust some inertia sets. Finally we demonstrate the relation between bipartite EWs and bipartite NPT states in Lemma \ref{le:rew>=0}.



First we present the lower and upper bounds on the number of negative (positive) eigenvalues of an EW.

\begin{lemma}
\label{le:inertia}
Suppose $W$ is an EW on $\bbC^m\otimes\bbC^n$.

(i) Let $\cE$ be the non-positive eigen-space of $W$, i.e., the sum of negative and zero eigen-spaces of $W$. Then the product vectors in $\cE$ all belong to the zero eigen-space of $W$. In particular, every vector in the negative eigen-space of $W$ is a pure entangled state.

(ii) The number of negative eigenvalues of $W$ is in $[1,(m-1)(n-1)]$. The decomposable EW containing exactly $(m-1)(n-1)$ negative eigenvalues exists. 

(iii) The number of positive eigenvalues of $W$ is in $[2,mn-1]$. 
\end{lemma}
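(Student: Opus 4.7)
\textbf{Proof plan for Lemma~\ref{le:inertia}.}

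For part (i), the plan is to split an arbitrary product vector $\ket{\psi}\in\cE$ into its components in the negative and zero eigen-spaces of $W$, written $\ket{\psi}=\ket{\psi_{-}}+\ket{\psi_{0}}$. Orthogonality of eigen-spaces gives $\bra{\psi}W\ket{\psi}=\bra{\psi_{-}}W\ket{\psi_{-}}\le 0$, while the EW axiom forces the same quantity to be $\ge 0$. Hence $\bra{\psi_{-}}W\ket{\psi_{-}}=0$, and negative-definiteness of $W$ on its negative eigen-space yields $\ket{\psi_{-}}=0$, so $\ket{\psi}$ lies in the zero eigen-space. The second sentence of (i) is then the contrapositive: every vector of the negative eigen-space is non-product, hence entangled.

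For part (ii), I first observe that $n_{-}\ge 1$ is immediate, since $W$ is non-positive semidefinite by the definition of EW. For the upper bound, part (i) says the negative eigen-space of $W$ is a completely entangled subspace of $\bbC^{m}\otimes\bbC^{n}$, and Parthasarathy's theorem caps its dimension at $(m-1)(n-1)$. To realize this bound with a decomposable EW, I aim to find an NPT state $\rho$ whose partial transpose has $(m-1)(n-1)$ negative eigenvalues, since $\rho^{\G}$ is then automatically a decomposable EW; Lemma~\ref{le:pure+upper}(iii) supplies this for $m=2$, and for general $m,n$ I would invoke the known constructions of NPT states saturating the bound in Lemma~\ref{le:pure+upper}(i).

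For part (iii), the upper bound $n_{+}\le mn-1$ is a restatement of $n_{-}\ge 1$. The delicate claim is $n_{+}\ge 2$. I would argue by contradiction: suppose $W$ has a unique positive eigenvalue $\lambda_{1}>0$ with eigenvector $\ket{v_{1}}$, and decompose $W=\lambda_{1}\ketbra{v_{1}}{v_{1}}-W_{-}$, where $W_{-}\ge 0$ has range contained in the negative eigen-space of $W$. For any product vector $\ket{\psi}\perp\ket{v_{1}}$, the EW inequality reduces to $\bra{\psi}W_{-}\ket{\psi}\le 0$, which together with $W_{-}\ge 0$ forces $W_{-}\ket{\psi}=0$. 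The crux is then to show that the product vectors in $\ket{v_{1}}^{\perp}$ linearly span $\ket{v_{1}}^{\perp}$. I plan to fix a Schmidt decomposition $\ket{v_{1}}=\sum_{i=1}^{r}\sigma_{i}\ket{ii}$ after a local-unitary change of basis and exhibit $mn-1$ independent product vectors orthogonal to $\ket{v_{1}}$: the $mn-r$ basis vectors $\ket{ij}$ with $(i,j)\ne(k,k)$ for $k\le r$, together with $r-1$ differences $\ket{ii}-(\sigma_{i}/\sigma_{j})\ket{jj}$ obtained from product vectors of the form $(\ket{i}+\ket{j})\otimes(\ket{i}-(\sigma_{i}/\sigma_{j})\ket{j})$ modulo the previous span. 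This will force $W_{-}=0$, contradicting $n_{-}\ge 1$.

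The main obstacle is the existence half of (ii): explicitly constructing, for arbitrary $m$ and $n$, a decomposable EW whose negative-eigenvalue count saturates the Parthasarathy bound. The remainder of the argument is orthogonal-decomposition bookkeeping together with a short local-basis spanning computation for the product vectors in a hyperplane, both routine once the right framing is in place.
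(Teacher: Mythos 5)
Your proposal is correct, and parts (i) and (ii) coincide with the paper's argument: (i) is the same orthogonal-decomposition observation (the paper merely says it ``follows from the definition of EW''), and (ii) is the paper's argument verbatim once you note that Parthasarathy's bound on completely entangled subspaces is exactly the paper's Lemma~\ref{le:dimprod}; for the saturating decomposable EW the paper likewise just cites the literature (\cite{Johnston2013Non}), so your plan to invoke known constructions is the same move --- though be aware that Lemma~\ref{le:pure+upper}(i) as stated only gives the interval and does not itself assert saturation, so the citation has to carry that weight.

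For part (iii) you take a genuinely, if mildly, different route. Both proofs hinge on the same key fact --- the hyperplane $\ket{v_1}^{\perp}$ orthogonal to the unique positive eigenvector is spanned by product vectors --- but the paper establishes this as a standalone lemma (Lemma~\ref{le:prodspan}) by importing a tensor-rank result for the 3-tensor $\sum_j\ket{\a_j}\ket{j}$, whereas you construct the $mn-1$ independent product vectors explicitly from the Schmidt basis of $\ket{v_1}$ (the $mn-r$ off-diagonal basis vectors plus the $r-1$ vectors $(\ket{i}+\ket{j})\ox(\ket{i}-(\s_i/\s_j)\ket{j})$, which I have checked are orthogonal to $\ket{v_1}$ and independent modulo the rest). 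The endgames also differ: the paper exhibits a single product vector in the hyperplane non-orthogonal to a negative eigenvector and computes $\bra{a,b}W\ket{a,b}<0$ directly, while you use $W_-\ge 0$ and $\bra{\psi}W_-\ket{\psi}=0$ to conclude $W_-\ket{\psi}=0$ for every product vector in the hyperplane and then kill $W_-$ entirely by the spanning property. Your version is more elementary and self-contained (it avoids the external tensor-rank citation and the small extra step of locating a product vector non-orthogonal to $\ket{a_1}$); the paper's version isolates a reusable general lemma about arbitrary $(mn-1)$-dimensional subspaces. Both are sound.
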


We give the proof of Lemma \ref{le:inertia} in Appendix \ref{sec:proofsec2}. It is efficient to exclude several sequences to be the inertia of EW by using the restrictions in Lemma \ref{le:inertia}. 


Next, we use Lemma \ref{le:inertia} to determine the inertia of two-qubit EW. In Theorem \ref{cr:twoqubit} we show that every two-qubit EW has inertia $(1,0,3)$. This result generalizes the known conclusion that $\ine(\rho^\G)=(1,0,3)$ for any two-qubit entangled state $\rho$ \cite{twoqubitine2008}.
For this purpose we need to introduce block-positive operators \cite{pmew2009}.
Suppose $M\in\cB(\mathbb{C}^m\ox\mathbb{C}^n)$ is Hermitian. We call $M$ is block-positive, if
\beq
\label{eq:defblockpos}
M:= (I_m\ox\Phi) X,
\eeq
for some positive semidefinite operator $X\in\cB(\mathbb{C}^m\ox\mathbb{C}^m)$, and some positive map $\Phi:\cB(\mathbb{C}^m)\to\cB(\mathbb{C}^n)$.
It is known that $W$ is an EW if and only if it is block-positive but non-positive semidefinite \cite{pmew2009}. 
In \cite{inverseew2018} there was a useful result on the eigenvalues of block-positive operators in $\cB(\bbC^2\otimes \bbC^2)$. It states that
there exists a block-positive matrix $W$ on $\bbC^2\otimes \bbC^2$ with eigenvalues $\mu_1\geq \mu_2\geq \mu_3\geq \mu_4$  if and only if the following three inequalities hold:
\begin{eqnarray}\label{eig:iff}
\begin{aligned}
\mu_3&\geq 0,\\\mu_4&\geq -\mu_2,\\\mu_4&\geq-\sqrt{\mu_1 \mu_3}.
\end{aligned}
\end{eqnarray}
Combining Lemma \ref{le:inertia} and the above result we can show Theorem \ref{cr:twoqubit} as follows.
\begin{theorem}
\label{cr:twoqubit}
Every two-qubit EW has inertia $(1,0,3)$.	
\end{theorem}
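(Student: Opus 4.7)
The plan is to combine the range restriction from Lemma \ref{le:inertia} with the spectral characterization of block-positive operators on $\bbC^2\ox\bbC^2$ recalled just before the theorem statement. First I would observe that for $m=n=2$ the interval $[1,(m-1)(n-1)]$ in Lemma \ref{le:inertia}(ii) collapses to the single value $1$, so any two-qubit EW $W$ has exactly one negative eigenvalue. Writing the eigenvalues as $\mu_1\geq\mu_2\geq\mu_3\geq\mu_4$, this gives $\mu_4<0$ and $\mu_1,\mu_2>0$, so only the sign of $\mu_3$ is in question. Equivalently, it remains to exclude the inertia $(1,1,2)$, i.e.\ the possibility $\mu_3=0$.

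Second, I would invoke the characterization of block-positive matrices on $\bbC^2\ox\bbC^2$ via the inequalities (\ref{eig:iff}). Since $W$ is an EW, it is in particular block-positive, so the three inequalities apply. The crucial one for us is the third, $\mu_4\geq -\sqrt{\mu_1\mu_3}$. If $\mu_3$ were zero, this would force $\mu_4\geq 0$, contradicting the fact, established in the first step, that $\mu_4<0$. Hence $\mu_3>0$, which yields $\ine(W)=(1,0,3)$.

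I do not anticipate any serious obstacle here: the entire content of the argument is to recognize that the $m=n=2$ case pins down $\nu_-=1$ automatically via Lemma \ref{le:inertia}(ii), and then that the third of the block-positivity inequalities (\ref{eig:iff}) is exactly the tool that forbids a zero eigenvalue sandwiched between the positive spectrum and the single negative eigenvalue. The only sanity check I would want to include is that the bounds in Lemma \ref{le:inertia}(iii), which give $\nu_+\in[2,3]$, are consistent with and confirmed by this conclusion, so no case analysis beyond distinguishing $\mu_3=0$ from $\mu_3>0$ is needed.
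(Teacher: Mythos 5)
Your proposal is correct and follows essentially the same route as the paper: Lemma \ref{le:inertia}(ii) pins down exactly one negative eigenvalue in the two-qubit case, and the third inequality of \eqref{eig:iff} rules out $\mu_3=0$, excluding the inertia $(1,1,2)$. No gaps.
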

\begin{proof}
	By Lemma \ref{le:inertia} (ii), any two-qubit EW has exact one negative eigenvalue. Thus, there are two distributions of inertia $(1,0,3)$ and $(1,1,2)$. Here we prove that sequence $(1,1,2)$ is not the inertia. Assume $W$ is a two-qubit EW with inertia $(1,1,2)$. Let $\mu_1\geq \mu_2\geq \mu_3\geq \mu_4$ be the four eigenvalues of $W$. It follows that $\mu_4$ is negative and $\mu_3=0$. It contradicts with the last inequality in \eqref {eig:iff}. So the assumption is not valid. Therefore, the inertia $(1,1,2)$ does not exist. This completes the proof.
\end{proof}

Theorem \ref{cr:twoqubit} motivates us to determine the inertias of EWs acting on higher-dimensional Hilbert spaces.
As we know, the partial transpose of NPT state is an EW. Obviously, there are EWs which are not the partial transpose of NPT state. Here we construct an example to show that there exists an EW $W$ whose partial transpose $W^\G$ is still an EW. Thus, $W$ cannot be the partial transpose of NPT state. Let $\a=(\ket{00}+\ket{11})(\bra{00}+\bra{11})$ and $\b=\proj{00}+a\proj{11}+b(\ket{01}+\ket{10})(\bra{01}+\bra{10})+c(\ket{01}-\ket{10})(\bra{01}-\bra{10})$ with
\beq
\bal
&a,b>0, \quad c\in(0,1/2),\\
&2(1 + a) - (1 + b - c)^2 < 0.	
\eal
\eeq
One can verify that $W=\a^\G+\b$ is an EW, and $W^\G$ is still an EW. Inspired by this example, we demonstrate the relation between bipartite EWs and bipartite NPT states in Lemma \ref{le:rew>=0}.

\begin{lemma}
\label{le:rew>=0}
Suppose $W\in\cB(\mathbb{C}^m\otimes\mathbb{C}^n)$ is a Hermitian and non-positive semidefinite operator. Then $W$ is an EW if and only if $W^\G$ is an EW or an NPT state. 
\end{lemma}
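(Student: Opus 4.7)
The plan is to reduce everything to the single identity
\[
\bra{a,b} W^{\G} \ket{a,b} \;=\; \bra{a^{*},b}\, W\, \ket{a^{*},b}
\]
valid for every pair of local vectors $\ket{a}\in\bbC^{m},\,\ket{b}\in\bbC^{n}$, where $\ket{a^{*}}$ denotes the componentwise complex conjugate of $\ket{a}$ in the computational basis. This is the standard consequence of $\tr(A^{\G}B)=\tr(AB^{\G})$ applied to the rank-one projector $\ketbra{a,b}{a,b}$, whose partial transpose is $\ketbra{a^{*},b}{a^{*},b}$. The key point this gives me is that the condition ``non-negative on every product vector'' is preserved under partial transposition, because conjugation $\ket{a}\mapsto\ket{a^{*}}$ permutes the set of local vectors.

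With this identity in hand, I would dispatch the forward direction as follows. Assume $W$ is an EW. Then $W^{\G}$ is Hermitian (partial transpose preserves Hermiticity), and the identity together with Definition \ref{def:ewdew}(2) applied to $W$ shows $\bra{\psi}W^{\G}\ket{\psi}\ge 0$ for every product $\ket{\psi}$. Now split on whether $W^{\G}$ is positive semidefinite. If $W^{\G}\ge 0$, then $W^{\G}$ is an (unnormalized) state, and its partial transpose $(W^{\G})^{\G}=W$ is by hypothesis non-positive semidefinite, so $W^{\G}$ is an NPT state. If instead $W^{\G}$ is non-positive semidefinite, then $W^{\G}$ meets both clauses of Definition \ref{def:ewdew} and is therefore an EW.

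For the converse, assume $W$ is Hermitian and non-positive semidefinite, and suppose $W^{\G}$ is either an EW or an NPT state; in each case I want $\bra{\psi}W\ket{\psi}\ge 0$ on every product $\ket{\psi}=\ket{a}\ox\ket{b}$, after which Definition \ref{def:ewdew} finishes the proof. If $W^{\G}$ is an EW, then the identity (with the roles of $W$ and $W^{\G}$ exchanged, since $(W^{\G})^{\G}=W$) transports the product-non-negativity of $W^{\G}$ back to $W$. If $W^{\G}$ is an NPT state, then $W^{\G}\ge 0$ gives $\bra{a,b}W^{\G}\ket{a,b}\ge 0$ for \emph{all} vectors, which via the same identity yields $\bra{a^{*},b}W\ket{a^{*},b}\ge 0$; since $\ket{a}\mapsto\ket{a^{*}}$ is a bijection on $\bbC^{m}$, this is the required condition for all product vectors.

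There is no real obstacle here once the identity is noticed; the only thing to be careful about is the case split in the forward direction, which is precisely why the statement must read ``EW \emph{or} NPT state'' rather than just ``EW'' — when $W^{\G}\ge 0$ the operator $W^{\G}$ fails to be an EW simply because it lacks a negative eigenvalue, and the NPT-state alternative exactly captures that situation.
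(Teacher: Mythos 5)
Your proposal is correct and follows essentially the same route as the paper: both arguments rest on the identity $\bra{a,b}W^{\G}\ket{a,b}=\bra{a^{*},b}W\ket{a^{*},b}$ together with the case split on whether $W^{\G}$ is positive semidefinite. The paper merely packages the two directions as the single statement that the set of EWs and NPT states is invariant under partial transposition, which is cosmetically different but mathematically the same argument.
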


\begin{proof}
Let $\cT$ be the set of bipartite EWs and NPT states. We first show $\cT$ is invariant under partial transpose. Suppose $W\in\cT$. If $W$ is an EW then $W^\G$ is still Hermitian. Further if $W^\G$ is positive semidefinite then $W^\G$ is indeed an NPT state. Thus we conclude that $W^\G\in\cT$. If $W^\G$ is non-positive semidefinite one can show $W^\G$ is still an EW as follows. For any product vector $\ket{a_1,a_2}$,
$$
\bra{a_1,a_2}W^\G\ket{a_1,a_2}=\bra{a_1^*,a_2}W\ket{a_1^*,a_2}\geq 0.
$$
Thus we conclude that $W^\G\in\cT$. For the same reason we conclude that $W^\G\in\cT$ if $W$ is an NPT state. Therefore, $\cT$ is invariant under partial transpose. This result implies that $W$ is an EW if and only if $W^\G$ is an EW or an NPT state. This completes the proof.
\end{proof}

In experiments, an EW is usually decomposed into a sum of locally measurable observables. Then these locally measurable observables are measured individually on the constituent subsystems. Finally one obtains {\em witness expectation value} $\tr(W\rho)$ by summing the expectation values of the locally measurable observables. In \cite{dectent2002} O. G\"uhne \etal ~ introduced a general method for the experimental detection of entanglement by performing only few local measurements, assuming some prior knowledge of the density matrix. Their method is based on the minimal decomposition of witness operators into a pseudomixture of local operators. Any bipartite EW $W$ can be decomposed into a sum of projectors onto product vectors, i.e.,
\beq
\label{eq:EWdec}
W=\sum_j c_j \proj{a_j,b_j}=\sum_j c_j \proj{a_j}\ox\proj{b_j},
\eeq
where the coefficients $c_j$ are real and satisfy $\sum_j c_j=1$. There is at least one coefficient has to be negative for $W$ is an EW. This characterizes a so-called {\em{pseudomixture}}. 
Moreover, there are many different decompositions like in Eq. \eqref{eq:EWdec} for any EW. In \cite{dectent2002} authors were interested in the optimal decompositions. That is the pseudomixture with minimal number of non-zero coefficients $c_j$. Suppose $W$ is an EW which is not the partial transpose of NPT state. It follows from Lemma \ref{le:rew>=0} that $W^\G$ is also an EW. One can verify that if Eq. \eqref{eq:EWdec} is a decomposition of $W$, then 
\beq
\label{eq:EWdec-2}
W^\G=\sum_{j} c_j \proj{a_j^*,b_j}
\eeq
is a decomposition of $W^\G$. It implies that the minimal number of non-zero coefficients for $W$ is the same as that for $W^\G$.  

In the following section we investigate the inertia of EW starting from the EWs constructed by the partial transpose of NPT state. 
Due to the relation given by Lemma \ref{le:rew>=0} our results are helpful to understand the inertia of general EW.

\section{Inertias of the partial transposes of NPT states}
\label{sec:inebiew}

The partial transposition on an NPT state is an easy way to construct EWs and can be used to construct optimal EWs for decomposable EWs \cite{optew2000}. In this section we focus on the bipartite EWs constucted by the partial transpose of NPT state, and determine inertias of such EWs. 
In Lemma \ref{le:rho+xid} we reveal the essential relevance between inertias, and propose a method to generate more inertias from a given inertia. We apply Lemma \ref{le:rho+xid} to NPT states in qubit-qudit systems. The qubit-qudit states are widely investigated and have many interesting propositions. Suppose $\rho$ is a $2\times n$ NPT state. In Theorem \ref{le:n-1negative} we show a sufficient and necessary condition for a sequence $(a,b,c)$ to be an inertia of $\rho^\G$. Based on the above results we exhaust all inertias for $\rho^\G$ in Theorem \ref{le:numcN2n-1}. Finally, in Lemma \ref{le:mxnpt} we extend our study to $m\times n$ NPT states.


In the first part of this section we focus on the partial transposes of all states, though in this paper we are more interested in NPT states. In order to describe our results conveniently, we first denote three inertia sets: 
\beq
\label{eq:definesetmn}
\bal
\cN_{m,n}&:=\{\ine(\rho^\G)|\text{$\rho$ is an $m\times n$ NPT state.}\},\\
\cP_{m,n}&:=\{\ine(\rho^\G)|\text{$\rho$ is an $m\times n$ PPTE state.}\},\\
\cS_{m,n}&:=\{\ine(\rho^\G)|\text{$\rho$ is an $m\times n$ separable state.}\}.\\
\eal
\eeq

In the following we propose an effective method to derive more inertias from a given inertia. It also reveals the relevance between inertias regarding the existence of product vectors in the kernel of $\rho^\G$.

\begin{lemma}
\label{le:rho+xid}
(i) Suppose $\r$ is an $m\times n$ NPT (PPTE, separable) state and $\r^\G$ has inertia $(a,b,c)$. Then there is a small enough $x>0$ and the NPT (PPTE, separable) state 
$$\s:=\r+xI_{mn},$$
such that 
$$\ine(\s^\G)=(a,0,b+c).$$
Note that if $\rho$ is PPT, then $a=0$.

(ii) Suppose $m_1\leq m_2$ and $n_1\leq n_2$. If 
\beq
\label{eq:inertiatree-1}
\bal
(a_1,b_1,c_1)&\in\cN_{m_1,n_1},\\
\text{or} ~ (a_1,b_1,c_1)&\in\cP_{m_1,n_1},\\
\text{or} ~ (a_1,b_1,c_1)&\in\cS_{m_1,n_1},
\eal
\eeq
with $a_1+b_1+c_1=m_1n_1$, then $\forall~ 0\leq l\leq m_2n_2-m_1n_1$,
\beq
\label{eq:inertiatree-2}
\bal
(a_1,m_2n_2-m_1n_1-l,b_1+c_1+l)&\in\cN_{m_2,n_2},\\
\text{or} ~ (a_1,m_2n_2-m_1n_1-l,b_1+c_1+l)&\in\cP_{m_2,n_2},\\
\text{or} ~ (a_1,m_2n_2-m_1n_1-l,b_1+c_1+l)&\in\cS_{m_2,n_2},
\eal
\eeq
respectively.
Note that if $(a_1,b_1,c_1)\in\cP_{m_1,n_1}$ or $(a_1,b_1,c_1)\in\cS_{m_1,n_1}$, then $a_1=0$.
\end{lemma}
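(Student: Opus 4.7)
The plan for part (i) exploits the fact that $I_{mn}^\G = I_{mn}$, so that $\sigma:=\rho+xI_{mn}$ satisfies $\sigma^\G=\rho^\G+xI_{mn}$: the spectrum of $\sigma^\G$ is just the spectrum of $\rho^\G$ shifted uniformly by $x$. I would choose $x>0$ strictly smaller than the absolute value of the negative eigenvalue of $\rho^\G$ closest to zero (or any $x>0$ when $\rho$ is PPT, in which case $a=0$). With this choice, the $a$ negative eigenvalues stay negative, the $b$ zero eigenvalues are promoted to the positive value $x$, and the $c$ positive eigenvalues remain positive, so $\ine(\sigma^\G)=(a,0,b+c)$. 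Preservation of state type is then case-by-case: NPT is inherited since $a\geq 1$ survives; separability is inherited since $xI_{mn}$ is a sum of product projectors and separable states form a convex cone; PPT follows from $\sigma^\G\geq 0$ when $a=0$; and the PPTE case additionally requires that $\sigma$ remain entangled, which is secured by shrinking $x$ further, using that the entangled states form an open set (as the complement of the closed convex set of separable states).

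For part (ii), I would first apply part (i) to $\rho$ to obtain $\rho'$ with $\ine((\rho')^\G)=(a_1,0,b_1+c_1)$, and then embed $\rho'$ into $\bbC^{m_2}\ox\bbC^{n_2}$ with support in the ``top-left'' corner $\bbC^{m_1}\ox\bbC^{n_1}$. The embedded state $\tilde{\rho'}$ satisfies $\ine((\tilde{\rho'})^\G)=(a_1,m_2n_2-m_1n_1,b_1+c_1)$, which is exactly the $l=0$ target. To realize arbitrary $0\leq l\leq m_2n_2-m_1n_1$, I would choose $l$ distinct basis product vectors $\ket{e_1},\ldots,\ket{e_l}$ from the $(m_2n_2-m_1n_1)$-element set $\{\ket{i}\ket{j}:i\geq m_1\text{ or }j\geq n_1\}$. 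Each $\ket{e_k}$ is orthogonal to $\bbC^{m_1}\ox\bbC^{n_1}$, and the rank-one projector $\proj{e_k}$ is diagonal and hence equals its own partial transpose. Setting $\sigma_l:=\tilde{\rho'}+y\sum_{k=1}^l\proj{e_k}$ for small $y>0$ gives $\sigma_l^\G=(\tilde{\rho'})^\G+y\sum_k\proj{e_k}$; since the two summands act on orthogonal subspaces, exactly $l$ of the $m_2n_2-m_1n_1$ zero eigenvalues of $(\tilde{\rho'})^\G$ are promoted to $y$, yielding $\ine(\sigma_l^\G)=(a_1,m_2n_2-m_1n_1-l,b_1+c_1+l)$. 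The state-type preservation is analogous to part (i): NPT from $a_1\geq 1$; separability because each $\proj{e_k}$ is a pure product state; PPT from $\sigma_l^\G\geq 0$.

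The main obstacle in both parts is the PPTE case, where one must argue that the perturbation does not accidentally render the state separable. This is handled by a continuity/openness argument: since the entangled states form an open set within the set of all states (the separable states being closed and convex), sufficiently small additive perturbations preserve entanglement. Thus $x$ and $y$ must be chosen small enough to simultaneously (a) realize the intended shift in inertia, and (b) keep the state inside the open ball of entangled states around the starting state. All other verifications reduce to eigenvalue arithmetic for sums of Hermitian operators on orthogonal supports, together with the elementary observation that $I_{mn}$ and diagonal rank-one projectors are self-partial-transpose.
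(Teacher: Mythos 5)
Your proposal is correct and follows essentially the same route as the paper: shift by $xI_{mn}$ with $x$ below the smallest $|\lambda_i|$ for part (i), then embed into the larger space and add $l$ orthogonal diagonal product projectors for part (ii). The only cosmetic difference is the PPTE case, where you invoke openness of the set of entangled states while the paper fixes an explicit witness $W$ with $\tr(W\rho)<0$ and keeps $\tr(W(\rho+xI))<0$ by continuity; these are equivalent arguments.
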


We show the proof of Lemma \ref{le:rho+xid} in Appendix \ref{sec:proof2}.
Using this idea one can imagine how inertias grow as local dimensions increase. We illustrate this growing process in Fig. \ref{fig:fam}.
The basic idea of Lemma \ref{le:rho+xid} is to add linearly independent product states into the given density matrix. We will apply this method to further characterize the inertia set $\cN_{2,n}$. 

\begin{figure}[ht]
\centering
\includegraphics[width=3in]{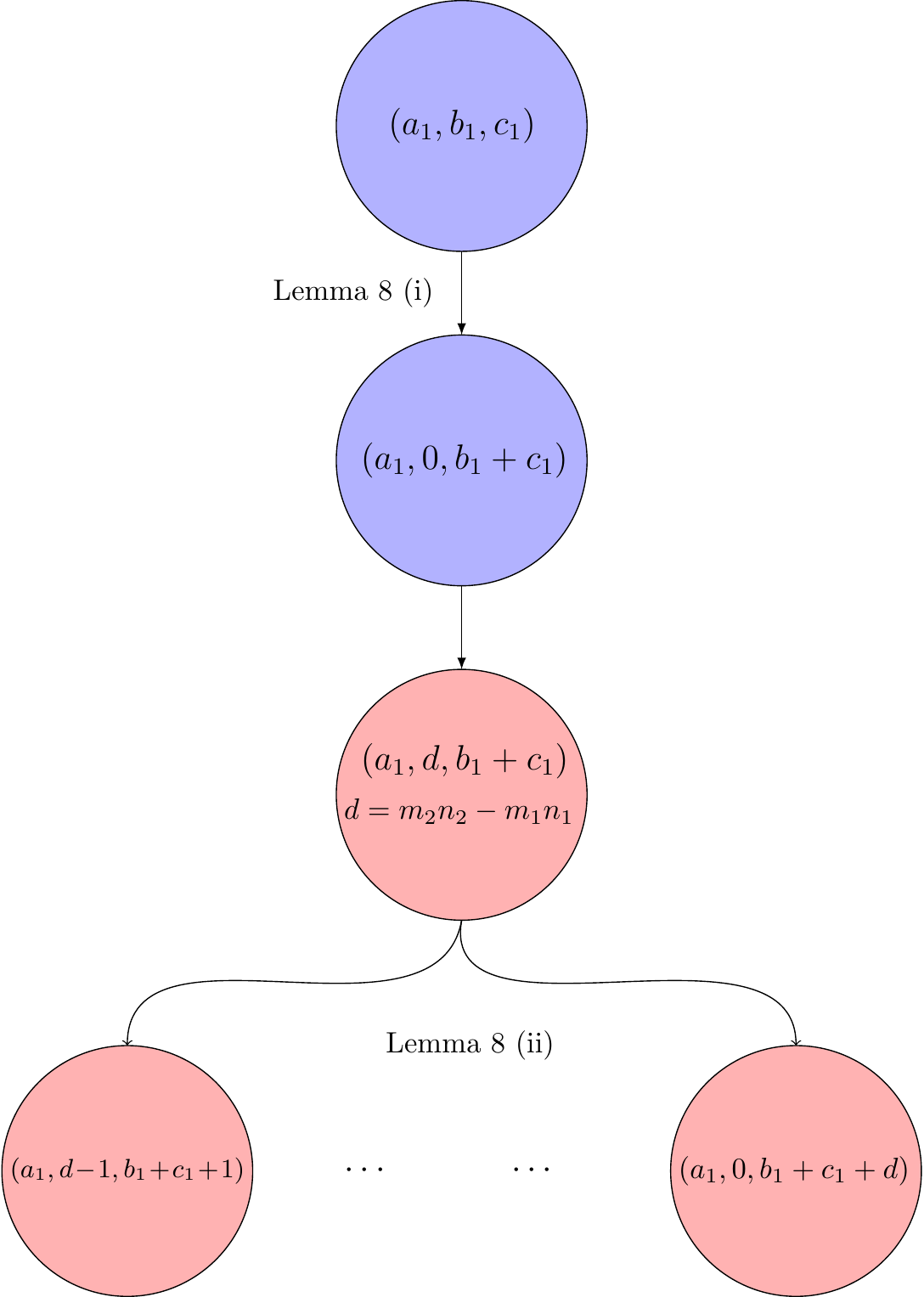}
\caption{The blue circle represents $\cN_{m_1,n_1}$, and the red circle represents $\cN_{m_2,n_2}$, where $m_1\leq m_2$ and $n_1\leq n_2$. Let $d=m_2n_2-m_1n_1$. The sequence in blue or red circle means it belongs to the corresponding inertia set. To conclude $(a_1,d,b_1+c_1)\in\cN_{m_2,n_2}$, it suffices to add proper zero rows and zero columns into the density matrix of $\rho$ which satisfies $\ine(\rho^\G)=(a_1,0,b_1+c_1)$.}
\label{fig:fam}
\end{figure}

In the second part of this section we aim to determine the inertia set $\cN_{2,n}$ completely. There are two main results in this part. One is Theorem \ref{le:n-1negative}, where we propose a sufficient and necessary condition for a sequence in the inertia set $\cN_{2,n}$. The other one is Theorem \ref{le:numcN2n-1}, where we completely determine the inertia set $\cN_{2,n}$ for any $n\geq 2$.

It follows from Lemma \ref{le:inertia} (i) that all product vectors in the non-positive eigen-space of an EW belong to the kernel of this EW. The existence of product vectors in $\cK(\rho^\G)$ is quite essential for studying $\ine(\rho^\G)$. For this reason we investigate how many linearly independent product vectors in $\cK(\r^\G)$ as follows. 
\begin{lemma}
\label{le:mxnkerprod}
Let $\r$ be an $m\times n$ NPT state. Denote by $d$ the dimension of $\cK(\r^\G)$. Suppose $\rho^\G$ has $k$ negative eigenvalues, and $d+k>(m-1)(n-1)$. Let $l=d+k-(m-1)(n-1)$. Then there are at least $l$ linearly independent product vectors in $\cK(\r^\G)$. That is
\beq
\label{eq:kerprodspan-1}
\cK(\r^\G)=\lin\{\ket{a_1,b_1},\cdots,\ket{a_{l},b_{l}},\ket{u_{l+1}},\cdots,\ket{u_d}\}.
\eeq
Specifically, if $k=(m-1)(n-1)$, then $l=d$ which implies $\cK(\r^\G)$ is spanned by product vectors.
\end{lemma}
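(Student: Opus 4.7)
The plan is to argue by contradiction, combining Lemma \ref{le:inertia}(i) with the classical bound that any subspace of $\bbC^m\ox\bbC^n$ containing no nonzero product vector has dimension at most $(m-1)(n-1)$ (this is the geometric fact underlying Lemma \ref{le:pure+upper}(i) and Lemma \ref{le:inertia}(ii)). The idea is that if $\cK(\r^\G)$ has too few product vectors, then a suitable piece of $\cK(\r^\G)$ augmented with the negative eigen-space of $\r^\G$ will form an entangled subspace of dimension larger than $(m-1)(n-1)$, contradicting this bound.

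Concretely, I will suppose for contradiction that all product vectors of $\cK(\r^\G)$ together span only an $l'$-dimensional subspace $U\sue \cK(\r^\G)$ with $l'<l$, and then take $W$ to be the orthogonal complement of $U$ inside $\cK(\r^\G)$, so that $\cK(\r^\G) = U\op W$ and $\dim W = d-l'$. A quick check shows $W$ contains no nonzero product vector: any such vector would lie in $\cK(\r^\G)$, hence in $U$ by definition, and therefore in $U\cap W = \{0\}$. Letting $\cE_{-}$ denote the $k$-dimensional negative eigen-space of $\r^\G$, orthogonality of distinct eigen-spaces of the Hermitian operator $\r^\G$ makes $V := \cE_{-}\op W$ a direct sum of dimension $k+d-l'$.

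The key step will be to verify that $V$ itself has no nonzero product vector. By Lemma \ref{le:inertia}(i) applied to the EW $\r^\G$, every product vector in the non-positive eigen-space $\cE := \cE_{-}\op \cK(\r^\G)$ must actually lie in $\cK(\r^\G)$. So if $V$ contained a product vector $\ket{\phi}=\ket{e}+\ket{w}$ with $\ket{e}\in\cE_{-}$ and $\ket{w}\in W$, then $\ket{\phi}\in\cK(\r^\G)$ would force $\ket{e}=\ket{\phi}-\ket{w}\in\cE_{-}\cap\cK(\r^\G)=\{0\}$, whence $\ket{\phi}=\ket{w}\in W$, contradicting the preceding paragraph. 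Thus $V$ is an entangled subspace of dimension $k+d-l' > (m-1)(n-1)$, which violates the classical bound. This forces $l'\geq l$, yielding the desired $l$ linearly independent product vectors in $\cK(\r^\G)$ and hence the decomposition \eqref{eq:kerprodspan-1}; the specialization $k=(m-1)(n-1)$ then gives $l=d$, so $\cK(\r^\G)$ is entirely spanned by product vectors.

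The main subtlety is conceptual rather than computational: $W$ must be chosen specifically as a complement of the \emph{product-spanned} part $U$ of $\cK(\r^\G)$, not as an arbitrary complement of some finite list of product vectors, so that no product vector can sneak back into $\cE_{-}\op W$. This is precisely where Lemma \ref{le:inertia}(i) does the real work. Pleasantly, no perturbation of $\r$ into an auxiliary NPT state with more negative eigenvalues is required for the contradiction, which avoids the delicate issue of preserving $\r\geq 0$ when $\r$ is not full-rank.
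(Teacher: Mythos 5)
Your argument is correct, but it reaches the conclusion by a different route than the paper. The paper's proof is constructive and iterative: it applies Lemma \ref{le:dimprod} to the $\big((m-1)(n-1)+1\big)$-dimensional subspace spanned by the $k$ negative eigenvectors together with a sliding window of $(m-1)(n-1)+1-k$ kernel basis vectors, uses the inline computation $\bra{a,b}\rho^\G\ket{a,b}\geq 0$ to kill the negative-eigenvector components so that the resulting product vector lands in the kernel, and then shifts the window $l$ times to harvest $l$ linearly independent product vectors. You instead run a single contradiction: taking $U$ to be the span of \emph{all} product vectors in $\cK(\rho^\G)$ and $W$ its orthocomplement inside the kernel, you invoke Lemma \ref{le:inertia}(i) to show that $\cE_{-}\oplus W$ contains no nonzero product vector, so Lemma \ref{le:dimprod} caps its dimension $k+d-\dim U$ at $(m-1)(n-1)$, forcing $\dim U\geq l$. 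Both proofs rest on the same two ingredients (the $(m-1)(n-1)$ bound on product-vector-free subspaces and the EW positivity on product vectors), but your version replaces the paper's bookkeeping about which coefficients $y_j$ are nonzero --- the step needed there to certify linear independence of the successively constructed product vectors --- with one clean dimension count; the paper's version, in exchange, exhibits the product vectors explicitly relative to a chosen kernel basis, which is the form displayed in \eqref{eq:kerprodspan-1}. Your observation that $W$ must complement the full product-spanned part $U$, rather than an arbitrary list of product vectors, is exactly the point that makes the contradiction airtight.
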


We present the proof of Lemma \ref{le:mxnkerprod} in Appendix \ref{sec:proof2}.
For an $m\times n$ NPT state $\rho$, it follows from Lemma \ref{le:inertia} (ii) that the number of negative eigenvalues of $\rho^\G$ is not greater than $(m-1)(n-1)$, i.e., $k\leq (m-1)(n-1)$ in Lemma \ref{le:mxnkerprod}. Therefore, if $d+k>(m-1)(n-1)$, then $d>0$, and thus $\cK(\rho^\G)$ is not a zero space.

From Lemma \ref{le:mxnkerprod} we can determine whether there exist product vectors in $\cK(\rho^\G)$ based on the dimension of the non-positive eigen-space of $\rho^\G$. The existence of product vectors in the kernel is useful to simplify the problem of determining inertias.
In the following we apply the above results to $2\times n$ NPT states. In Theorem \ref{le:n-1negative} we propose a sufficient and necessary condition for a sequence $(a,b,c)$ in the inertia set $\cN_{2,n}$.
\begin{theorem}
\label{le:n-1negative}
Suppose $a$ is a positive integer and $b,c$ are non-negative integers with $a+b+c=2n$ and $a+b>n-1$. 

(i) Then $(a,b,c)\not\in\cN_{2,n}$ if and only if the following two conditions are satisfied.

(i.a) $(a,b-2,c)\not\in\cN_{2,n-1}$,

(i.b) $(a,b-1,c-1)\not\in\cN_{2,n-1}$. 

(ii) Suppose $\ine(\rho^\G)=(a,b,c)$. If $(a,b-2,c)\in\cN_{2,n-1}$, and $(a,b-1,c-1)\not\in\cN_{2,n-1}$, then $\rho$ can be regarded as a $2\times (n-1)$ NPT state up to a local projector.
\end{theorem}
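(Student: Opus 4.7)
The plan is to use Lemma \ref{le:mxnkerprod}, which (since $a+b>n-1$) guarantees at least one product vector $\ket{\a,\b}\in\cK(\r^\G)$. The identity $\bra{\a,\b}\r^\G\ket{\a,\b}=\bra{\a^*,\b}\r\ket{\a^*,\b}$ together with $\r\geq 0$ also gives $\ket{\a^*,\b}\in\cK(\r)$. By an SLOCC transformation $A\ox B$ sending $\ket{\a^*}\mapsto\ket{0}$ and $\ket{\b}\mapsto\ket{0}$---which preserves inertias by Sylvester's theorem---we may assume $\ket{00}\in\cK(\r)\cap\cK(\r^\G)$. Writing $\r=\bma C & D\\ D^\dg & E\ema$ and $\r^\G=\bma C & D^\dg\\ D & E\ema$ in $2\times 2$ blocks on system~1, the twin-kernel condition at $\ket{00}$ forces the first row and column of $C$ and $D$ to vanish, leaving as free data only $E_{00}=\r_{10,10}$ and the vector $w:=\sum_{l\geq 1}E_{l0}\ket{1,l}$, which lies entirely in the $\ket{1}$-subspace of system~1.

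The argument bifurcates on $E_{00}$. In Case (a), $E_{00}=0$: since $E\geq 0$, the entire first row and column of $E$ vanish, so $\r\ket{10}=0$ and $\r$ is supported on $\bbC^2\ox\ket{0}^\perp$. Thus $\r$ equals its restriction $\r_W$ viewed as a $2\times(n-1)$ state via the local projector $I\ox(I-\proj{0})$, and removing the two zero directions $\ket{00},\ket{10}$ from $\r^\G$ gives $\ine(\r_W^\G)=(a,b-2,c)$, so $(a,b-2,c)\in\cN_{2,n-1}$. In Case (b), $E_{00}>0$: restricting $\r^\G$ to $\ket{00}^\perp=\bbC\ket{10}\op W$ with $W=\bbC^2\ox\ket{0}^\perp$ yields $R=\bma E_{00} & w^\dg\\ w & \r_W^\G\ema$ of inertia $(a,b-1,c)$. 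Since $w\in\ket{1}\ox\bbC^{n-1}$, we have $(ww^\dg)^\G=ww^\dg$, so $\r_W^\G-ww^\dg/E_{00}=(\r_W-ww^\dg/E_{00})^\G=:\s^\G$. The Schur complement of the PSD block matrix $\bma E_{00} & w^\dg\\ w & \r_W\ema\geq 0$ (which is the restriction of $\r$ to $\ket{00}^\perp$) with pivot $E_{00}>0$ shows $\s\geq 0$, and Haynsworth inertia additivity applied to $R$ gives $\ine(\s^\G)=(a,b-1,c-1)$. Thus $(a,b-1,c-1)\in\cN_{2,n-1}$.

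Cases (a) and (b) together prove (ii) and, contrapositively, the non-trivial direction of the iff in (i): $(a,b,c)\in\cN_{2,n}$ forces $(a,b-2,c)\in\cN_{2,n-1}$ or $(a,b-1,c-1)\in\cN_{2,n-1}$; and if in addition $(a,b-1,c-1)\notin\cN_{2,n-1}$, Case (b) is excluded, so Case (a) holds and $\r$ reduces as claimed in (ii). The remaining direction of (i) is constructive via Lemma \ref{le:rho+xid}: a $2\times(n-1)$ NPT state with inertia $(a,b-2,c)$ zero-padded on system~2 gives a $2\times n$ NPT state of inertia $(a,b,c)$, while a $2\times(n-1)$ NPT state with inertia $(a,b-1,c-1)$ zero-padded and then perturbed by $x\proj{0,n-1}$ (small $x>0$) promotes one of the extra zero eigenvalues of $\r^\G$ into a positive one, again yielding inertia $(a,b,c)$.

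The main obstacle is the Schur-complement step of Case (b): one must verify that $\s=\r_W-ww^\dg/E_{00}$ is genuinely PSD and that its partial transpose coincides with the Schur complement inside $R$. Both facts hinge on the observation that $w$ is confined to a single system-1 block, which forces $(ww^\dg)^\G=ww^\dg$ and lets the partial transposition commute with the Schur-complement reduction.
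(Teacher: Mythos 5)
Your proposal is correct and follows essentially the same route as the paper: a product vector in $\cK(\r^\G)\cap\cK(\r)$ via Lemma \ref{le:mxnkerprod}, SLOCC normalization to $\ket{00}$, bifurcation on the diagonal entry $\bra{10}\r\ket{10}$, reduction to a $2\times(n-1)$ state in each branch, and zero-padding plus adding a product projector for the converse. Your Case (b) Schur-complement/Haynsworth step is precisely the paper's congruence by $I_2\ox V$ followed by Sylvester's law, so the two arguments coincide up to phrasing.
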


We show the proof of Theorem \ref{le:n-1negative} in Appendix \ref{sec:proof2}. Theorem \ref{le:n-1negative} demonstrates the relation between the two inertia sets $\cN_{2,n-1}$ and $\cN_{2,n}$ for any $n>2$.
Applying this result we obtain the following corollary.

\begin{corollary}
\label{th:2x3ine}

(i) There exists a $2\times n$ NPT state $\rho$ whose partial transpose contains exact $(n-1)$ negative eigenvalues. Further, if $\rho^\G$ has $(n-1)$ negative eigenvalues, then $\ine (\rho^\G)=(n-1,0,n+1)$.

(ii) We determine the inertia set $\cN_{2,3}$ as follows.
\beq
\label{eq:2x3ptineexist}
\cN_{2,3}=\{(1,2,3),(1,1,4),(1,0,5),(2,0,4)\}.
\eeq

(iii) Suppose $\rho_{AB}$ is a $2\times n$ NPT state. For any $j\in[1,n-1]$, if $\ine(\rho_{AB}^\G)=\big(j,2(n-1-j),j+2\big)$, then $r(\rho_B)=j+1$, i.e., $\rho_{AB}$ is indeed a $2\times(j+1)$ NPT state up to a local projector.
\end{corollary}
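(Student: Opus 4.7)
The plan is to prove parts (i), (ii), (iii) in sequence, using Theorem \ref{le:n-1negative} as the main structural engine and inserting one auxiliary inequality to handle (iii) cleanly.

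For part (i), existence comes immediately from Lemma \ref{le:pure+upper}(iii) with $k = n-1$. To pin down the inertia, I will show that any candidate $(n-1, b, n+1-b)$ with $b \geq 1$ is excluded: since $a+b = n-1+b > n-1$, Theorem \ref{le:n-1negative}(i) applies, and both parent sequences $(n-1, b-2, n+1-b)$ and $(n-1, b-1, n-b)$ have first entry $n-1$, exceeding the maximum $n-2$ allowed on $2\times(n-1)$ NPT states by Lemma \ref{le:pure+upper}(i). So neither parent belongs to $\cN_{2,n-1}$, hence $(n-1, b, n+1-b)\notin\cN_{2,n}$, forcing $b = 0$.

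For part (ii), Lemma \ref{le:inertia}, Lemma \ref{le:pure+upper}(i), and part (i) restrict the candidates to $\{(1,3,2), (1,2,3), (1,1,4), (1,0,5), (2,0,4)\}$. I will rule out $(1,3,2)$ by another application of Theorem \ref{le:n-1negative}(i): its parents $(1,1,2)$ and $(1,2,1)$ both lie outside $\cN_{2,2} = \{(1,0,3)\}$ (Theorem \ref{cr:twoqubit}). For existence, $(2,0,4)$ is realized by part (i), and the three inertias with $a=1$ come from Lemma \ref{le:rho+xid}(ii) applied to $(1,0,3) \in \cN_{2,2}$ with $l \in \{0,1,2\}$.

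For part (iii) I will first establish the auxiliary inequality \emph{(Lemma X) for every $(a,b,c) \in \cN_{2,n}$, one has $c \geq a+2$}, by induction on $n$. The base $n=2$ follows from Theorem \ref{cr:twoqubit}. In the step, if $a+b \leq n-1$ then $c = 2n-a-b \geq n+1 \geq a+2$ using Lemma \ref{le:pure+upper}(i); if $a+b > n-1$, the contrapositive of Theorem \ref{le:n-1negative}(i) forces at least one of $(a,b-2,c)$ or $(a,b-1,c-1)$ into $\cN_{2,n-1}$, and the inductive hypothesis on that triple yields $c \geq a+2$ (the second alternative even gives $c \geq a+3$). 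Part (iii) itself then proceeds by induction on $n$. The base $n=2$ is clear, since a two-qubit NPT state cannot have $r(\rho_B) = 1$, else $\rho = \rho_A \otimes \ket{b}\bra{b}$ would be separable. For $n \geq 3$, the rank bound $r(\rho^\G) = 2(j+1) \leq 2\,r(\rho_B)$ already yields $r(\rho_B) \geq j+1$. When $j = n-1$, $\rho^\G$ has full rank $2n$, forcing $r(\rho_B) = n = j+1$. When $j < n-1$, I will apply Theorem \ref{le:n-1negative}(ii): premise (A), $(j, 2(n-2-j), j+2) \in \cN_{2,n-1}$, is obtained by extending $(j,0,j+2) \in \cN_{2,j+1}$ (from part (i)) via Lemma \ref{le:rho+xid}(ii); premise (B), $(j, 2n-3-2j, j+1) \notin \cN_{2,n-1}$, is exactly the content of Lemma X since $c = j+1 < j+2 = a+2$. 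The theorem then realizes $\rho$ as a $2\times(n-1)$ NPT state up to a local projector with inertia $(j, 2((n-1)-1-j), j+2)$, of the same form, so the inductive hypothesis delivers $r(\rho_B) = j+1$. The main obstacle will be Lemma X — once it is in place, Theorem \ref{le:n-1negative}(ii) takes care of (iii) without further fuss.
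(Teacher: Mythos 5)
Your proposal is correct, and parts (i) and (ii) track the paper's own argument closely: (i) is the identical exclusion of $(n-1,k,n+1-k)$ for $k>0$ via Theorem \ref{le:n-1negative}(i) and the bound $n-2$ on negative eigenvalues in $\cN_{2,n-1}$, while in (ii) you rule out the same bad candidates the same way but realize $(1,2,3),(1,1,4),(1,0,5)$ by applying Lemma \ref{le:rho+xid}(ii) to $(1,0,3)\in\cN_{2,2}$ where the paper instead exhibits four explicit density matrices --- an immaterial difference. Part (iii) is where you genuinely diverge. The paper argues by contradiction on $k=r(\rho_B)$ and, crucially, reads off both needed facts --- that $(j,2(k-2-j),j+2)\in\cN_{2,k-1}$ and that $c-a\geq 2$ throughout $\cN_{2,n}$ --- from the full classification \eqref{eq:numcN2n-3} of Theorem \ref{le:numcN2n-1}, a result stated later whose own proof relies on parts (i)--(ii) of this corollary (not on (iii), so the paper is not circular, but it does lean on a forward reference). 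You instead prove the inequality $c\geq a+2$ as a self-contained induction (your Lemma X, whose case analysis on $a+b\lessgtr n-1$ and use of the contrapositive of Theorem \ref{le:n-1negative}(i) both check out), obtain the membership premise from part (i) plus Lemma \ref{le:rho+xid}(ii), and then run a clean downward induction on $n$ through Theorem \ref{le:n-1negative}(ii). What your route buys is independence from Theorem \ref{le:numcN2n-1} and a tidier logical ordering; what the paper's route buys is brevity once the classification is available. Both hinge on the same engine, Theorem \ref{le:n-1negative}(ii), and your rank bound $r(\rho^\G)\leq 2\,r(\rho_B)$ plays exactly the role of the paper's case $k<j+1$. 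No gaps.
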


We present the proof of this corollary in Appendix \ref{sec:proof2}.
Combining Lemma \ref{le:rho+xid} and Theorem \ref{le:n-1negative}, and using mathematical induction we can further exhaust $\cN_{2,n}$ for any $n\geq 2$. We will discuss the details in Theorem \ref{le:numcN2n-1}.



\begin{theorem}
\label{le:numcN2n-1}
There are exact $(n-1)^2$ distinct inertias in $\cN_{2,n}$, i.e.,
\beq
\label{eq:numcN2n-1}
\abs{\cN_{2,n}}=(n-1)^2,~\forall n\geq 2.
\eeq 
Furthermore, the $(n-1)^2$ distinct inertias in $\cN_{2,n}$ are as follows.
\beq
\label{eq:numcN2n-3}
\bal
&(1,2(n-2)-j,j+3),\quad \text{$\forall 0\leq j\leq 2(n-2)$},\\
&(2,2(n-3)-j,j+4),\quad \text{$\forall 0\leq j\leq 2(n-3)$},\\
&\vdots\\
&(n-1,0,n+1).
\eal
\eeq
\end{theorem}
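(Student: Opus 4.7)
The plan is an induction on $n$. The base case $n = 2$ is handed to us by Theorem~\ref{cr:twoqubit}, giving $\cN_{2,2} = \{(1,0,3)\}$, in agreement with $(n-1)^2 = 1$. For the inductive step I would prove two statements separately: every triple in \eqref{eq:numcN2n-3} is realized as some $\ine(\rho^\Gamma)$ (existence), and no other triple $(a,b,c)$ with $a+b+c = 2n$ lies in $\cN_{2,n}$ (uniqueness). Summing $2(n-1-a)+1$ over $a \in [1, n-1]$ gives $\sum_{k=1}^{n-1}(2k-1) = (n-1)^2$, which settles the cardinality once both halves are in place.

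For existence I would exploit a single seed per value of $a \in [1, n-1]$, rather than propagating the inertias via $\cN_{2,n-1}$. Corollary~\ref{th:2x3ine}(i), applied to the $2 \times (a+1)$ system, immediately gives $(a, 0, a+2) \in \cN_{2, a+1}$. Feeding this into Lemma~\ref{le:rho+xid}(ii) with $(m_1, n_1) = (2, a+1)$ and $(m_2, n_2) = (2, n)$, so that $m_2 n_2 - m_1 n_1 = 2(n-1-a)$, produces the full family
\[
(a,\ 2(n-1-a) - l,\ a + 2 + l) \in \cN_{2,n}, \qquad 0 \le l \le 2(n-1-a),
\]
which matches the $j$-family in \eqref{eq:numcN2n-3} under $j = l$. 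The dimensional hypothesis $a + 1 \le n$ of Lemma~\ref{le:rho+xid} is immediate from $a \le n - 1$, and distinct $(a,l)$ yield distinct triples, so all $(n-1)^2$ claimed inertias really appear in $\cN_{2,n}$.

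For uniqueness, take an arbitrary $(a, b, c) \in \cN_{2,n}$. Lemma~\ref{le:inertia}(ii) pins $1 \le a \le n - 1$ at once, and $a + b + c = 2n$ is automatic, so what remains is to prove $c \ge a + 2$, equivalently $b \le 2(n-1-a)$. If $a + b \le n - 1$, then $b \le n-1-a \le 2(n-1-a)$ for free. Otherwise $a + b > n - 1$ and Theorem~\ref{le:n-1negative}(i) kicks in: its contrapositive forces at least one of $(a, b-2, c)$ or $(a, b-1, c-1)$ into $\cN_{2, n-1}$; the induction hypothesis then identifies the reduced triple as a claimed inertia of $\cN_{2, n-1}$, giving either $c \ge a + 2$ directly or $c - 1 \ge a + 2$ (hence $c \ge a + 3$). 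Either subcase yields the required bound.

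The main obstacle I expect is the dimensional bookkeeping inside the uniqueness argument: for degenerate cases like $b \in \{1, 2\}$ or $a = n-1$, I need to check that the reduced triples $(a, b-2, c)$ and $(a, b-1, c-1)$ are genuine admissible triples (non-negative coordinates summing to $2(n-1)$) before appealing to the induction hypothesis, rather than formal shadows with a negative entry. Once those edge cases are verified to slot cleanly into Theorem~\ref{le:n-1negative}, the proof requires nothing beyond Corollary~\ref{th:2x3ine}, Lemma~\ref{le:rho+xid}, Theorem~\ref{le:n-1negative}, and Lemma~\ref{le:inertia}.
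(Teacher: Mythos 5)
Your proposal is correct and follows essentially the same route as the paper: the identical existence argument (seeds $(a,0,a+2)\in\cN_{2,a+1}$ from Corollary \ref{th:2x3ine}(i) propagated by Lemma \ref{le:rho+xid}(ii)) and the same induction for uniqueness via Theorem \ref{le:n-1negative}(i) and Lemma \ref{le:inertia}(ii) --- the paper merely phrases uniqueness contrapositively (excluding each non-listed triple) and keeps $n=3$ as an extra base case, while you argue forward from an arbitrary member of $\cN_{2,n}$. The edge cases you flag do resolve: $b=0$ together with $a+b>n-1$ would force $a\geq n$, contradicting Lemma \ref{le:inertia}(ii), and a reduced triple with a negative entry simply fails to lie in $\cN_{2,n-1}$, which only sharpens the dichotomy supplied by Theorem \ref{le:n-1negative}(i).
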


We provide the proof of Theorem \ref{le:numcN2n-1} in Appendix \ref{sec:proof2}. By Theorem \ref{le:numcN2n-1} we completely determine the inertia set $\cN_{2,n}$ for any $n\geq 2$. Using the method in Lemma \ref{le:rho+xid} one can construct the example whose partial transpose has the corresponding inertia in \eqref{eq:numcN2n-3}.
An observation from \eqref{eq:numcN2n-3} is that if $\rho$ is a $2\times n$ NPT state, then $\rho^\G$ has at least one negative and three positive eigenvalues. Based on this observation we prove that any bipartite NPT state shares this property using mathematical induction.

\begin{lemma}
\label{le:mxnpt}
If $\rho$ is an $m\times n$ NPT state for any $m,n\geq 2$, then $\rho^\G$ has at least one negative and three positive eigenvalues. Furthermore, $(1,mn-4,3)\in\cN_{m,n}$ for any $m,n\geq 2$.
\end{lemma}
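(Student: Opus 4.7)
The plan is to establish the two assertions separately.

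The existence of $(1, mn-4, 3)\in\cN_{m,n}$ is a direct corollary of Lemma~\ref{le:rho+xid}(ii). Theorem~\ref{cr:twoqubit} gives $(1,0,3)\in\cN_{2,2}$, so applying Lemma~\ref{le:rho+xid}(ii) with $(m_1,n_1)=(2,2)$, $(m_2,n_2)=(m,n)$, $(a_1,b_1,c_1)=(1,0,3)$, and $l=0$ immediately yields $(1,mn-4,3)\in\cN_{m,n}$ for all $m,n\ge 2$.

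For the universal lower bounds on an arbitrary $m\times n$ NPT state $\rho$, the inequality $n_-(\rho^\G)\ge 1$ is immediate from Lemma~\ref{le:pure+upper}(i). The harder inequality $n_+(\rho^\G)\ge 3$ I would prove by strong induction on $m+n$. The base case $m+n=4$ (so $m=n=2$) is handled by Theorem~\ref{cr:twoqubit}, which forces $\ine(\rho^\G)=(1,0,3)$. For the inductive step with $m+n\ge 5$ and (without loss of generality) $n\ge 3$: if $r(\rho_A)<m$ or $r(\rho_B)<n$, then $\rho$ may be viewed as an NPT state supported on a strictly smaller bipartite subsystem (since zero-padding preserves the nonzero spectrum of $\rho^\G$), and the inductive hypothesis applies directly.

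In the remaining full-local-rank case, pick any negative eigenvector $\ket{v_-}$ of $\rho^\G$; by Lemma~\ref{le:inertia}(i) it has Schmidt rank $r\ge 2$, hence lies in $V_A\ox V_B$ for its Schmidt subspaces $V_A\sue\bbC^m$ and $V_B\sue\bbC^n$ of dimension $r$. Choosing local bases so that $V_A, V_B$ span the first $r$ coordinates, the compression $\rho':=(P_{V_A}\ox P_{V_B})\rho(P_{V_A}\ox P_{V_B})$ is an $r\times r$ state whose partial transpose is the corresponding principal submatrix of $\rho^\G$, and $\bra{v_-}(\rho')^\G\ket{v_-}=\bra{v_-}\rho^\G\ket{v_-}<0$ shows $\rho'$ is NPT. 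By Cauchy interlacing, $n_+((\rho')^\G)\le n_+(\rho^\G)$, so any lower bound on $n_+$ for the compressed partial transpose transfers to $\rho^\G$. Whenever $m\ne n$ or $r<m$, the compressed system satisfies $2r<m+n$, the inductive hypothesis gives $n_+((\rho')^\G)\ge 3$, and the step is complete.

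The main obstacle is the residual case $m=n$ in which every negative eigenvector of $\rho^\G$ is maximally entangled of Schmidt rank $m$, so the compression does not reduce the total dimension. I would handle this by assuming for contradiction that $n_+(\rho^\G)=2$, combining Lemma~\ref{le:inertia}(ii) (which then forces $n_0(\rho^\G)\ge 2m-3$) with Lemma~\ref{le:mxnkerprod} (which produces at least $2m-3$ linearly independent product vectors in $\cK(\rho^\G)$), and exploiting the full-rank partial-trace constraints $\tr_A\rho^\G=\rho_B$ and $\tr_B\rho^\G=\rho_A^T$ to obtain a rigidity on the Schmidt ranks of the two positive eigenvectors incompatible with the hypothesized product-vector structure of $\cK(\rho^\G)$, closing the induction.
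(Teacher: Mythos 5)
Your treatment of the easy parts is fine and matches the paper: $n_-\ge 1$ from Lemma~\ref{le:pure+upper}(i), and $(1,mn-4,3)\in\cN_{m,n}$ from $\cN_{2,2}=\{(1,0,3)\}$ via Lemma~\ref{le:rho+xid}(ii) with $l=0$ is exactly the paper's final line. Your route to $n_+\ge 3$, however, diverges from the paper's: the paper inducts on $m$ alone, anchoring the base case $m=2$ in the complete classification of $\cN_{2,n}$ from Theorem~\ref{le:numcN2n-1} (so that \emph{every} inertia in \eqref{eq:numcN2n-3} visibly has at least three positive eigenvalues), and in the inductive step it assumes $\ine\rho^\G=(a,b,2)$, deduces $b\ge n$, and pushes the state down to a $k\times n$ system. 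Your compression of $\rho^\G$ to the Schmidt subspaces of a negative eigenvector together with Cauchy interlacing is a legitimate and rather elegant alternative mechanism for the dimension reduction, and it would let you start from the single base case $2\times 2$ without invoking Theorem~\ref{le:numcN2n-1} at all.

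The problem is that your induction does not close. You correctly identify the residual case ($m=n$ with every negative eigenvector of full Schmidt rank, so the compression returns the same $m\times m$ system), but what you offer there is a plan, not a proof: ``exploiting the full-rank partial-trace constraints \ldots to obtain a rigidity on the Schmidt ranks of the two positive eigenvectors incompatible with the hypothesized product-vector structure'' names no actual rigidity statement and derives no contradiction. Since this residual case is precisely the configuration in which $n_+(\rho^\G)=2$ would have to live, it is the entire content of the lemma; everything before it only reduces to it. As written, the claim $n_+\ge 3$ is therefore unproven. To repair it you would either have to supply the missing argument explicitly (note that your own setup gives you $n_-=1$, $r(\rho^\G)=3$, and $2m-3$ independent product vectors in $\cK(\rho^\G)$ to work with, and the paper's trick that a product vector $\ket{a,b}\in\cK(\rho^\G)$ forces $\ket{a^*,b}\in\cK(\rho)$ and permits a local-rank reduction as in the proof of Theorem~\ref{le:n-1negative} is the natural tool), or fall back on the paper's strategy of using the already-established structure of $\cN_{2,n}$ as the induction base and reducing the first tensor factor one dimension at a time.
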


\begin{proof}
It follows from Lemma \ref{le:inertia} (ii) that $\rho^\G$ has at least one negative eigenvalue. Hence, we only need to show $\rho^\G$ has at least three positive eigenvalues. We use mathematical induction to prove it. First, $\rho^\G$ has the property for $m=2$ and any $n\geq 2$ from \eqref{eq:numcN2n-3}. Second we assume $\rho^\G$ has the property for $m=k(\geq 2)$ and any $n\geq 2$. Finally we prove $\rho^\G$ has the property for $m=k+1$ and any $n\geq 2$. We prove it by contradiction. From Lemma \ref{le:inertia} (iii) it suffices to denote $\ine\rho^\G=(a,b,2)$, where $a+b=(k+1)n-2$ and $1\leq a\leq k(n-1)$. It follows that 
$$
n-2+k\leq b\leq (k+1)n-3.
$$
Since $k\geq 2$ by assumption, we conclude that $b\geq n$. It implies that $(a,b-n,2)\in\cN_{k,n}$. However, it contradicts with the induction hypoethesis that $\rho^\G$ has at least three positive eigenvalues for $m=k$ and any $n\geq 2$. Therefore, we conclude that $\rho^\G$ has at least three positive eigenvalues for $m=k+1$ and any $n\geq 2$. Thus, by mathematical induction our claim holds. For the last assertion, since $\cN_{2,2}=\{(1,0,3)\}$, it follows that $(1,mn-4,3)\in\cN_{m,n}$ for any $m,n\geq 2$.
This completes the proof.
\end{proof}

Lemma \ref{le:mxnpt} partially improves Lemma \ref{le:inertia} (iii). We restrict EWs here into the partial transpose of NPT state. It is interesting to ask whether all bipartite EWs share this property that the number of positive eigenvalues is at least three. It is related to the EWs with the minimal rank.
A direct corollary from Lemma \ref{le:mxnpt} is that if $\rho$ is an $m\times n$ NPT state for any $m,n\geq 2$, then $\rho^\G$ has rank at least four. It can be used to construct a separability criterion.

\section{Connections with other problems}
\label{sec:app}

In this section we build the connections between the inertia of EW and other aspects in quantum information theory. 
First we present a separability criterion based on the rank of $\rho^\G$ in Theorem \ref{le:newsep}. 
Second we propose a method to generate the inertia of $\rho^\G$ for higher-dimensional state $\rho$. Using this method we can characterize the partial transpose of tripartite genuinely entangled state in a systematic way. 
Third we indicate that the inertia of $\rho^\G$ provides a tool to classify states under SLOCC equivalence.
Fourth when $\rho$ is a $2\times n$ X-state, we explicitly express the eigenvalues of $\rho^\G$, and quantify the number of negative ones in Theorem \ref{le:xstate}.

First, determining whether a state is entangled or separable is a central and long-standing problem in entanglement theory \cite{entrmp2009}. PPT criterion is commonly used, while it is necessary but not sufficient for high dimensional states \cite{ppte1997}. Thus, the separability of PPT state with small rank has been investigated \cite{mxnppt2000,mxndistill2003,seplin2013}. 
In the following theorem we present a separability criterion for the states whose partial transposes have small ranks.
\begin{theorem}
\label{le:newsep}
Suppose $\rho$ is an $n$-partite state. Denote by $\rho^{\G_S}$ the partial transpose of $\rho$ with resepct to the subsystem $S\subseteq\{1,\cdots,n\}$. If for any subsystem $S$, $\rho^{\Gamma_S}$ has rank at most three, then $\rho$ and $\rho^{\G_S}$ are both separable.
\end{theorem}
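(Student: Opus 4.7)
The plan is to reduce the multipartite claim to bipartite low-rank PPT separability via Lemma~\ref{le:mxnpt}, and then to upgrade biseparability across every cut to full $n$-partite separability. Observe first that it suffices to prove $\rho$ itself is separable: partial transposition on the parties in $S$ is a local operation, so $\rho^{\Gamma_S}$ is separable whenever $\rho$ is, and the converse follows by applying a second partial transposition on $S$.

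Choosing $S=\emptyset$, the hypothesis immediately yields $r(\rho)\le 3$. For any nonempty proper bipartition $S\sqcup S^c=\{1,\dots,n\}$, view $\rho$ as a bipartite state across this cut; if $\rho$ were NPT across this cut, Lemma~\ref{le:mxnpt} would force $r(\rho^{\Gamma_S})\ge 4$, contradicting the hypothesis. Hence $\rho$ is PPT across every bipartition. Appealing to the known classification of low-rank bipartite PPT states \cite{mxnppt2000,seplin2013}, any bipartite PPT state of rank at most three is separable (via the range criterion together with an explicit construction of product decompositions), so $\rho$ is biseparable with respect to every bipartition of the $n$ parties.

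It remains to promote this biseparability-across-every-cut property to full $n$-partite separability. When $r(\rho)=1$, $\rho=|\psi\rangle\langle\psi|$ with $|\psi\rangle$ biseparable across each cut, and peeling off one party at a time via the cuts $\{i\}|\{1,\dots,n\}\setminus\{i\}$ gives a fully product decomposition of $|\psi\rangle$ by induction on $n$. When $r(\rho)\in\{2,3\}$, the range of $\rho$ is spanned by only two or three vectors, and the range criterion applied to each bipartition furnishes product vectors in the range with respect to every cut. The strategy is to match the local factors obtained from these bipartite decompositions on overlapping subsystems, using the rank bound to force consistency, and thereby obtain a globally product decomposition of $\rho$.

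The main obstacle is precisely this last step, since in general biseparability across every bipartition is strictly weaker than full multipartite separability. The rank restriction $r(\rho)\le 3$ must be used decisively, so I anticipate a case split on $r(\rho)\in\{1,2,3\}$, with ranks $2$ and $3$ requiring careful consistency analysis of the product-vector decompositions produced by different cuts.
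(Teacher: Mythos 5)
Your reduction to the bipartite PPT condition via Lemma~\ref{le:mxnpt} matches the paper exactly: across every cut $S|S^c$, rank at most three forces PPT, since an NPT state would have $r(\rho^{\Gamma_S})\geq 4$. The divergence, and the gap, is in how full separability is then obtained. The paper invokes the \emph{multipartite} result of \cite{seplin2013} -- any $n$-partite state that is PPT with respect to every bipartition and has rank at most three is fully separable -- which closes the argument in one step and simultaneously yields separability of every $\rho^{\Gamma_S}$ (each of which is itself a PPT state of rank at most three). You instead apply only the bipartite low-rank separability result to each cut, obtaining biseparability with respect to every bipartition, and then propose to ``match the local factors obtained from these bipartite decompositions on overlapping subsystems'' to assemble a fully product decomposition. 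That last step is never carried out: you describe it as a ``strategy,'' anticipate a case split on $r(\rho)\in\{1,2,3\}$, and acknowledge that biseparability across every cut is in general strictly weaker than full multipartite separability. This is precisely the hard part of the theorem, and your proposal does not prove it. The consistency of product vectors produced independently by different cuts is not automatic even at rank $2$ or $3$; making it rigorous is essentially equivalent to reproving the multipartite rank-$\leq 3$ classification of \cite{seplin2013}. As written, your argument establishes only that $\rho$ is PPT across every bipartition and biseparable across every bipartition, not that $\rho$ is separable. The fix is simply to cite the multipartite statement (as the paper does) rather than the bipartite one, or to actually supply the gluing argument you sketch.
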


\begin{proof}
Let $S^c$ be the complement of $S$ in $\{1,\cdots,n\}$. First we take $\rho$ as a bipartite state of system $S,S^c$.
It follows from Lemma \ref{le:mxnpt} that if $\rho$ is a bipartite NPT state, then $r(\rho^{\G_S})\geq 4$. Thus if $r(\rho^{\G_S})\leq 3$, then $\rho$ is a bipartite PPT state in the bipartition $S|S^c$. Thus, $\rho^{\G_S}$ is positive semidefinite, and indeed a bipartite PPT state in the bipartition $S|S^c$. Therefore, if for any subsystem $S$, $\rho^{\Gamma_S}$ has rank at most three, it implies $\rho^{\G_S}$ is PPT in any bipartition. Thus, for any subsystem $S$, $\rho^{\G_S}$ is an $n$-partite PPT state. It is known that any multipartite PPT state of rank at most three is separable \cite{seplin2013}. Hence, for any subsystem $S$, $\rho^{\G_S}$ is separable, and thus $\rho$ is also separable. This completes the proof.
\end{proof}

The advantage of Theorem \ref{le:newsep} is that we don't need to check whether $\rho$ is a PPT state if with respect to each bipartition the partial transpose has rank at most three. The reason is that Lemma \ref{le:mxnpt} guarantees such states are multipartite PPT ones. However, if $r(\rho^\G)$ is greater than three, one cannot determine whether $\rho$ is a PPT state directly. Therefore, in a similar way it is possible to propose other useful separability criteria from the perspective of bi-rank $(r(\rho),r(\rho^\G))$ when we fully characterize the inertia set of $\rho^\G$ for general NPT state $\rho$.


Second, we propose a method to generate the inertia of $\rho^\G$ in higher-dimensional systems. The method is depicted in Fig. \ref{fig:high}. 
Suppose $\a_{AB}$ is an $m_1\times n_1$ NPT state of system $A,B$, and $\b_{CD}$ is an $m_2\times n_2$ NPT state of system $C,D$. Denote $\ine(\a_{AB}^\G)=(a_1,b_1,c_1)$ and $\ine(\b_{CD}^\G)=(a_2,b_2,c_2)$. Let $\rho_{(AC):(BD)}:=\a_{AB}\ox \b_{CD}$ be a bipartite state of system $(AC),(BD)$. Then $\rho_{(AC):(BD)}^{\G}$ has the inertia
\beq
\label{eq:kroneckerine}
(a_1c_2+a_2c_1, b_1m_2n_2+b_2m_1n_1-b_1b_2, a_1a_2+c_1c_2).
\eeq
The inertia \eqref{eq:kroneckerine} can be verified directly. Since 
$$\rho_{(AC):(BD)}^\G=\a_{AB}^{\G}\ox \b_{CD}^\G,$$ 
the number of negative eigenvalues is $a_1c_2+a_2c_1$, and the number of positive eigenvalues is $a_1a_2+c_1c_2$.

\begin{figure}[ht]
\centering
\includegraphics[width=3.5in]{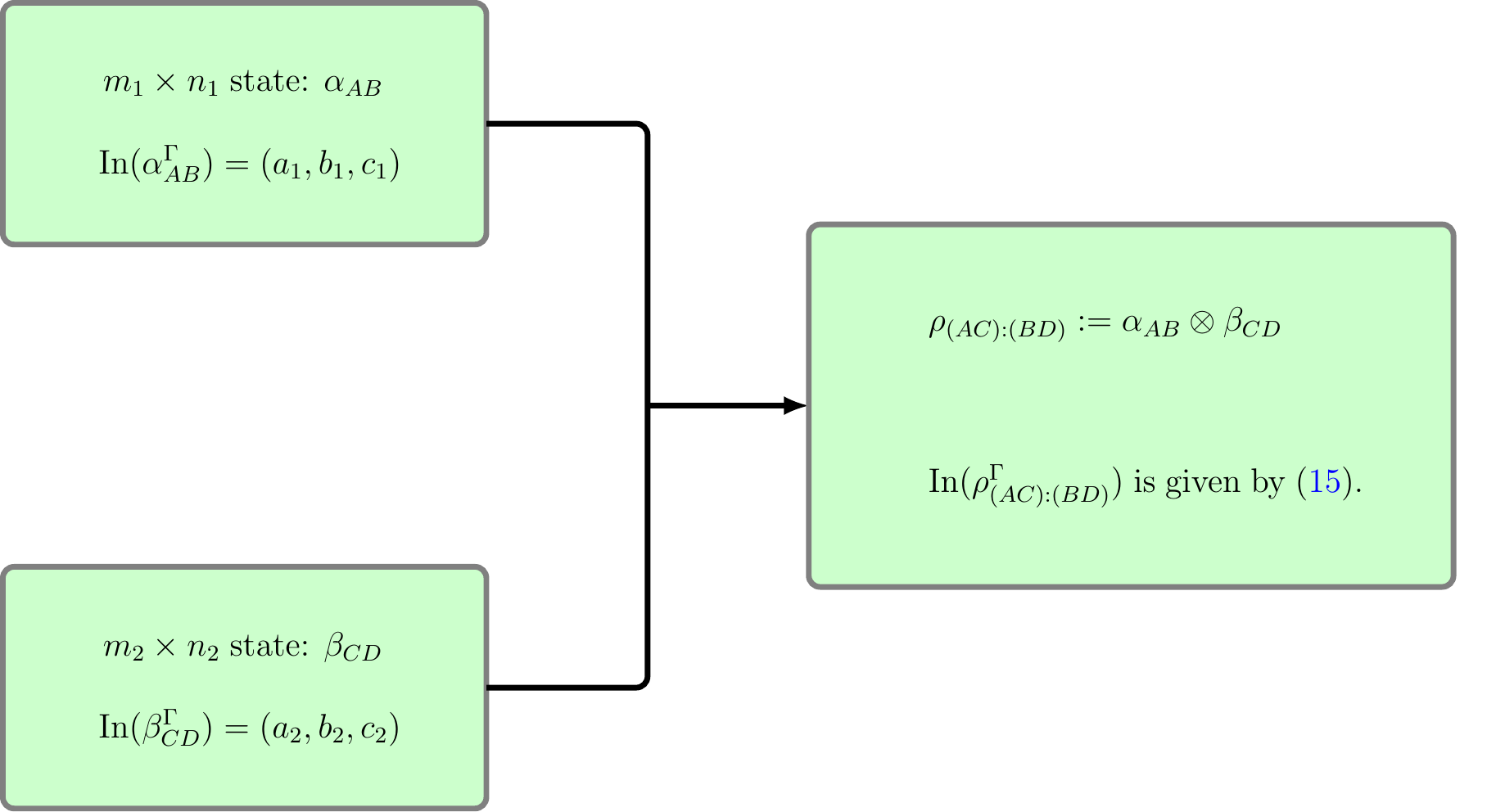}
\caption{Here, $\a_{AB}$ is an $m_1\times n_1$ state with $\ine(\a_{AB}^\G)=(a_1,b_1,c_1)$, and $\b_{CD}$ is an $m_2\times n_2$ state with $\ine(\b_{CD}^\G)=(a_2,b_2,c_2)$. Then we construct an $m_1m_2\times n_1n_2$ state of system $(AC),(BD)$. The inertia of $\rho_{(AC):(BD)}^\G$ is given by \eqref{eq:kroneckerine}. } 
\label{fig:high}
\end{figure}

By splitting system $(BD)$ into two subsystems $B,D$, we can take $\rho_{(AC):(BD)}$ as a tripartite state of system $(AC),B,D$, i.e., $\rho_{(AC):B:D}$, and take $\rho_{(AC):(BD)}^\G$ as the partial transpose of the tripartite state $\rho_{(AC):B:D}$ with respect to subsystem $(AC)$, i.e., $\rho_{(AC):B:D}^{\G_{AC}}$. In this way we can construct the tripartite genuinely entangled state $\rho_{(AC):B:D}$ using two bipartite entangled states $\a_{AB}$ and $\b_{CD}$ \cite{geys2020}. Moreover, we conjectured in \cite{geys2020} that $\rho_{(AC):B:D}$ is a tripartite genuinely entangled state if both $\a_{AB}$ and $\b_{CD}$ are entangled. We have shown the above conjecture holds if either $\cR(\a_{AB})$ or $\cR(\b_{CD})$ is not spanned by product vectors \cite{geys2020}. The latest progress on this conjecture has been made in \cite{sun2020}. As we know, genuine mulitpartite entanglement is valuable resouce in quantum information processing tasks \cite{12qubitge2019,rtent2019,gme2020}. Nevertheless, it is difficult to characterize genuinely multipartite entangled (GME) states \cite{gme2011,geys2020}. Obviously, the characterization of the partial transpose of GME state is also hard. As far as we know, there are few papers discussing the inertia of the partial transpose of GME state. If the above-mentioned conjecture is true, using the method in Fig. \ref{fig:high} we find a systematic way to construct tripartite genuinely entangled states whose partial transposes have inertias that may be exhausted explicitly. For example, if $\a_{AB}$ and $\b_{CD}$ are two $2\times n$ NPT states, we can exhaust the inertia of $\rho_{(AC):B:D}^\G$ by Theorem \ref{le:numcN2n-1}. Furthermore, we can construct a tripartite genuinely entangled state whose partial transpose has a given inertia in this way.

Third, we indicate that the inertia of $\rho^\G$ can be used to classify states under SLOCC equivalence. In quantum information theory, the classification of multipartite states is one of the central problems and has received extensive attentions in the past decades \cite{3qubitlu2000,3qubitinequiv2001,luequiv2010,mpsinequiv2013}. Two main approaches of classication are the equivalence under local unitary (LU) and SLOCC operations \cite{SLOCC2000}. For example, a complete classification of pure three-qubit states in terms of LU equivalence were presented in \cite{3qubitlu2000}. In terms of SLOCC equivalence, it has been shown that only two inequivalent classes for pure three-qubit genuinely entangled states, namely the {\em W}-state class and {\em GHZ}-state class \cite{3qubitinequiv2001}. Moreover, necessary and sufficient conditions for the equivalence of arbitrary $n$-qubit pure quantum states under LU operations were derived in \cite{luequiv2010}. A systematic classification of multiparticle entanglement in terms of SLOCC equivalence were provided in \cite{mpsinequiv2013}.

In the following we introduce a classification of $m\times n$ NPT states using the inertias of their partial transposes. The inertia is invariant under SLOCC operations from Sylvester Theorem. Moreover, we conclude that if two $n$-partite mixed states of system $A_1,...,A_n$ are SLOCC equivalent, then their partial transposes with respect to any $k$-partite subsystem $A_{j_1},....,A_{j_k}$ are SLOCC equivalent. (We prove this claim, i.e., Lemma \ref{le:sloccequiv} (i), in Appendix \ref{sec:proof3}.) As a result, if $\rho_{AB}^\G$ and $\sigma_{AB}^\G$ have different inertias, then $\rho_{AB}$ and $\sigma_{AB}$ are SLOCC inequivalent. Therefore, we propose a necessary condition for $\rho_{AB}$ and $\sigma_{AB}$ to be SLOCC equivalent, i.e., $\ine(\rho_{AB}^\G)=\ine(\sigma_{AB}^\G)$. Further, for $2\times n$ NPT states, from Theorem \ref{le:numcN2n-1} we conclude that there are at least $(n-1)^2$ inequivalent families in terms of SLOCC equivalence.

Furthermore, we introduce the concept of strong SLOCC inequivalence. Suppose $\rho_{AB}$ and $\sigma_{AB}$ are both $2\times n$ NPT states. We consider the $N$ copies of $\rho_{AB}$ and $\sigma_{AB}$, i.e.,  $\rho_{AB}^{\ox N}$ and $\sigma_{AB}^{\ox N}$. We find that if $\rho_{AB}^\G$ and $\sigma_{AB}^\G$ have different inertias, then the partial transposes of $\rho_{AB}^{\ox N}$ and $\sigma_{AB}^{\ox N}$ still have different inertias. (We prove this claim, i.e., Lemma \ref{le:sloccequiv} (ii), in Appendix \ref{sec:proof3}.) We call this relation strong SLOCC inequivalence. Physically, it implies that the collective use of many copies cannot change the inequivalence under SLOCC. The classification of states enables us to determine whether there exist SLOCC operations to transform a state to another one. The transformation between many copies of two pure multipartite states was studied in \cite{MCSLOCC2011}. It has been shown that two transformable multipartite states under SLOCC are also transformable under multicopy SLOCC \cite{MCSLOCC2011}. The  strong SLOCC inequivalence here shows that if $\rho_{AB}$ and $\sigma_{AB}$ cannot be transformed under SLOCC, then $\rho_{AB}^{\ox N}$ and $\sigma_{AB}^{\ox N}$ cannot be transformed under multicopy SLOCC too.

Fourth, we discuss a class of states called X-states. They are defined as states whose density matrix has nonzero elements only along its diagonal and antidiagonal in resemblance to the letter $\bf X$ \cite{gxstate2010}. For example, GHZ diagonal states are typical kinds of X-states \cite{ghzdiag2017}. X-states are important states that occur in various contexts such as entanglement \cite{xchen2017}, its decay under decoherence \cite{xdecoherence2004}, and in describing other quantum correlations besides entanglement such as discord \cite{xdiscord2010}.
In Theorem \ref{le:xstate} we study the inertia of $\rho^\G$ by quantifying the number of negative eigenvalues of $\rho^\G$.


\begin{theorem}
\label{le:xstate} 
(i) If $\rho$ is a $2\times n$ X-state, then $\rho^\G$ has at most $\lfloor\frac{n}{2}\rfloor$ negative eigenvalues. Furthermore, there exist $2\times n$ X-states whose partial transpose has exact $k$ negative eigenvalues, where $1\leq k\leq \lfloor\frac{n}{2}\rfloor$.
\end{theorem}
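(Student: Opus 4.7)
The plan is to exploit the sparse structure of X-states: after a suitable reordering of the computational basis, both $\rho$ and $\rho^\G$ block-diagonalize into $n$ blocks of size $2\times 2$, and sign constraints linking paired blocks will yield the bound.

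First, I would write a $2\times n$ X-state in qubit-first block form as
\[
\rho = \begin{pmatrix} D_0 & A \\ A^\dagger & D_1 \end{pmatrix},
\]
with diagonal $D_0, D_1 \in \cM_n$ having entries $a_k, b_k$, and antidiagonal $A\in\cM_n$ with entries $d_k := A_{k,n-1-k}$. The partial transpose swaps the off-diagonal qubit blocks. Pairing basis vectors as $\{\ket{0,k},\ket{1,n-1-k}\}$ for $k=0,\ldots,n-1$ simultaneously block-diagonalizes both $\rho$ and $\rho^\G$ into $n$ blocks of size $2\times 2$, with respective $k$th blocks
\[
N_k = \begin{pmatrix} a_k & d_k \\ \bar d_k & b_{n-1-k} \end{pmatrix},\qquad
M_k = \begin{pmatrix} a_k & \bar d_{n-1-k} \\ d_{n-1-k} & b_{n-1-k} \end{pmatrix}.
\]

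Next, I would extract the key trade-off. Since $\tr M_k = a_k + b_{n-1-k}\geq 0$, the block $M_k$ contributes a negative eigenvalue exactly when $\det M_k = a_k b_{n-1-k} - |d_{n-1-k}|^2 < 0$, in which case it contributes exactly one. Positivity of $\rho$ forces $\det N_k = a_k b_{n-1-k} - |d_k|^2 \geq 0$ for every $k$. Now pair the block indices $k$ and $n-1-k$. If both $M_k$ and $M_{n-1-k}$ were non-positive, then $a_k b_{n-1-k} < |d_{n-1-k}|^2$ and $a_{n-1-k} b_k < |d_k|^2$, which combined with $|d_{n-1-k}|^2 \leq a_{n-1-k} b_k$ and $|d_k|^2 \leq a_k b_{n-1-k}$ (from $N_{n-1-k},N_k \geq 0$) yields a contradiction. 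When $n$ is odd and $k = (n-1)/2$, one has $M_k = N_k^T$, so $\det M_k \geq 0$ and no negative contribution arises. Counting the distinct unordered pairs $\{k, n-1-k\}$ with $k\neq n-1-k$ thus bounds the negative eigenvalue count by $\lfloor n/2\rfloor$.

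For achievability, I would set $\ket{\phi_i} = \frac{1}{\sqrt 2}(\ket{0,i}+\ket{1,n-1-i})$ for $i=0,1,\ldots,k-1$, where $k\leq \lfloor n/2\rfloor$ ensures $i< n-1-i$. Each $\proj{\phi_i}$ is an X-state whose partial transpose acts on $\lin\{\ket{0,i},\ket{0,n-1-i},\ket{1,i},\ket{1,n-1-i}\}$ with eigenvalues $\{1/2,1/2,1/2,-1/2\}$. These four-dimensional supports are pairwise disjoint across $i$, so $\rho = \sum_{i=0}^{k-1} \proj{\phi_i}$ is again an X-state and $\rho^\G$ is a direct sum of the individual partial transposes, producing exactly $k$ negative eigenvalues.

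The main obstacle is spotting the simultaneous block-diagonalization of $\rho$ and $\rho^\G$ together with the crucial index-shift asymmetry that the $k$th block of $\rho^\G$ carries $d_{n-1-k}$ rather than $d_k$; once this structural observation is in place, the rest reduces to the elementary determinant comparison described above.
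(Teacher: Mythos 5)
Your proof is correct and follows essentially the same route as the paper: both block-diagonalize $\rho$ and $\rho^\G$ into $2\times 2$ blocks via the pairing $\{\ket{0,k},\ket{1,n-1-k}\}$, observe the index reversal of the antidiagonal entries under partial transposition, and use positivity of the partner block of $\rho$ to show that at most one block per unordered pair $\{k,n-1-k\}$ can contribute a negative eigenvalue (the paper phrases this via explicit eigenvalue formulas $\frac{a_j+b_j}{2}\pm\sqrt{r_{n+1-j}^2+d_j^2}$ rather than your determinant test, which is an equivalent computation). Your explicit achievability construction from disjointly supported maximally entangled pairs is a concrete instance of the paper's ``choose proper parameters'' step.
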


One can refer to Appendix \ref{sec:proof3} for the proof of Theorem \ref{le:xstate}. In \cite{2x3mes2017} authors considered the problems of maximizing the entanglement negativity of qubit-qutrit X-states. For this purpose they derived that there is at most one negative eigenvalue of $\rho^\Gamma$ if $\rho$ is a qubit-qutrit X-state. We generalize their result to $2\times n$ X-states here. In the proof of Theorem \ref{le:xstate} we formulate expressions for the eigenvalues of $\rho^\G$. Therefore, using the expressions of those negative eigenvalues, one can determine the inertia of $\rho^\G$, and compute the negativity of $2\times n$ X-state.

Finally, since the transpose is a typical positive but not completely positive map, it enables PPT criterion to detect entanglement. In \cite{inverseew2018} authors considered the question of how exactly the partial transpose map can transform the eigenvalues of $\rho$. In specific, for which ordered list $\l_1\geq \l_2\geq \cdots\geq \l_{mn}\in\mathbb{R}$ does there exist an $m\times n$ state $\rho$ such that $\rho^\G$ has eigenvalues $\l_1,\cdots,\l_{mn}$? Our main result Theorem \ref{le:numcN2n-1} answers this question when $m=2$ in terms of how many positive and negative values among the list $\l_1\geq \l_2\geq \cdots\geq \l_{2n}$.

\section{Conclusions}
\label{sec:con}

In this paper, we investigated the inertia of EW with a focus on the partial transpose of NPT state in the qubit-qudit system. 
For general EWs, we obtained the lower and upper bounds on the number of negative (positive) eigenvalues. 
We also completely determined the inertias of two-qubit EWs.
Then we focused on the EWs constructed by the partial transposition of NPT states.
On the one hand, we proposed an effective method to generate more inertias from a given inertia by adding appropriate product states into the density matrix of a given state.
On the other hand, we introduced an essential tool to study the inertia. That is the inertia of a Hermitian operator is invariant under $\slocc$. Using such a tool we presented a sufficient and necessary condition for a sequence belonging to the inertia set $\cN_{2,n}$. 
Combining the above two results we exhausted all inertias in the inertia set $\cN_{2,n}$ for any $n\geq 2$. 
Moreover, it was indicated that $(1,mn-4,3)\in\cN_{m,n}$, and the least number of positive eigenvalues is three.
These results led us to understand the partial transpose of NPT state better. 
Finally we connected our results to other problems in quantum information theory. Two important applications of our results were indicated. 
First, we proposed a separability criterion in terms of $r(\rho^\G)$. 
Second, using the essential property that the inertia is invariant under $\slocc$ equivalence, we introduced a classification of NPT states.
To sum up, our results carried out the first step to study the inertia of EW, and can be used to study other unexplored problems.


There are some interesting problems for further study. First, it is natural to ask how many distinct inertias in the inertia set $\cN_{m,n}$ for $m,n\geq 3$. We conjecture that $\cN_{m,n}$ has exact $(m-1)^2(n-1)^2$ distinct inertias. As we mentioned above, the inertia of $\rho^\G$ is closely related to the $1$-distillability of NPT state $\rho$. The characterization of $\cN_{m,n}$ for $m,n\geq 3$ is helpful to understand the distillability problem. Second, we may extend the study to the inertia of non-decomposable EW. The partial transpose of NPT state is a decomposable EW which cannot detect PPTE states. Thus, the inertias of non-decomposable EWs may provide powerful separability criteria to identify PPTE states. Third, we may generalize the bi-rank $(r(\rho),r(\rho^\G))$ to the bi-inertia $(\ine(\rho),\ine(\rho^\G))$. In virtue of the bi-inertias we can determine the partial transposes of which states share the same inertia. It allows us to further understand the relation between $\rho$ and $\rho^\G$. Finally, it is interesting to dig more applications of the tools introduced in this paper.

\begin{acknowledgments}
We deeply appreciate the anonymous referees for their careful work and valuable suggestions. We thank Bang-hai Wang and You Zhou for their helpful discussion. This work was supported by the NNSF of China (Grant Nos. 11871089, 11947241), and the Fundamental Research Funds for the Central Universities (Grant No. ZG216S2005). 
\end{acknowledgments}

\appendix

\section{Proofs of results in Sec. \ref{sec:restrict}}
\label{sec:proofsec2}

First of all, we prepare to show the proof of Lemma \ref{le:inertia}. For this purpose we need two essential results. The first one is a well-known conclusion on the existence of product vectors in some subspace.

\begin{lemma}\cite[Proposition 6.]{pptsprop2013}
\label{le:dimprod}
Suppose $\cH^{AB}\cong\mathbb{C}^{m}\ox\mathbb{C}^n$ is a bipartite Hilbert space.
Any subspace of $\cH^{AB}$ with dimension greater than $(m-1)(n-1)$ must contain at least one product vector. Furthermore, any subspace of $\cH^{AB}$ with dimension greater than $(m-1)(n-1)+1$ contains infinitely many product vectors.
\end{lemma}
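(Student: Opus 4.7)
The plan is to recast the question projectively via the Segre embedding. Identifying $\bbC^m\otimes\bbC^n$ with the space of $m\times n$ matrices, product vectors correspond precisely to rank-one matrices, so inside $\bbP(\bbC^m\otimes\bbC^n)=\bbP^{mn-1}$ the locus of product vectors is the Segre variety $\Sigma_{m,n}$, the image of $\bbP^{m-1}\times\bbP^{n-1}$ under the map $(\ket{a},\ket{b})\mapsto\ket{a}\otimes\ket{b}$. A standard parameter count (or the explicit description via $2\times 2$ minors) shows that $\Sigma_{m,n}$ is irreducible of projective dimension $(m-1)+(n-1)=m+n-2$.

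For a subspace $V\subseteq\bbC^m\otimes\bbC^n$ of dimension $d$, its projectivization $\bbP(V)$ is a projective linear subspace of $\bbP^{mn-1}$ of dimension $d-1$. I would then invoke the projective dimension theorem: two closed irreducible subvarieties of $\bbP^{mn-1}$ whose dimensions sum to at least $mn-1$ must meet, and every component of their intersection has dimension at least $\dim_1+\dim_2-(mn-1)$. With $\dim_1=m+n-2$ and $\dim_2=d-1$, this lower bound equals $d-(m-1)(n-1)-1$. Hence $d>(m-1)(n-1)$ already forces $\bbP(V)\cap\Sigma_{m,n}\neq\es$, producing a product vector in $V$, while $d>(m-1)(n-1)+1$ forces the intersection to be positive-dimensional; since over $\bbC$ a positive-dimensional projective variety has infinitely many points, $V$ then contains infinitely many mutually non-proportional product vectors.

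The main obstacle is the careful invocation of the projective dimension theorem, which requires $\Sigma_{m,n}$ to be a closed irreducible subvariety of the advertised dimension inside the smooth ambient $\bbP^{mn-1}$; both facts are standard properties of the Segre embedding and reduce to the observation that $\Sigma_{m,n}$ is the image of an irreducible projective variety under a finite morphism. If a more elementary route is preferred, one can pick a basis $\ket{v_1},\ldots,\ket{v_d}$ of $V$, parametrize a generic element as $\sum_i x_i\ket{v_i}$, and impose the vanishing of all $2\times 2$ minors of its matrix representative; the rank-one locus has codimension $(m-1)(n-1)$ in the space of $m\times n$ matrices, so the projective Nullstellensatz applied to these homogeneous polynomials in $x_1,\ldots,x_d$ yields the same non-emptiness and positive-dimensionality statements. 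I expect the Segre-variety argument to be the cleanest route, with the elementary minor-vanishing argument as a back-up when geometric machinery is to be avoided.
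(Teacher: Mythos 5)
This lemma is imported by the paper as a citation (Proposition 6 of the reference) and is given no proof in the text, so there is nothing internal to compare against; your argument is the standard one used in the literature for exactly this statement. The dimension count is correct: the Segre variety has dimension $m+n-2$, $\bbP(V)$ has dimension $d-1$, and the projective dimension theorem gives every component of the intersection dimension at least $d-(m-1)(n-1)-1$, which is nonnegative when $d>(m-1)(n-1)$ and positive when $d>(m-1)(n-1)+1$, yielding respectively a product vector and infinitely many (pairwise non-proportional) product vectors. Your proposal is a complete and correct proof.
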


The second one is on the dimension of some subspace spanned by product vectors.

\begin{lemma}
\label{le:prodspan}
Suppose $\cH^{AB}\cong\mathbb{C}^{m}\ox\mathbb{C}^n$ is a bipartite Hilbert space.
If $V$ is an $(mn-1)$-dimensional bipartite subspace of $\cH^{AB}$, then $V$ is spanned by product vectors. If $V$ is an $(mn-2)$-dimensional bipartite subspace of $\cH^{AB}$, then $V$ may be not spanned by product vectors.
\end{lemma}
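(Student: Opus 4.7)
The plan is to prove the two assertions separately. For the first, I will pass to a Schmidt-decomposition normal form for the vector spanning $V^\perp$ and explicitly exhibit $mn-1$ linearly independent product vectors in $V$; for the second, I will present a concrete counterexample in $\mathbb{C}^2\ox\mathbb{C}^2$.

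Let $\ket{\phi}$ be a nonzero vector spanning $V^\perp$, and let $r$ denote its Schmidt rank. After a suitable choice of orthonormal local bases, we may write $\ket{\phi}=\sum_{i=1}^{r}\sigma_i\ket{ii}$ with $\sigma_i>0$, so that a product vector $\ket{a}\ox\ket{b}=\sum_{i,j}a_ib_j\ket{ij}$ lies in $V$ iff $\sum_{i=1}^{r}\sigma_i a_i b_i=0$. I build $mn-1$ linearly independent product vectors in $V$ in two families. The first family consists of the $mn-r$ basis vectors $\ket{ij}$ with $(i,j)\notin\{(1,1),\ldots,(r,r)\}$; each satisfies $\braket{\phi}{ij}=0$ and hence lies in $V$. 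The second family consists of, for each pair $1\le i<j\le r$, the product vector
\[
(\sigma_j\ket{i}+\sigma_i\ket{j})\ox(\ket{i}-\ket{j})=\sigma_j\ket{ii}-\sigma_j\ket{ij}+\sigma_i\ket{ji}-\sigma_i\ket{jj},
\]
which lies in $V$ by a direct check. Adding the off-diagonal vectors $\ket{ij}$ and $\ket{ji}$ from the first family, the span of product vectors in $V$ also contains $\sigma_j\ket{ii}-\sigma_i\ket{jj}$ for every such pair. Choosing $i=1$ and $j=2,\ldots,r$ supplies $r-1$ such vectors, which are supported on the diagonal $\lin\{\ket{kk}:k\le r\}$ excluded from the first family, and are linearly independent among themselves by inspecting their supports on $\ket{jj}$ for distinct $j$. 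Altogether this yields $(mn-r)+(r-1)=mn-1$ linearly independent product vectors contained in $V$, forcing them to span $V$.

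For the second assertion, take $V=\lin\{\ket{01},\ket{00}+\ket{11}\}\su\mathbb{C}^2\ox\mathbb{C}^2$, which has dimension $mn-2=2$. A product vector $(\alpha\ket{0}+\beta\ket{1})\ox(\gamma\ket{0}+\delta\ket{1})$ belongs to $V$ precisely when its $\ket{10}$ coefficient vanishes, i.e.\ $\beta\gamma=0$, and its $\ket{00}$ and $\ket{11}$ coefficients coincide, i.e.\ $\alpha\gamma=\beta\delta$. A short case split on $\beta=0$ versus $\gamma=0$ (with $\beta\neq 0$) forces $\alpha\gamma=\beta\delta=0$ in every nonzero case, so the only nonzero product vectors in $V$ are scalar multiples of $\ket{01}$. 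Hence the span of product vectors in $V$ is only one-dimensional, and $V$ is not spanned by product vectors. The only point in the overall argument that needs some care is the linear-independence bookkeeping between the two families in the first assertion, but this reduces to a clean inspection of supports and presents no real obstacle.
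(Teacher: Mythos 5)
Your proof is correct, and for the first assertion it takes a genuinely different route from the paper. The paper forms the 3-tensor $\ket{\psi}=\sum_{j=1}^{mn-1}\ket{\a_j}\ket{j}$ from a basis of $V$, invokes an external result that such a tensor has tensor rank $mn-1$, and reads off a product-vector basis of $V$ from a minimal rank decomposition. You instead work with the single vector $\ket{\phi}$ spanning $V^\perp$, put it in Schmidt form $\sum_{i=1}^r\s_i\ket{ii}$, and explicitly exhibit product vectors in $V$: the $mn-r$ off-diagonal basis vectors plus the vectors $(\s_j\ket{i}+\s_i\ket{j})\ox(\ket{i}-\ket{j})$, whose span is checked to have dimension $mn-1$. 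Your construction is more elementary and self-contained (no appeal to the tensor-rank literature) and has the virtue of producing the product basis concretely; the paper's argument is shorter and leans on a general fact about 3-tensors that also serves elsewhere. One cosmetic point: the $r-1$ vectors $\s_j\ket{11}-\s_1\ket{jj}$ you count are not themselves product vectors but linear combinations of product vectors already shown to lie in $V$; what you actually establish is that the span of the product vectors of $V$ has dimension $mn-1$ and hence equals $V$, which is exactly the claim, so nothing is lost. For the second assertion your example $\lin\{\ket{01},\ket{00}+\ket{11}\}$ is, up to relabelling the second party's basis, identical to the paper's $\lin\{\ket{00},\ket{01}+\ket{10}\}$, and your case analysis confirming that $\ket{01}$ is the unique product vector up to scale is sound.
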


\begin{proof}
Let $V$ be the subspace spanned by the linearly independent vectors $\ket{\a_1},\ket{\a_2},...,\ket{\a_{mn-1}}$ in $\bbC^m\otimes\bbC^n$. It is known that the 3-tensor 
\begin{eqnarray}
\label{eq:mn(mn-2)=1}
\ket{\ps}=
\sum^{mn-1}_{j=1}\ket{\a_j}\ket{j}	
\end{eqnarray}
has tensor rank $(mn-1)$ \cite{3-tensorrank1983}. 
That is,
\begin{eqnarray}
\label{eq:mn(mn-2)=2}
\ket{\ps}=	
\sum^{mn-1}_{j=1}\ket{a_j,b_j,c_j},
\end{eqnarray}
where $\ket{c_j}$'s are vectors in the space 

\begin{eqnarray}
\lin\{\ket{1},\ket{2},...,\ket{mn-1}\}.	
\end{eqnarray}
Comparing \eqref{eq:mn(mn-2)=1} and \eqref{eq:mn(mn-2)=2}, we obtain that
$\ket{c_j}$'s are linearly independent. Hence,
\begin{eqnarray}
V
&=&
\lin\{\ket{\a_1},...,\ket{\a_{mn-1}}\}
\\\notag\\
&=&	
\lin\{\ket{a_1,b_1},...,\ket{a_{mn-1},b_{mn-1}}\}.
\end{eqnarray}
It follows that $V$ is spanned by product vectors.	
To prove the second claim, it suffices to construct an example in $\mathbb{C}^2\ox\mathbb{C}^2$. For example, let $V$ be the 2-dimensional subspace spanned by $\ket{00}$ and $\ket{01}+\ket{10}$. One can verify that $V$ has exact one product vector $\ket{00}$ up to a coefficient. This completes the proof. 
\end{proof}

Then we are ready to prove Lemma \ref{le:inertia}.

\textbf{Proof of Lemma \ref{le:inertia}.}
First, we write $W$ in spectral decomposition as
\begin{equation}
W=\sum_{i=1}^{mn}p_{i}\proj{a_{i}}\in \cB(\mathbb{C}_{m}\otimes\mathbb{C}_{n}).
\end{equation}
Suppose the inertia of $W$ is $(q,r,mn-q-r)$. Without loss of generality, we may assume that $p_i\le p_{i+1}$ for any $i$. Then the eigenvalues $p_i<0$ for $i\in[1,q]$, $p_j=0$ for $j\in[q+1,q+r]$, and $p_k>0$ for $k\in[r+q+1,mn]$. Since $W$ is an EW, we have $q\ge1$ and $mn-q-r\ge1$.

(i) The assertion follows from the definition of EW. 

(ii)  If $q\geq (m-1)(n-1)+1$, from Lemma \ref{le:dimprod} there is a product state $\ket{a,b}$ in the subspace spanned by $\{\ket{a_{1}},\ket{a_{2}},...,\ket{a_{q}}\}$. So $\bra{a,b}W\ket{a,b}<0$. It is a contradiction with the definition of EW. Hence, $q\le (m-1)(n-1)$. The decomposable EW containing exactly $(m-1)(n-1)$ negative eigenvalues has been constructed in \cite{Johnston2013Non}.

(iii) We prove the assertion by contradiction. Suppose $mn-q-r=1$. Lemma \ref{le:prodspan} implies that there is a product vector $\ket{a,b}$ orthogonal to $\ket{a_{mn}}$, and non-orthogonal to $\ket{a_1}$. We have
\begin{eqnarray}
0&\le &\bra{a,b}W\ket{a,b}	
\le 
p_1
\bra{a,b}
(\proj{a_1})
\ket{a,b}
<0.
\end{eqnarray}
We obtain a contradiction. Therefore, $mn-q-r>1$, namely that $W$ has at least two positive eigenvalues.

This completes the proof.
\qed

\section{Proofs of results in Sec. \ref{sec:inebiew}.}
\label{sec:proof2}

First we show the proof of Lemma \ref{le:rho+xid} as follows.

\textbf{Proof of Lemma \ref{le:rho+xid}.}
(i) Since $\ine(\r^\G)=(a,b,c)$, we may assume the spectral decomposition as
\begin{eqnarray}
\r^\G
&=&\sum^a_{i=1}\l_i\proj{\a_i}
+0\cdot \sum^b_{j=1}\proj{\b_j}
\notag\\\notag\\
&+&\sum^c_{k=1}\m_k\proj{\g_k}
,	
\end{eqnarray}
where $\l_i<0$ and $\m_k>0$. We choose $x>0$ such that
\begin{eqnarray}
x+\max_i\l_i<0.	
\end{eqnarray}
Therefore,
\begin{eqnarray}
\s^\G &=&\r^\G+xI_{mn} 
\notag\\\notag\\
&=&\sum^a_{i=1}(x+\l_i)\proj{\a_i}
+x\cdot \sum^b_{j=1}\proj{\b_j}
\notag\\\notag\\
&+&\sum^c_{k=1}(x+\m_k)\proj{\g_k}
.	
\end{eqnarray}
It follows that
$$
\ine(\s^\G)=(a,0,b+c).
$$
Furthermore, if $\rho$ is an NPT (separable) state, then $\s$ is also an NPT (separable) state. If $\rho$ is a PPTE state, there is an EW $W$ such that $\tr(W\rho)<0$. So we can choose $x>0$ such that $x+\max\limits_{i}\l_i<0$, and $\tr(W(\rho+x I))<0$. Thus $\s$ is also a PPTE state.

(ii) If $(a_1,b_1,c_1)\in\cN_{m_1,n_1}(\cP_{m_1,n_1},\cS_{m_1,n_1})$, it follows from (i) that 
$$(a_1,0,b_1+c_1)\in\cN_{m_1,n_1}(\cP_{m_2,n_2},\cS_{m_2,n_2}).$$
Suppose $\rho$ is an $m_1\times n_1$ NPT (PPTE, separable) state with $\ine(\rho^\G)=(a_1,0,b_1+c_1)$. Using the spectral decomposition we write $\rho^\G$ as
\beq
\label{eq:inerelationmn-1}
\rho^\G=\sum_{j=1}^{a_1}\l_j\proj{\psi_j}+\sum_{j=a_1+1}^{b_1+c_1}\m_j\proj{\psi_j},
\eeq
where $\l_j<0,\m_j>0$, and $\{\ket{\psi_j}\}_{j=1}^{m_1n_1}$ is an orthonormal basis of $\mathbb{C}^{m_1}\ox\mathbb{C}^{n_1}$. By adding proper zero rows and columns in the origional density matrix of $\rho$ we construct an $m_2\times n_2$ NPT (PPTE, separable) state $\tilde{\rho}$, and
$$\ine(\tilde{\rho}^\G)=(a_1,m_2n_2-m_1n_1,b_1+c_1).$$
We again write $\tilde{\rho}^\G$ in spectral decomposition as
\beq
\label{eq:inerelationmn-2}
\bal
\tilde{\rho}^\G&=\sum_{j=1}^{a_1}\l_j\proj{\tilde{\psi_j}}+\sum_{j=a_1+1}^{b_1+c_1}\m_j\proj{\tilde{\psi_j}}\\
&+0\cdot \sum_{j=1}^{m_2n_2-m_1n_1}\proj{\phi_j},
\eal
\eeq
where 
$\{\ket{\phi_j}\}_{j=1}^{m_2n_2-m_1n_1}$ is the set of product vectors $\ket{p,q}$ with either $m_1<p\leq m_2$ or $n_1< q\leq n_2$. Let
\beq
\label{eq:inerelationmn-3}
\sigma^\G:=\tilde{\rho}^\G+\sum_{x=1}^l\proj{j_x,k_x},
\eeq
where 
$\{\ket{j_x,k_x}\}_{x=1}^l\subseteq \{\ket{\phi_j}\}_{j=1}^{m_2n_2-m_1n_1}$ is a subset of any $l$ orthonormal product vectors. Thus, 
$$\ine(\sigma^\G)=(a_1,m_2n_2-m_1n_1-l,b_1+c_1+l)$$
for any $0\leq l\leq m_2n_2-m_1n_1$. Since the state $\s$ is obtained by adding product states into the density matrix of $\tilde{\rho}$, it follows that $\s$ is also an $m_2\times n_2$ NPT (PPTE, separable) state. 

This completes the proof.
\qed

The basic idea of Lemma \ref{le:rho+xid} is to add a convex combination of linearly independent product states into the original density matrix. Using the similar idea we can also determine the inertia sets $\cS_{m,n}$ and $\cP_{m,n}$ as by-products.

\begin{corollary}
Suppose $\r$ is an $m\times n$ PPT state. Then 
\begin{eqnarray}
\label{eq:inertia=sep}
\ine (\rho^\G) =\big(0,mn-r(\r^\G),r(\r^\G)\big).
\end{eqnarray}

(i) If $\r$ is separable then any given integer $r(\r^\G)\in[1,mn]$ exists.

(ii) If $\r$ is a PPTE state, and 
$$
k:=\max\{\n_0|(0,\n_0,\n_+)\in\cP_{m,n}\},
$$
then  $(0,mn-r\big(\r^\G),r(\r^\G)\big)\in\cP_{m,n}$ for any given integer $r(\r^\G)\in[mn-k,mn]$.
\end{corollary}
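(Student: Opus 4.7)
The identity $\ine(\rho^\G) = (0, mn - r(\rho^\G), r(\rho^\G))$ is essentially a tautology: a PPT state has $\rho^\G \ge 0$, so the count of negative eigenvalues is $0$ and the remaining two counts are forced by the definition of rank. What needs proof is the realizability of each allowed rank in both parts.

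For (i), I would simply exhibit a separable state of each allowed rank. Fix an orthonormal computational product basis $\{\ket{i,j}\}$ of $\mathbb{C}^m\otimes\mathbb{C}^n$ and, for any target $r\in[1,mn]$, set $\rho = \sum_{l=1}^{r}\proj{i_l,j_l}$ for $r$ distinct index pairs. Since the basis vectors are real, $\rho^\G=\rho$, and $\rho$ is manifestly separable with $r(\rho^\G)=r$.

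For (ii), my plan is to start at the extremal PPTE state and inflate the partial-transpose rank one dimension at a time by adding small multiples of product states, adapting Lemma \ref{le:rho+xid}(i) (which uses $xI_{mn}$) to the current setting. By definition of $k$, there exists a PPTE state $\rho_0$ with $\ine(\rho_0^\G)=(0,k,mn-k)$, giving the base case $r(\rho_0^\G)=mn-k$. Inductively, assume $\sigma$ is PPTE with $r(\sigma^\G)=r<mn$. Since $\cR(\sigma^\G)$ is a proper subspace of $\mathbb{C}^m\otimes\mathbb{C}^n$ and the set of product vectors spans the entire space, there is a product vector $\ket{c,b}\notin \cR(\sigma^\G)$. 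Setting $\ket{a}:=\ket{c^*}$ and $\sigma' := \sigma + x\proj{a,b}$ for $x>0$, one has $(\sigma')^\G = \sigma^\G + x\proj{c,b}$, which has rank exactly $r+1$, while $\sigma'$ is PPT as a sum of PPT operators. Iterating this single-step extension walks the partial-transpose rank from $mn-k$ up to any target $r\in[mn-k,mn]$.

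The main obstacle is to guarantee that each $\sigma'$ remains \emph{entangled} rather than slipping into the separable set. For this I would invoke a continuity argument: since $\sigma$ is PPTE, some EW $W$ satisfies $\tr(W\sigma)<0$, and for all sufficiently small $x>0$ one still has $\tr(W\sigma') = \tr(W\sigma) + x\bra{a,b}W\ket{a,b} < 0$, so $\sigma'$ is entangled. Choosing $x$ small enough at each step thereby produces a PPTE state with the prescribed inertia, completing the proof of (ii).
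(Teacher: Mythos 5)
Your proof is correct. The opening identity and part (i) coincide with the paper's argument: the paper likewise notes that $\rho^\G\geq 0$ forces the inertia, and realizes every rank $r\in[1,mn]$ with a diagonal separable state $\sum_{i,j}c_{ij}\proj{ij}$ having exactly $r$ positive coefficients, which is your construction. For part (ii) you take a genuinely different route. The paper starts from a PPTE state $\rho$ attaining $r(\rho^\G)=mn-k$, \emph{assumes} that $\cK(\rho^\G)$ is spanned by $k$ linearly independent product vectors $\ket{a_j,b_j}$, and sets $\sigma_p^\G:=\rho^\G+\sum_{j=1}^{p}\proj{a_j,b_j}$ with unit coefficients, so positivity of both $\sigma_p$ and $\sigma_p^\G$ is immediate and the rank jumps by $p$ in one step; the claim that $\sigma_p$ stays entangled is left to the reader. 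Your induction instead adds, at each step, a single term $x\proj{c^*,b}$ for an arbitrary product vector $\ket{c,b}\notin\cR(\sigma^\G)$ (which exists because product vectors span the whole space while $\cR(\sigma^\G)$ is proper), and uses the witness-continuity estimate $\tr(W\sigma')=\tr(W\sigma)+x\bra{c^*,b}W\ket{c^*,b}<0$ for small $x>0$ to keep the state entangled. This buys you two things: you never need the kernel of the extremal PPTE state to be spanned by product vectors (a structural claim the paper does not justify), and the preservation of entanglement is made explicit rather than asserted, mirroring the mechanism already used in the paper's proof of Lemma \ref{le:rho+xid}(i). The price is a step-by-step induction with a smallness condition at every stage instead of a one-shot construction. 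Both arguments are ultimately instances of the same ``inflate the rank by adding independent product states'' idea, and yours is, if anything, the more airtight of the two.
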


\begin{proof}
Since $\r$ is a PPT state, $\r^\G$ has no negative eigenvalue. Hence, Eq. \eqref{eq:inertia=sep} holds. 

(i) It suffices to construct specific examples to prove this assertion. Let
\begin{eqnarray}
\r=
\sum^m_{i=1}\sum^n_{j=1}	
c_{ij}\proj{ij},
\end{eqnarray}	
where $c_{ij}$'s are non-negative real numbers, and exact $p\in[1,mn]$ elements of $\{c_{ij}\}$ are positive. It follows that $r(\r)=r(\r^\G)=p$ for any $p\in[1,mn]$.

(ii) Suppose $\rho$ is a PPTE state which satisfies $r(\r^\G)=mn-k$. We write $\rho^\G$ in spectral decomposition as
\beq
\label{eq:ppteine}
\r^\G=\sum_{j=1}^{mn-k} \l_j\proj{\psi_j},
\eeq 
where $\l_j$'s are positive.
It follows that $\cR(\r^\G)=\lin\{\ket{\psi_j}\}_{j=1}^{mn-k}$. Thus we can assume the kernel of $\r^\G$ is spanned by $k$ linearly independent product vectors, i.e., 
$$\cK(\r^\G)=\lin\{\ket{a_j,b_j}\}_{j=1}^k.$$ Hence, for any $1\leq p\leq k$ we define
\beq
\label{eq:ppteine-1}
\s_p^\G:=\sum_{j=1}^{mn-k} \l_j\proj{\psi_j}+\sum_{j=1}^p\proj{a_j,b_j}.
\eeq 
One can verify $\s_p$ is a PPTE state which satisfies $r(\s_p^\G)=mn-k+p, ~\forall 1\leq p\leq k$.

This completes the proof.
\end{proof}

Second we provide the proof of Lemma \ref{le:mxnkerprod} as follows.

\textbf{Proof of Lemma \ref{le:mxnkerprod}.}
Using the spectral decomposition we can write $\rho^\G$ as
\beq
\label{eq:mxnkerprod-1}
\rho^\G=-\sum_{j=1}^k\proj{v_j}+\sum_{j=k+1}^{mn-d}\proj{v_j},
\eeq
where $\{\ket{v_j}\}_{j=1}^{mn-d}$ are pairwisely orthogonal. Assume $\cK(\r^\G)=\lin\{\ket{u_1},\cdots,\ket{u_d}\}$. It follows from Lemma \ref{le:dimprod} that any subspace of $\mathbb{C}^m\ox\mathbb{C}^n$ whose dimension is $(m-1)(n-1)+1$ contains at least one product vector. Thus, there exist proper coefficients such that 
\beq
\label{eq:mxnkerprod-2}
\ket{a_1,b_1}=\sum_{i=1}^k x_i\ket{v_i}+\sum_{j=1}^{(m-1)(n-1)+1-k}y_j\ket{u_j}.
\eeq
Since $\rho^\G$ is an EW, it follows that $\bra{a_1,b_1}\rho^\G\ket{a_1,b_1}\geq 0$. Thus we conclude that $x_i=0,\forall 1\leq i\leq k$ in Eq. \eqref{eq:mxnkerprod-2}. That is
\beq
\label{eq:mxnkerprod-2.1}
\ket{a_1,b_1}=\sum_{j=1}^{(m-1)(n-1)+1-k}y_j\ket{u_j}\in\cK(\rho^\G).
\eeq
Up to a permutation of $\{\ket{u_j}\}_{j=1}^{(m-1)(n-1)+1-k}$, we can assume $y_1\neq 0$. In the same way, there exist proper coefficients such that 
\beq
\label{eq:mxnkerprod-3}
\ket{a_2,b_2}=\sum_{j=2}^{(m-1)(n-1)+2-k}y_j\ket{u_j}\in\cK(\rho^\G).
\eeq
Similarly we assume that $y_2\neq 0$. Repeating this process we obtain that there are at least $l$ linearly independent product vectors in $\cK(\r^\G)$. Moreover, we conclude that 
\beq
\label{eq:kerprod-4}
\bal
\cK(\rho)&=\lin\{\ket{u_1},\cdots,\ket{u_d}\}\\
&=\lin\{\ket{a_1,b_1},\cdots,\ket{a_l,b_l},\ket{u_{l+1}},\cdots,\ket{u_d}\}.
\eal
\eeq

This completes the proof.
\qed

Third we show the proof of Theorem \ref{le:n-1negative} as follows.

\textbf{Proof of Theorem \ref{le:n-1negative}.}
(i) We first prove the "Only if" part. It is equivalent to prove the claim that if the sequence $(a,b,c)$ satisfies that either $(a,b-2,c)\in\cN_{2,n}$ or $(a,b-1,c-1)\in\cN_{2,n}$, then $(a,b,c)\in\cN_{2,n}$. If $(a,b-2,c)\in\cN_{2,n-1}$, then $(a,b,c)\in\cN_{2,n}$ naturally. If $(a,b-1,c-1)\in\cN_{2,n-1}$, then $(a,b+1,c-1)\in\cN_{2,n}$ naturally. Suppose $\sigma$ is a $2\times n$ state, and $\ine(\sigma^\G)=(a,b+1,c-1)$. Since $a+b+1>n$, from Lemma \ref{le:mxnkerprod} there is a product vector in $\cK(\sigma^\G)$, namely $\ket{e,f}$. Let 
$$\tilde{\sigma}:=\sigma+\proj{e^*,f}.$$ It follows that $\ine(\tilde{\sigma}^\G)=(a,b,c)$, and thus $(a,b,c)\in\cN_{2,n}$. So the "Only if" part holds.

Second we prove the "If" part by contradiction. Assume that there is a $2\times n$ NPT state $\rho$ such that $\ine(\rho^\G)=(a,b,c)$. Since $a+b>n-1$, it follows from Lemma \ref{le:mxnkerprod} that there is a product vector in $\cK(\rho^\Gamma)$. Thus we can assume $\ket{0,0}\in\cK(\rho^\G)$ up to SLOCC equivalence. Also, we obtain that $\ket{0,0}\in\cK(\rho)$. Hence, the matrix of $\rho^\G$ is as follows. 
\beq
\label{eq:n-1neg-1}
\rho^\G=
\left[
\begin{array}{c|c}
M_{11} & M_{12}  \\ \hline 
M_{21} & M_{22}
\end{array}
\right],
\eeq
where
\beq
\label{eq:n-1neg-2}
\bal
M_{11}&=
\bma
0 & 0 & 0 & \cdots & 0 \\
0 & \rho_{22} & \rho_{23} & \cdots & \rho_{2n}\\
0 & \rho_{32} & \rho_{33} & \cdots & \rho_{3n}\\
\vdots & \vdots & \vdots & \vdots & \vdots \\
0 & \rho_{n2} & \rho_{n3} & \cdots & \rho_{nn}\\
\ema,\\
M_{12}&=
\bma
0 & 0 & 0 & \cdots & 0 \\
0 & \rho_{2(n+2)} & \rho_{2(n+3)} & \cdots & \rho_{2(2n)}\\
0 & \rho_{3(n+2)} & \rho_{3(n+3)} & \cdots & \rho_{3(2n)}\\
\vdots & \vdots & \vdots & \vdots & \vdots \\
0 & \rho_{n(n+2)} & \rho_{n(n+3)} & \cdots & \rho_{n(2n)}\\
\ema,\\
M_{21}&=
\bma
0 & 0 & 0 & \cdots & 0 \\
0 & \rho_{(n+2)2} & \rho_{(n+2)3} & \cdots & \rho_{(n+2)n}\\
0 & \rho_{(n+3)2} & \rho_{(n+3)3} & \cdots & \rho_{(n+3)n)}\\
\vdots & \vdots & \vdots & \vdots & \vdots \\
0 & \rho_{(2n)2} & \rho_{(2n)3} & \cdots & \rho_{(2n)n}\\
\ema,\\
M_{22}&=
\bma
\rho_{(n+1)(n+1)} & \rho_{(n+1)(n+2)} & \cdots & \rho_{(n+1)(2n)} \\
\rho_{(n+2)(n+1)} & \rho_{(n+2)(n+2)} & \cdots & \rho_{(n+2)(2n)}\\
\rho_{(n+3)(n+1)} & \rho_{(n+3)(n+2)} & \cdots & \rho_{(n+3)(2n))}\\
\vdots & \vdots & \vdots & \vdots \\
\rho_{(2n)(n+1)} & \rho_{(2n)(n+2)}  & \cdots & \rho_{(2n)(2n)}\\
\ema.\\
\eal
\eeq

If $\rho_{(n+1)(n+1)}=0$, since $\rho$ is positive semidefinite, we conclude that
$$
\rho_{(n+1)j}=\rho_{j(n+1)}=0, \quad \forall n+2\leq j\leq 2n.
$$ 
It implies that $\ket{1,0}\in\cK(\rho^\G)$, and $\ket{1,0}\in\cK(\rho)$. Thus, $\rho$ is indeed a $2\times (n-1)$ state up to a local projector. It implies that $(a,b-2,c)\in\cN_{2,n-1}$. It contradicts with the condition (i.a) that $(a,b-2,c)\not\in\cN_{2,n-1}$. 
Next, we consider $\rho_{(n+1)(n+1)}>0$.
Using a locally invertible operator $I_2\ox V$, where $V$ is an $n\times n$ invertible matrix, we obtain that 
\beq
\label{eq:n-1neg-3}
(I\ox V)\rho^\G(I\ox V^\dg)=
\left[
\begin{array}{c|c}
M_{11} & M_{12}  \\ \hline 
M_{21} & M_{22}'
\end{array}
\right],
\eeq
where
\beq
\label{eq:n-1neg-3.1}
M_{22}'=
\bma
\rho_{(n+1)(n+1)} & 0 & \cdots & 0 \\
0 & \rho_{(n+2)(n+2)}' & \cdots & \rho_{(n+2)(2n)}'\\
0 & \rho_{(n+3)(n+2)}' & \cdots & \rho_{(n+3)(2n))}'\\
\vdots & \vdots & \vdots & \vdots \\
0 & \rho_{(2n)(n+2)}' & \cdots & \rho_{(2n)(2n)}'\\
\ema.
\eeq
Since the inertia is invariant under invertible operations by Sylvester Theorem,
we conclude that the inertia of 
\beq
\label{eq:n-1neg-4}
\sigma^\G=
\bma
\rho_{22} & \cdots & \rho_{2n} & \rho_{2(n+2)} & \cdots & \rho_{2(2n)} \\
\rho_{32} & \cdots & \rho_{3n} & \rho_{3(n+2)} & \cdots & \rho_{3(2n)} \\
\vdots & \vdots & \vdots & \vdots & \vdots & \vdots \\
\rho_{n2} & \cdots & \rho_{nn} & \rho_{n(n+2)} & \cdots & \rho_{n(2n)} \\
\rho_{(n+2)2} & \cdots & \rho_{(n+2)n} & \rho_{(n+2)(n+2)}' & \cdots & \rho'_{(n+2)(2n)} \\
\rho_{(n+3)2} & \cdots & \rho_{(n+3)n} & \rho_{(n+3)(n+2)}' & \cdots & \rho'_{(n+3)(2n)} \\
\vdots & \vdots & \vdots & \vdots & \vdots & \vdots \\
\rho_{(2n)2} & \cdots & \rho_{(2n)n} & \rho'_{(2n)(n+2)} & \cdots & \rho'_{(2n)(2n)}
\ema
\eeq
is $(a,b-1,c-1)$. Since $\sigma$ is a $2\times (n-1)$ state, it follows that $(a,b-1,c-1)\in\cN_{2,n-1}$. It contradicts with the condition (i.b) that $(a,b-1,c-1)\not\in\cN_{2,n-1}$. Therefore, we conclude that $(a,b,c)\not\in\cN_{2,n}$. So the "If" part holds. 

(ii) Since $\ine(\rho^\G)=(a,b,c)$ with $a+b>n-1$, it follows from Lemma \ref{le:mxnkerprod} that there is a product vector in $\cK(\rho^\G)$. Up to SLOCC equivalence, we may assume $\ket{0,0}\in\cK(\rho^\G)$. Thus, one can similarly write $\rho^\G$ in the form as Eq. \eqref{eq:n-1neg-1}. For the entry $\rho_{(n+1)(n+1)}$ in $M_{22}$ given by \eqref{eq:n-1neg-2}, if it is positive, then we can transform $\rho$ into a $2\times (n-1)$ state $\sigma$ by using a locally invertible operation, namely Eq. \eqref{eq:n-1neg-3}. Similarly, $\sigma^\G$ expressed by Eq. \eqref{eq:n-1neg-4} has inertia $(a,b-1,c-1)$. It contradicts with $(a,b-1,c-1)\not\in\cN_{2,n-1}$. Therefore, we conclude that $\rho_{(n+1)(n+1)}=0$. It implies $\ket{1,0}\in\cK(\rho^\G)$. Thus, $\rho$ is indeed a $2\times(n-1)$ state up to a local projector, and $(a,b-2,c)\in\cN_{2,n-1}$. So assertion (ii) holds.

This completes the proof.
\qed

Fourth, we show the proof of Corollary \ref{th:2x3ine} as follows.

\textbf{Proof of Corollary \ref{th:2x3ine}.}
(i) It follows from Lemma \ref{le:pure+upper} (iii) that such a state $\rho$ whose partial transpose contains exact $(n-1)$ negative eigenvalues exists. Then we show $\ine(\rho^\G)$ can only be $(n-1,0,n+1)$. First for any $k>0$ the sequence $(n-1,k,n+1-k)$ satisfies the condition $n-1+k>n-1$. Second it follows from Lemma \ref{le:inertia} (ii) that the partial transpose of any $2\times (n-1)$ NPT state has at most $(n-2)$ negative eigenvalues. Hence, we conclude that for any $k>0$,
\beq
\label{eq:n-1ine-1}
\bal
&(n-1,k-2,n+1-k)\not\in\cN_{2,n-1},\\
&(n-1,k-1,n-k)\not\in\cN_{2,n-1}.
\eal
\eeq
Note that if $k-2<0$ or $n+1-k<0$, then such inertia $(n-1,k-2,n+1-k)$ naturally does not exist. Similarly if $k-1<0$ or $n-k<0$, then such inertia $(n-1,k-1,n-k)$ does not exist.
Therefore, it follows from Theorem \ref{le:n-1negative} (i) that $(n-1,k,n+1-k)\not\in\cN_{2,n}$ for any $k>0$.

(ii) It follows from Lemma \ref{le:inertia} that the number of negative eigenvalues of $\rho^\G$ is either one or two, and the number of positive eigenvalues of $\rho^\G$ lies in $[2,5]$. Thus, $\ine(\rho^\G)$ can only be the following seven sequences:
\beq
\label{eq:2x3pseq}
\bal
&(1,3,2), (1,2,3), (1,1,4), (1,0,5),\\
&(2,2,2), (2,1,3), (2,0,4).
\eal
\eeq
First we construct concrete examples to show the four inertias in Eq. \eqref{eq:2x3ptineexist} exist.
\beq
\label{eq:2x3existseqex-1}
\bal
\r_1&=(\ket{00}+\ket{11})(\bra{00}+\bra{11}),\\
\r_2&=(\ket{00}+\ket{11})(\bra{00}+\bra{11})+\proj{02},\\
\r_3&=(\ket{00}+\ket{11})(\bra{00}+\bra{11})+\frac{1}{10}I_6,\\
\r_4&=(\ket{00}+\ket{11})(\bra{00}+\bra{11})\\
    &+(\ket{01}+\ket{12})(\bra{01}+\bra{12}).
\eal
\eeq
One can verify
\beq
\label{eq:2x3existseqex-2}
\bal
\ine(\r_1)&=(1,2,3), \quad \ine(\r_2)&=(1,1,4),\\
\ine(\r_3)&=(1,0,5),\quad \ine(\r_4)&=(2,0,4).
\eal
\eeq

Second we exclude other three sequences in \eqref{eq:2x3pseq}. It follows from Theorem \ref{cr:twoqubit} that $\cN_{2,2}=\{(1,0,3)\}$. Then from Theorem \ref{le:n-1negative} (i) the three sequences $(1,3,2),(2,2,2),(2,1,3)$ do not belong to $\cN_{2,3}$.


(iii) We prove it by contradiction. Assume $r(\rho_B)=k(\neq j+1)$. If $k< j+1$, then $r(\rho_{AB}^{\G})\leq 2k<2(j+1)$. It contradicts with $\ine(\rho_{AB}^\G)=(j,2(n-1-j),j+2)$. If $k> j+1\geq 2$, then we can assume $\ine(\rho_{AB}^\G)=(j,2(k-1-j),j+2)$ by taking $\rho_{AB}$ as a $2\times k$ state. Since $j<k-1$, from \eqref{eq:numcN2n-3} we obtain $(j,2(k-2-j),j+2)\in\cN_{2,k-1}$. Moreover, for any inertia $(a,b,c)\in\cN_{2,n},~\forall n\geq 2$, an observation from \eqref{eq:numcN2n-3} is that $c-a\geq 2$. Hence, $(j,2(k-2-j)+1,j+1)\not\in\cN_{2,k-1}$. Straightforward calculation yields that $j+2(k-1-j)>k-1$ from $k> j+1$. It follows from Theorem \ref{le:n-1negative} (ii) that $\rho_{AB}$ can be regarded as $2\times (k-1)$ state. It implies $r(\rho_B)\leq k-1$. It contradicts with the assumption $r(\rho_B)=k$. Therefore, we conclude that $r(\rho_B)=j+1$.

This completes the proof.
\qed

Finally we provide the proof of Theorem \ref{le:numcN2n-1} as follows.

\textbf{Proof of Theorem \ref{le:numcN2n-1}.}
First we show the $(n-1)^2$ sequences in \eqref{eq:numcN2n-3} belong to $\cN_{2,n}$.
It follows from Corollary \ref{th:2x3ine} (i) that 
$$
(j-1,0,j+1)\in\cN_{2,j},~\forall 2\leq j\leq n.
$$
Then from Lemma \ref{le:rho+xid} (ii) we conclude that $\forall 2\leq j\leq n$,
\beq
\label{eq:numcN2n-2}
(j-1,2(n-j)-l,j+1+l)\in\cN_{2,n},~\forall 0\leq l\leq 2(n-j).
\eeq 
Thus the number of distinct inertias in $\cN_{2,n}$ is at least
$$
\sum_{j=2}^n \big(2(n-j)+1\big)=(n-1)^2.
$$

Second we show except the $(n-1)^2$ sequences in \eqref{eq:numcN2n-3} there is no other inertia in $\cN_{2,n}$. We prove this claim using mathematical induction. First, it follows from Theorem \ref{cr:twoqubit} and Corollary \ref{th:2x3ine} (ii) that this claim holds for $n=2,3$. Assume $\abs{\cN_{2,n}}=(n-1)^2$ holds for $n=k$. Next, we need to show $\abs{\cN_{2,n}}=(n-1)^2$ holds for $n=k+1$. From \eqref{eq:numcN2n-2}, it is equivalent to prove that 
\begin{eqnarray}
\label{eq:(j-1,b,c)}
(j-1,b,c)\not\in\cN_{2,k+1},	
\end{eqnarray}
for any $2\leq j\leq k+1$, 
where $b>2(k+1-j)$ and $b+c=2(k+1)+1-j$. Straightforward computation yields that 
$
j-1+b>2k+1-j\geq k.
$
Thus we can apply Theorem \ref{le:n-1negative} (i) to prove \eqref{eq:(j-1,b,c)}. This is equivalent to prove that
\begin{eqnarray}
\label{eq:(j-1,b-2,c)}
(j-1,b-2,c),\quad
(j-1,b-1,c-1)\not\in\cN_{2,k},	
\end{eqnarray}
for any $2\leq j\leq k+1$, 
where $b>2(k+1-j)$ and $b+c=2(k+1)+1-j$. 

According to the induction hypothesis we obtain 
\beq
\label{eq:induction-1}
\bal
\cN_{2,k}&=\bigg\{(j-1,2(k-j)-l,j+1+l)\bigg| \\
&\forall 0\leq l\leq 2(k-j), ~\forall 2\leq j\leq k\bigg\}.
\eal
\eeq
Thus for any $2\leq j\leq k$, $(j-1,b-2,c)\in\cN_{2,k}$ if and only if $0\leq b-2\leq 2(k-j)$. This is a contradiction with the condition $b>2(k+1-j)$ below \eqref{eq:(j-1,b-2,c)}. Similarly for any $2\leq j\leq k$, using Eq. \eqref{eq:induction-1} we obtain that $(j-1,b-1,c-1)\in\cN_{2,k}$ if and only if $0\leq b-1\leq 2(k-j)$. So we obtain the same contradiction. We have proven \eqref{eq:(j-1,b-2,c)} for $2\le j\le k$.

It remains to prove \eqref{eq:(j-1,b-2,c)}  
for $j=k+1$. It follows from Lemma \ref{le:inertia} (ii) that the partial transpose of any $2\times k$ NPT state has at most $k-1$ negative eigenvalues. Thus, both $(j-1,b-2,c)$ and $(j-1,b-1,c-1)$ do not belong to $\cN_{2,k}$ if $j=k+1$. We have proven \eqref{eq:(j-1,b-2,c)} for $j=k+1$. Combining with the last paragraph, we have proven \eqref{eq:(j-1,b-2,c)}. The equivalence of \eqref{eq:(j-1,b,c)} and \eqref{eq:(j-1,b-2,c)} implies that
$\abs{\cN_{2,n}}=(n-1)^2$ holds for $n=k+1$.

To sum up, according to mathematical induction we conclude that $\abs{\cN_{2,n}}=(n-1)^2$ for any $n\geq 2$. This completes the proof.
\qed

\section{Proofs of results in Sec. \ref{sec:app}.}
\label{sec:proof3}

First we show the following results.

\begin{lemma}
\label{le:sloccequiv}
(i) If two $n$-partite mixed states of system $A_1,...,A_n$ are SLOCC equivalent, then their partial transposes with respect to any $k$-partite subsystem $A_{j_1},....,A_{j_k}$ are SLOCC equivalent.

(ii) Suppose $\rho_{AB}$ and $\sigma_{AB}$ are both $2\times n$ NPT states of system $A,B$. If $\rho_{AB}^\G$ and $\sigma_{AB}^\G$ have different inertias, then the partial transposes of $\rho^{\ox N}$ and $\sigma^{\ox N}$ still have different inertias for any $N$ copies.
\end{lemma}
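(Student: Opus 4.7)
The plan is to unpack the definition of mixed-state SLOCC equivalence, namely $\rho = Y\sigma Y^\dagger$ for some product invertible $Y = Y_1\ox\cdots\ox Y_n$, and combine it with the elementary algebraic identity
\beq
\bigl[(A\ox B)\,\tau\,(A^\dagger\ox B^\dagger)\bigr]^\G = (A^*\ox B)\,\tau^\G\,(A^T\ox B^\dagger),
\eeq
applied iteratively to each of the subsystems $A_{j_1},\ldots,A_{j_k}$ being transposed. This immediately yields $\rho^{\G_S} = \widetilde Y\,\sigma^{\G_S}\,\widetilde Y^\dagger$, where $\widetilde Y$ is obtained from $Y$ by replacing the factor on each transposed subsystem with its complex conjugate. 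Because $(Y_{j_s}^*)^{-1} = (Y_{j_s}^{-1})^*$, invertibility is preserved and $\widetilde Y$ remains a product invertible operator, giving the claimed SLOCC equivalence. This part is purely algebraic and I do not expect any serious difficulty.

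\textbf{Part (ii), strategy.} The starting observation is the identity $(\rho^{\ox N})^\G = (\rho^\G)^{\ox N}$ after the obvious regrouping of the $A$- and $B$-copies, so the problem reduces to comparing inertias of tensor powers. Writing $\ine(\rho_i^\G) = (a_i,b_i,c_i)$ with $a_i+b_i+c_i = 2n$, every eigenvalue of $(\rho_i^\G)^{\ox N}$ is a product of $N$ eigenvalues of $\rho_i^\G$; it is negative precisely when an odd number of factors are negative and none are zero, zero precisely when at least one factor is zero, and positive otherwise. A short binomial evaluation, which I will not grind out, turns these counts into
\beq
\ine\!\left((\rho_i^\G)^{\ox N}\right) = \Bigl(\tfrac{p_i^N - q_i^N}{2},\ (2n)^N - p_i^N,\ \tfrac{p_i^N + q_i^N}{2}\Bigr),
\eeq
where $p_i := a_i + c_i$ and $q_i := c_i - a_i$.

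\textbf{Part (ii), finishing and the main subtlety.} Since $(a_i,c_i)\leftrightarrow(p_i,q_i)$ is a bijection, the hypothesis $\ine(\rho^\G)\neq\ine(\sigma^\G)$ forces $(p_1,q_1)\neq(p_2,q_2)$. If $p_1\neq p_2$, the zero-eigenvalue count $(2n)^N - p_i^N$ already distinguishes the two tensor-power inertias. Otherwise $p_1 = p_2$ and $q_1\neq q_2$, and the crux is to deduce $q_1^N\neq q_2^N$. This is the one place I expect to need a little care, because a priori one could fear $q_1 = -q_2$ or $q_i = 0$; the argument rules this out by invoking Theorem \ref{le:numcN2n-1}, whose explicit enumeration of $\cN_{2,n}$ shows that every triple there satisfies $c - a \geq 2$, so $q_1,q_2\geq 2$ and hence $q_1^N\neq q_2^N$. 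Apart from this positivity check, the remainder is bookkeeping.
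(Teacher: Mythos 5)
Your proposal is correct and follows essentially the same route as the paper: part (i) is the same algebraic verification that conjugation by a product invertible operator commutes with partial transposition up to conjugating/transposing the factors on the transposed subsystems, and part (ii) uses the identity $(\rho^{\ox N})^\G=(\rho^\G)^{\ox N}$ together with the binomial count of negative/positive eigenvalues and the fact from Theorem \ref{le:numcN2n-1} that $c-a\geq 2$ for every inertia in $\cN_{2,n}$ to rule out the sign ambiguity. Your use of the zero-eigenvalue count $(2n)^N-p_i^N$ as a separate distinguishing invariant is a minor (and clean) reorganization of the paper's argument, which instead extracts $a_1=a_2$ and $c_1=c_2$ directly from the equality of $\n_{\pm}$.
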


\begin{proof}
(i) Suppose $\rho$ and $\sigma$ are two $n$-partite mixed states of system $A_1,...,A_n$, and they are SLOCC equivalent. Let $\rho^\G$ and $\sigma^\G$ be the partial transposes of $\rho$ and $\sigma$ respectively, with respect to first $k$-partite subsystem $A_1,\cdots, A_k$. Up to a permutation of subsystems, it suffices to show that $\rho^\G$ and $\sigma^\G$ are SLOCC equivalent. By Definition \ref{df:equivalence} there is a locally invertible operator 
$$
X=V_1\ox V_2\ox \cdots\ox V_n
$$ 
such that $X\rho X^\dg=\sigma$. Let
$$
X^\G:=V_1^T\ox \cdots\ox V_k^T\ox V_{k+1}\ox\cdots\ox V_n.
$$ 
One can verify $(X^\G)^\dg\rho^\G X^\G=\sigma^\G$. Therefore, $\rho^\G$ and $\sigma^\G$ are SLOCC equivalent.

(ii) Denote $\ine(\rho_{AB}^\G)=(a_1,b_1,c_1)$ and $\ine(\sigma_{AB}^\G)=(a_2,b_2,c_2)$. It follows that
$$
(\rho_{AB}^{\ox N})^\G=(\rho_{AB}^\G)^{\ox N}, \quad   (\sigma_{AB}^{\ox N})^\G=(\sigma_{AB}^\G)^{\ox N}.
$$
Straightforward calculation yields that
\beq
\label{eq:ncopyines-1}
\bal
\n_-\big((\rho_{AB}^{\ox N})^\G\big)&=\sum_{k-odd} {N\choose k} a_1^kc_1^{N-k}\\
&=\frac{(a_1+c_1)^N-(a_1-c_1)^N}{2},\\
\n_+\big((\rho_{AB}^{\ox N})^\G\big)&=\sum_{k-even} {N\choose k} a_1^kc_1^{N-k}\\
&=\frac{(a_1+c_1)^N+(a_1-c_1)^N}{2}.\\
\eal
\eeq
Similarly we obtain
\beq
\label{eq:ncopyines-2}
\bal
\n_-\big((\sigma_{AB}^{\ox N})^\G\big)&=\frac{(a_2+c_2)^N-(a_2-c_2)^N}{2},\\
\n_+\big((\sigma_{AB}^{\ox N})^\G\big)&=\frac{(a_2+c_2)^N+(a_2-c_2)^N}{2}.\\
\eal
\eeq
If $\ine\big((\rho_{AB}^{\ox N})^\G\big)=\ine\big((\sigma_{AB}^{\ox N})^\G\big)$, then 
\beq
\label{eq:ncopyines-3}
\bal
\n_-\big((\rho_{AB}^{\ox N})^\G\big)&=\n_-\big((\sigma_{AB}^{\ox N})^\G\big),\\
\n_+\big((\rho_{AB}^{\ox N})^\G\big)&=\n_+\big((\sigma_{AB}^{\ox N})^\G\big).
\eal
\eeq
From \eqref{eq:numcN2n-3} we have $a_1<c_1$ and $a_2<c_2$. Hence, Eq. \eqref{eq:ncopyines-3} is equivalent to $a_1=a_2$ and $c_1=c_2$. It implies $\ine(\rho_{AB}^\G)=\ine(\sigma_{AB}^\G)$. We obtain a contradiction. Therefore, assertion (ii) holds.

This completes the proof.
\end{proof}

Second we present the proof of Theorem \ref{le:xstate} as follows.

\textbf{Proof of Theorem \ref{le:xstate}.}
Denote by $\rho_X$ the $2\times n$ X-state. The density matrix of an arbitrary $2\times n$ X-state can be parametrized as
\beq
\label{eq:2xNX}
\rho_X=
\bma
\begin{array}{c|c}
M_{11} & M_{12} \\ \hline
M_{12}^\dg & M_{22}
\end{array}
\ema,
\eeq
where 
\beq
\label{eq:2xNX-1}
\bal
M_{11}&=\diag(a_1,a_2,\cdots,a_n), \\
M_{22}&=\diag(b_n,b_{n-1},\cdots,b_1),\\
M_{12}&=
\bma
 0 & \cdots & 0 & r_1 e^{i\t_1} \\
 0 & \cdots & r_2 e^{i\t_2} & 0 \\
 \vdots & \iddots & \vdots  & \vdots  \\
 r_n e^{i\t_n} & \cdots & 0 & 0
\ema,
\eal
\eeq
and for all $j$, $a_j,b_j,r_j$ are non-negative real numbers.
With a proper permutation matrix $P$, we have $P\rho_X P^\dg=\op_{j=1}^n B_j$, where $B_j=
\bma
a_j & r_j e^{i\t_j} \\
r_j e^{-i\t_j} & b_j
\ema$.
Thus the eigenvalues of $\rho_X$ can be formulated as
\beq
\label{eq:2xNX-2}
\bal
\l_j^+ &= \frac{a_j+b_j}{2}+\sqrt{r_j^2+d_j^2},\\
\l_j^- &= \frac{a_j+b_j}{2}-\sqrt{r_j^2+d_j^2},\\
\eal
\eeq
where $d_j=\frac{a_j-b_j}{2}$ for any $j$. Since $\rho$ is positive semidefinite, it follows that $\forall j$, $\l_j^+$ and $\l_j^-$ are non-negative. This is equivalent to 
\begin{eqnarray}
\label{eq:rj<=sqrtajbj}
r_j\leq \sqrt{a_jb_j}, ~\forall j.	
\end{eqnarray}
Since $\rho_X^\Gamma$ is still an X-type matrix, one can similarly formulate the eigenvalues of $\rho_X^\Gamma$ as
\beq
\label{eq:2xNX-3}
\bal
\m_j^+ &= \frac{a_j+b_j}{2}+\sqrt{r_{n+1-j}^2+d_j^2},\\
\m_j^- &= \frac{a_j+b_j}{2}-\sqrt{r_{n+1-j}^2+d_j^2}.\\
\eal
\eeq
It follows from Eq. \eqref{eq:2xNX-3} that $\m_j^+\geq 0,\forall j$, and $\m_j^-$ is negative if $r_{n+1-j}>\sqrt{a_j b_j}$. Using this inequality and \eqref{eq:rj<=sqrtajbj}, the number of negative eigenvalues of $\rho_X^\Gamma$ is that of $r_{n+1-j}$ satisfying 
\begin{eqnarray}
\label{eq:an+1-j}	
\sqrt{a_{n+1-j}b_{n+1-j}}\geq r_{n+1-j}>\sqrt{a_j b_j}.
\end{eqnarray}
To satisfy the inequality, we can exclude the case that $n$ is odd and $j=\lc \frac{n}{2}\rc$. Next, we obtain two inequalities by setting $j=k$ and $j=n+1-k$ for every $k\le\lf \frac{n}{2} \rf$ in \eqref{eq:an+1-j}. One can verify that at most one of the two inequalities holds. So the number of negative eigenvalues of $\r^\G_X$ is at most $\lf \frac{n}{2} \rf$. For a fixed $k\leq \lf \frac{n}{2} \rf$, by choosing proper parameters we can make \eqref{eq:an+1-j} hold if and only if $1\leq j\leq k$. Thus the corresponding X-state given by Eq. \eqref{eq:2xNX} is one whose partial transpose has $k$ negative eigenvalues.
This completes the proof.
\qed



\bibliography{witness}

\begin{thebibliography}{71}
\expandafter\ifx\csname natexlab\endcsname\relax\def\natexlab#1{#1}\fi
\expandafter\ifx\csname bibnamefont\endcsname\relax
  \def\bibnamefont#1{#1}\fi
\expandafter\ifx\csname bibfnamefont\endcsname\relax
  \def\bibfnamefont#1{#1}\fi
\expandafter\ifx\csname citenamefont\endcsname\relax
  \def\citenamefont#1{#1}\fi
\expandafter\ifx\csname url\endcsname\relax
  \def\url#1{\texttt{#1}}\fi
\expandafter\ifx\csname urlprefix\endcsname\relax\def\urlprefix{URL }\fi
\providecommand{\bibinfo}[2]{#2}
\providecommand{\eprint}[2][]{\url{#2}}

\bibitem[{\citenamefont{Einstein et~al.}(1935)\citenamefont{Einstein, Podolsky,
  and Rosen}}]{EPR1947}
\bibinfo{author}{\bibfnamefont{A.}~\bibnamefont{Einstein}},
  \bibinfo{author}{\bibfnamefont{B.}~\bibnamefont{Podolsky}}, \bibnamefont{and}
  \bibinfo{author}{\bibfnamefont{N.}~\bibnamefont{Rosen}},
  \bibinfo{journal}{Phys. Rev.} \textbf{\bibinfo{volume}{47}},
  \bibinfo{pages}{777} (\bibinfo{year}{1935}),
  \urlprefix\url{https://link.aps.org/doi/10.1103/PhysRev.47.777}.

\bibitem[{\citenamefont{Schr{\"o}dinger}(1935)}]{Sch1935}
\bibinfo{author}{\bibfnamefont{E.}~\bibnamefont{Schr{\"o}dinger}},
  \bibinfo{journal}{Naturwissenschaften} \textbf{\bibinfo{volume}{23}},
  \bibinfo{pages}{823} (\bibinfo{year}{1935}), ISSN \bibinfo{issn}{1432-1904},
  \urlprefix\url{https://doi.org/10.1007/BF01491914}.

\bibitem[{\citenamefont{Horodecki et~al.}(2009)\citenamefont{Horodecki,
  Horodecki, Horodecki, and Horodecki}}]{entrmp2009}
\bibinfo{author}{\bibfnamefont{R.}~\bibnamefont{Horodecki}},
  \bibinfo{author}{\bibfnamefont{P.}~\bibnamefont{Horodecki}},
  \bibinfo{author}{\bibfnamefont{M.}~\bibnamefont{Horodecki}},
  \bibnamefont{and}
  \bibinfo{author}{\bibfnamefont{K.}~\bibnamefont{Horodecki}},
  \bibinfo{journal}{Rev. Mod. Phys.} \textbf{\bibinfo{volume}{81}},
  \bibinfo{pages}{865} (\bibinfo{year}{2009}),
  \urlprefix\url{https://link.aps.org/doi/10.1103/RevModPhys.81.865}.

\bibitem[{\citenamefont{Gühne and Tóth}(2009)}]{entdect2009}
\bibinfo{author}{\bibfnamefont{O.}~\bibnamefont{Gühne}} \bibnamefont{and}
  \bibinfo{author}{\bibfnamefont{G.}~\bibnamefont{Tóth}},
  \bibinfo{journal}{Physics Reports} \textbf{\bibinfo{volume}{474}},
  \bibinfo{pages}{1 } (\bibinfo{year}{2009}), ISSN \bibinfo{issn}{0370-1573},
  \urlprefix\url{http://www.sciencedirect.com/science/article/pii/S0370157309000623}.

\bibitem[{\citenamefont{Chitambar and Gour}(2019)}]{rtgour2019}
\bibinfo{author}{\bibfnamefont{E.}~\bibnamefont{Chitambar}} \bibnamefont{and}
  \bibinfo{author}{\bibfnamefont{G.}~\bibnamefont{Gour}},
  \bibinfo{journal}{Rev. Mod. Phys.} \textbf{\bibinfo{volume}{91}},
  \bibinfo{pages}{025001} (\bibinfo{year}{2019}),
  \urlprefix\url{https://link.aps.org/doi/10.1103/RevModPhys.91.025001}.

\bibitem[{\citenamefont{Contreras-Tejada
  et~al.}(2019)\citenamefont{Contreras-Tejada, Palazuelos, and
  de~Vicente}}]{rtent2019}
\bibinfo{author}{\bibfnamefont{P.}~\bibnamefont{Contreras-Tejada}},
  \bibinfo{author}{\bibfnamefont{C.}~\bibnamefont{Palazuelos}},
  \bibnamefont{and} \bibinfo{author}{\bibfnamefont{J.~I.}
  \bibnamefont{de~Vicente}}, \bibinfo{journal}{Phys. Rev. Lett.}
  \textbf{\bibinfo{volume}{122}}, \bibinfo{pages}{120503}
  (\bibinfo{year}{2019}),
  \urlprefix\url{https://link.aps.org/doi/10.1103/PhysRevLett.122.120503}.

\bibitem[{\citenamefont{Walther et~al.}(2005)\citenamefont{Walther, Resch,
  Rudolph, Schenck, Weinfurter, Vedral, Aspelmeyer, and
  Zeilinger}}]{qcnature2005}
\bibinfo{author}{\bibfnamefont{P.}~\bibnamefont{Walther}},
  \bibinfo{author}{\bibfnamefont{K.~J.} \bibnamefont{Resch}},
  \bibinfo{author}{\bibfnamefont{T.}~\bibnamefont{Rudolph}},
  \bibinfo{author}{\bibfnamefont{E.}~\bibnamefont{Schenck}},
  \bibinfo{author}{\bibfnamefont{H.}~\bibnamefont{Weinfurter}},
  \bibinfo{author}{\bibfnamefont{V.}~\bibnamefont{Vedral}},
  \bibinfo{author}{\bibfnamefont{M.}~\bibnamefont{Aspelmeyer}},
  \bibnamefont{and}
  \bibinfo{author}{\bibfnamefont{A.}~\bibnamefont{Zeilinger}},
  \bibinfo{journal}{Nature} \textbf{\bibinfo{volume}{434}},
  \bibinfo{pages}{169} (\bibinfo{year}{2005}),
  \urlprefix\url{https://doi.org/10.1038/nature03347}.

\bibitem[{\citenamefont{Riebe et~al.}(2004)\citenamefont{Riebe, Häffner, Roos,
  Hänsel, Benhelm, Lancaster, Körber, Becher, Schmidt-Kaler, James
  et~al.}}]{telepor04}
\bibinfo{author}{\bibfnamefont{M.}~\bibnamefont{Riebe}},
  \bibinfo{author}{\bibfnamefont{H.}~\bibnamefont{Häffner}},
  \bibinfo{author}{\bibfnamefont{C.~F.} \bibnamefont{Roos}},
  \bibinfo{author}{\bibfnamefont{W.}~\bibnamefont{Hänsel}},
  \bibinfo{author}{\bibfnamefont{J.}~\bibnamefont{Benhelm}},
  \bibinfo{author}{\bibfnamefont{G.~P.~T.} \bibnamefont{Lancaster}},
  \bibinfo{author}{\bibfnamefont{T.~W.} \bibnamefont{Körber}},
  \bibinfo{author}{\bibfnamefont{C.}~\bibnamefont{Becher}},
  \bibinfo{author}{\bibfnamefont{F.}~\bibnamefont{Schmidt-Kaler}},
  \bibinfo{author}{\bibfnamefont{D.~F.~V.} \bibnamefont{James}},
  \bibnamefont{et~al.}, \bibinfo{journal}{Nature}
  \textbf{\bibinfo{volume}{429}}, \bibinfo{pages}{734} (\bibinfo{year}{2004}),
  \urlprefix\url{https://doi.org/10.1038/nature02570}.

\bibitem[{\citenamefont{Li et~al.}(2002)\citenamefont{Li, Pan, Jing, Zhang,
  Xie, and Peng}}]{densecoding2002}
\bibinfo{author}{\bibfnamefont{X.}~\bibnamefont{Li}},
  \bibinfo{author}{\bibfnamefont{Q.}~\bibnamefont{Pan}},
  \bibinfo{author}{\bibfnamefont{J.}~\bibnamefont{Jing}},
  \bibinfo{author}{\bibfnamefont{J.}~\bibnamefont{Zhang}},
  \bibinfo{author}{\bibfnamefont{C.}~\bibnamefont{Xie}}, \bibnamefont{and}
  \bibinfo{author}{\bibfnamefont{K.}~\bibnamefont{Peng}},
  \bibinfo{journal}{Phys. Rev. Lett.} \textbf{\bibinfo{volume}{88}},
  \bibinfo{pages}{047904} (\bibinfo{year}{2002}),
  \urlprefix\url{https://link.aps.org/doi/10.1103/PhysRevLett.88.047904}.

\bibitem[{\citenamefont{Yin et~al.}(2020)\citenamefont{Yin, Li, Liao, Yang,
  Cao, Zhang, Ren, Cai, Liu, Li et~al.}}]{crypt2020}
\bibinfo{author}{\bibfnamefont{J.}~\bibnamefont{Yin}},
  \bibinfo{author}{\bibfnamefont{Y.-H.} \bibnamefont{Li}},
  \bibinfo{author}{\bibfnamefont{S.-K.} \bibnamefont{Liao}},
  \bibinfo{author}{\bibfnamefont{M.}~\bibnamefont{Yang}},
  \bibinfo{author}{\bibfnamefont{Y.}~\bibnamefont{Cao}},
  \bibinfo{author}{\bibfnamefont{L.}~\bibnamefont{Zhang}},
  \bibinfo{author}{\bibfnamefont{J.-G.} \bibnamefont{Ren}},
  \bibinfo{author}{\bibfnamefont{W.-Q.} \bibnamefont{Cai}},
  \bibinfo{author}{\bibfnamefont{W.-Y.} \bibnamefont{Liu}},
  \bibinfo{author}{\bibfnamefont{S.-L.} \bibnamefont{Li}},
  \bibnamefont{et~al.}, \bibinfo{journal}{Nature}
  \textbf{\bibinfo{volume}{582}}, \bibinfo{pages}{5} (\bibinfo{year}{2020}),
  \urlprefix\url{https://doi.org/10.1038/s41586-020-2401-y}.

\bibitem[{\citenamefont{Xu et~al.}(2020)\citenamefont{Xu, Ma, Zhang, Lo, and
  Pan}}]{qkdrmp2020}
\bibinfo{author}{\bibfnamefont{F.}~\bibnamefont{Xu}},
  \bibinfo{author}{\bibfnamefont{X.}~\bibnamefont{Ma}},
  \bibinfo{author}{\bibfnamefont{Q.}~\bibnamefont{Zhang}},
  \bibinfo{author}{\bibfnamefont{H.-K.} \bibnamefont{Lo}}, \bibnamefont{and}
  \bibinfo{author}{\bibfnamefont{J.-W.} \bibnamefont{Pan}},
  \bibinfo{journal}{Rev. Mod. Phys.} \textbf{\bibinfo{volume}{92}},
  \bibinfo{pages}{025002} (\bibinfo{year}{2020}),
  \urlprefix\url{https://link.aps.org/doi/10.1103/RevModPhys.92.025002}.

\bibitem[{\citenamefont{Peres}(1996)}]{ppt96}
\bibinfo{author}{\bibfnamefont{A.}~\bibnamefont{Peres}},
  \bibinfo{journal}{Phys. Rev. Lett.} \textbf{\bibinfo{volume}{77}},
  \bibinfo{pages}{1413} (\bibinfo{year}{1996}),
  \urlprefix\url{https://link.aps.org/doi/10.1103/PhysRevLett.77.1413}.

\bibitem[{\citenamefont{Horodecki}(1997)}]{ppte1997}
\bibinfo{author}{\bibfnamefont{P.}~\bibnamefont{Horodecki}},
  \bibinfo{journal}{Physics Letters A} \textbf{\bibinfo{volume}{232}},
  \bibinfo{pages}{333 } (\bibinfo{year}{1997}), ISSN \bibinfo{issn}{0375-9601},
  \urlprefix\url{http://www.sciencedirect.com/science/article/pii/S0375960197004167}.

\bibitem[{\citenamefont{Rudolph}(2003)}]{realignment03}
\bibinfo{author}{\bibfnamefont{O.}~\bibnamefont{Rudolph}},
  \bibinfo{journal}{Phys. Rev. A} \textbf{\bibinfo{volume}{67}},
  \bibinfo{pages}{032312} (\bibinfo{year}{2003}),
  \urlprefix\url{https://link.aps.org/doi/10.1103/PhysRevA.67.032312}.

\bibitem[{\citenamefont{Gurvits}(2003)}]{Gurvits03}
\bibinfo{author}{\bibfnamefont{L.}~\bibnamefont{Gurvits}}, in
  \emph{\bibinfo{booktitle}{Proceedings of the 35th Annual {ACM} Symposium on
  Theory of Computing, June 9-11, 2003, San Diego, CA, {USA}}}, edited by
  \bibinfo{editor}{\bibfnamefont{L.~L.} \bibnamefont{Larmore}}
  \bibnamefont{and} \bibinfo{editor}{\bibfnamefont{M.~X.}
  \bibnamefont{Goemans}} (\bibinfo{publisher}{{ACM}}, \bibinfo{year}{2003}),
  pp. \bibinfo{pages}{10--19},
  \urlprefix\url{https://doi.org/10.1145/780542.780545}.

\bibitem[{\citenamefont{Jungnitsch et~al.}(2011)\citenamefont{Jungnitsch,
  Moroder, and G\"uhne}}]{gme2011}
\bibinfo{author}{\bibfnamefont{B.}~\bibnamefont{Jungnitsch}},
  \bibinfo{author}{\bibfnamefont{T.}~\bibnamefont{Moroder}}, \bibnamefont{and}
  \bibinfo{author}{\bibfnamefont{O.}~\bibnamefont{G\"uhne}},
  \bibinfo{journal}{Phys. Rev. Lett.} \textbf{\bibinfo{volume}{106}},
  \bibinfo{pages}{190502} (\bibinfo{year}{2011}),
  \urlprefix\url{https://link.aps.org/doi/10.1103/PhysRevLett.106.190502}.

\bibitem[{\citenamefont{Shen and Chen}(2020)}]{geys2020}
\bibinfo{author}{\bibfnamefont{Y.}~\bibnamefont{Shen}} \bibnamefont{and}
  \bibinfo{author}{\bibfnamefont{L.}~\bibnamefont{Chen}},
  \bibinfo{journal}{Journal of Physics A: Mathematical and Theoretical}
  \textbf{\bibinfo{volume}{53}}, \bibinfo{pages}{125302}
  (\bibinfo{year}{2020}),
  \urlprefix\url{https://doi.org/10.1088%2F1751-8121%2Fab7521}.

\bibitem[{\citenamefont{Terhal}(1999)}]{Terhal1999Bell}
\bibinfo{author}{\bibfnamefont{B.~M.} \bibnamefont{Terhal}},
  \bibinfo{journal}{Physics Letters A} \textbf{\bibinfo{volume}{271}},
  \bibinfo{pages}{319} (\bibinfo{year}{1999}).

\bibitem[{\citenamefont{Saggio et~al.}(2019)\citenamefont{Saggio, Dimić,
  Greganti, Rozema, Walther, and Dakić}}]{med2019}
\bibinfo{author}{\bibfnamefont{V.}~\bibnamefont{Saggio}},
  \bibinfo{author}{\bibfnamefont{A.}~\bibnamefont{Dimić}},
  \bibinfo{author}{\bibfnamefont{C.}~\bibnamefont{Greganti}},
  \bibinfo{author}{\bibfnamefont{L.~A.} \bibnamefont{Rozema}},
  \bibinfo{author}{\bibfnamefont{P.}~\bibnamefont{Walther}}, \bibnamefont{and}
  \bibinfo{author}{\bibfnamefont{B.}~\bibnamefont{Dakić}},
  \bibinfo{journal}{Nature Physics} \textbf{\bibinfo{volume}{15}},
  \bibinfo{pages}{935} (\bibinfo{year}{2019}),
  \urlprefix\url{https://doi.org/10.1038/s41567-019-0550-4}.

\bibitem[{\citenamefont{Amaro and M\"uller}(2020)}]{lewo2020}
\bibinfo{author}{\bibfnamefont{D.}~\bibnamefont{Amaro}} \bibnamefont{and}
  \bibinfo{author}{\bibfnamefont{M.}~\bibnamefont{M\"uller}},
  \bibinfo{journal}{Phys. Rev. A} \textbf{\bibinfo{volume}{101}},
  \bibinfo{pages}{012317} (\bibinfo{year}{2020}),
  \urlprefix\url{https://link.aps.org/doi/10.1103/PhysRevA.101.012317}.

\bibitem[{\citenamefont{Li et~al.}(2020)\citenamefont{Li, Zhao, Zhang, Liu,
  Yin, Zhang, Fei, Chen, Liu, Xu et~al.}}]{mdiew2020}
\bibinfo{author}{\bibfnamefont{Z.-D.} \bibnamefont{Li}},
  \bibinfo{author}{\bibfnamefont{Q.}~\bibnamefont{Zhao}},
  \bibinfo{author}{\bibfnamefont{R.}~\bibnamefont{Zhang}},
  \bibinfo{author}{\bibfnamefont{L.-Z.} \bibnamefont{Liu}},
  \bibinfo{author}{\bibfnamefont{X.-F.} \bibnamefont{Yin}},
  \bibinfo{author}{\bibfnamefont{X.}~\bibnamefont{Zhang}},
  \bibinfo{author}{\bibfnamefont{Y.-Y.} \bibnamefont{Fei}},
  \bibinfo{author}{\bibfnamefont{K.}~\bibnamefont{Chen}},
  \bibinfo{author}{\bibfnamefont{N.-L.} \bibnamefont{Liu}},
  \bibinfo{author}{\bibfnamefont{F.}~\bibnamefont{Xu}}, \bibnamefont{et~al.},
  \bibinfo{journal}{Phys. Rev. Lett.} \textbf{\bibinfo{volume}{124}},
  \bibinfo{pages}{160503} (\bibinfo{year}{2020}),
  \urlprefix\url{https://link.aps.org/doi/10.1103/PhysRevLett.124.160503}.

\bibitem[{\citenamefont{Lewenstein et~al.}(2000)\citenamefont{Lewenstein,
  Kraus, Cirac, and Horodecki}}]{optew2000}
\bibinfo{author}{\bibfnamefont{M.}~\bibnamefont{Lewenstein}},
  \bibinfo{author}{\bibfnamefont{B.}~\bibnamefont{Kraus}},
  \bibinfo{author}{\bibfnamefont{J.~I.} \bibnamefont{Cirac}}, \bibnamefont{and}
  \bibinfo{author}{\bibfnamefont{P.}~\bibnamefont{Horodecki}},
  \bibinfo{journal}{Phys. Rev. A} \textbf{\bibinfo{volume}{62}},
  \bibinfo{pages}{052310} (\bibinfo{year}{2000}),
  \urlprefix\url{https://link.aps.org/doi/10.1103/PhysRevA.62.052310}.

\bibitem[{\citenamefont{Choi}(1982)}]{choi_1982}
\bibinfo{author}{\bibfnamefont{M.-D.} \bibnamefont{Choi}},
  \emph{\bibinfo{title}{Positive linear maps}} (\bibinfo{publisher}{USA:
  American Mathematical Society}, \bibinfo{year}{1982}),
  \urlprefix\url{https://doi.org/10.1090/pspum/038.2}.

\bibitem[{\citenamefont{Nielsen and Chuang}(2000)}]{nielsen00}
\bibinfo{author}{\bibfnamefont{M.~A.} \bibnamefont{Nielsen}} \bibnamefont{and}
  \bibinfo{author}{\bibfnamefont{I.~L.} \bibnamefont{Chuang}},
  \emph{\bibinfo{title}{Quantum Computation and Quantum Information}}
  (\bibinfo{publisher}{Cambridge University Press}, \bibinfo{year}{2000}).

\bibitem[{\citenamefont{Zhou et~al.}(2020)\citenamefont{Zhou, Zeng, and
  Liu}}]{youzhou2020}
\bibinfo{author}{\bibfnamefont{Y.}~\bibnamefont{Zhou}},
  \bibinfo{author}{\bibfnamefont{P.}~\bibnamefont{Zeng}}, \bibnamefont{and}
  \bibinfo{author}{\bibfnamefont{Z.}~\bibnamefont{Liu}},
  \emph{\bibinfo{title}{Single-copies estimation of entanglement negativity}}
  (\bibinfo{year}{2020}), \eprint{2004.11360}.

\bibitem[{\citenamefont{Elben et~al.}(2020)\citenamefont{Elben, Kueng, Huang,
  van Bijnen, Kokail, Dalmonte, Calabrese, Kraus, Preskill, Zoller
  et~al.}}]{ptmoments2020}
\bibinfo{author}{\bibfnamefont{A.}~\bibnamefont{Elben}},
  \bibinfo{author}{\bibfnamefont{R.}~\bibnamefont{Kueng}},
  \bibinfo{author}{\bibfnamefont{H.-Y.} \bibnamefont{Huang}},
  \bibinfo{author}{\bibfnamefont{R.}~\bibnamefont{van Bijnen}},
  \bibinfo{author}{\bibfnamefont{C.}~\bibnamefont{Kokail}},
  \bibinfo{author}{\bibfnamefont{M.}~\bibnamefont{Dalmonte}},
  \bibinfo{author}{\bibfnamefont{P.}~\bibnamefont{Calabrese}},
  \bibinfo{author}{\bibfnamefont{B.}~\bibnamefont{Kraus}},
  \bibinfo{author}{\bibfnamefont{J.}~\bibnamefont{Preskill}},
  \bibinfo{author}{\bibfnamefont{P.}~\bibnamefont{Zoller}},
  \bibnamefont{et~al.}, \emph{\bibinfo{title}{Mixed-state entanglement from
  local randomized measurements}} (\bibinfo{year}{2020}), \eprint{2007.06305}.

\bibitem[{\citenamefont{Breuer}(2006)}]{oew2006}
\bibinfo{author}{\bibfnamefont{H.-P.} \bibnamefont{Breuer}},
  \bibinfo{journal}{Phys. Rev. Lett.} \textbf{\bibinfo{volume}{97}},
  \bibinfo{pages}{080501} (\bibinfo{year}{2006}),
  \urlprefix\url{https://link.aps.org/doi/10.1103/PhysRevLett.97.080501}.

\bibitem[{\citenamefont{Skowronek and Zyczkowski}(2009)}]{pmew2009}
\bibinfo{author}{\bibfnamefont{L.}~\bibnamefont{Skowronek}} \bibnamefont{and}
  \bibinfo{author}{\bibfnamefont{K.}~\bibnamefont{Zyczkowski}},
  \bibinfo{journal}{Journal of Physics A: Mathematical and Theoretical}
  \textbf{\bibinfo{volume}{42}}, \bibinfo{pages}{325302}
  (\bibinfo{year}{2009}),
  \urlprefix\url{https://doi.org/10.1088%2F1751-8113%2F42%2F32%2F325302}.

\bibitem[{\citenamefont{Chru\ifmmode \acute{s}\else
  \'{s}\fi{}ci\ifmmode~\acute{n}\else \'{n}\fi{}ski and
  Pytel}(2010)}]{oewmap2010}
\bibinfo{author}{\bibfnamefont{D.}~\bibnamefont{Chru\ifmmode \acute{s}\else
  \'{s}\fi{}ci\ifmmode~\acute{n}\else \'{n}\fi{}ski}} \bibnamefont{and}
  \bibinfo{author}{\bibfnamefont{J.}~\bibnamefont{Pytel}},
  \bibinfo{journal}{Phys. Rev. A} \textbf{\bibinfo{volume}{82}},
  \bibinfo{pages}{052310} (\bibinfo{year}{2010}),
  \urlprefix\url{https://link.aps.org/doi/10.1103/PhysRevA.82.052310}.

\bibitem[{\citenamefont{Ha and Kye}(2011)}]{optmap2011}
\bibinfo{author}{\bibfnamefont{K.-C.} \bibnamefont{Ha}} \bibnamefont{and}
  \bibinfo{author}{\bibfnamefont{S.-H.} \bibnamefont{Kye}},
  \bibinfo{journal}{Phys. Rev. A} \textbf{\bibinfo{volume}{84}},
  \bibinfo{pages}{024302} (\bibinfo{year}{2011}),
  \urlprefix\url{https://link.aps.org/doi/10.1103/PhysRevA.84.024302}.

\bibitem[{\citenamefont{Vidal and Werner}(2002)}]{negativity2002}
\bibinfo{author}{\bibfnamefont{G.}~\bibnamefont{Vidal}} \bibnamefont{and}
  \bibinfo{author}{\bibfnamefont{R.~F.} \bibnamefont{Werner}},
  \bibinfo{journal}{Phys. Rev. A} \textbf{\bibinfo{volume}{65}},
  \bibinfo{pages}{032314} (\bibinfo{year}{2002}),
  \urlprefix\url{https://link.aps.org/doi/10.1103/PhysRevA.65.032314}.

\bibitem[{\citenamefont{Horodecki et~al.}(1998)\citenamefont{Horodecki,
  Horodecki, and Horodecki}}]{distill1998}
\bibinfo{author}{\bibfnamefont{M.}~\bibnamefont{Horodecki}},
  \bibinfo{author}{\bibfnamefont{P.}~\bibnamefont{Horodecki}},
  \bibnamefont{and}
  \bibinfo{author}{\bibfnamefont{R.}~\bibnamefont{Horodecki}},
  \bibinfo{journal}{Phys. Rev. Lett.} \textbf{\bibinfo{volume}{80}},
  \bibinfo{pages}{5239} (\bibinfo{year}{1998}),
  \urlprefix\url{https://link.aps.org/doi/10.1103/PhysRevLett.80.5239}.

\bibitem[{\citenamefont{Augusiak et~al.}(2008)\citenamefont{Augusiak,
  Demianowicz, and Horodecki}}]{twoqubitine2008}
\bibinfo{author}{\bibfnamefont{R.}~\bibnamefont{Augusiak}},
  \bibinfo{author}{\bibfnamefont{M.}~\bibnamefont{Demianowicz}},
  \bibnamefont{and}
  \bibinfo{author}{\bibfnamefont{P.}~\bibnamefont{Horodecki}},
  \bibinfo{journal}{Phys. Rev. A} \textbf{\bibinfo{volume}{77}},
  \bibinfo{pages}{030301} (\bibinfo{year}{2008}),
  \urlprefix\url{https://link.aps.org/doi/10.1103/PhysRevA.77.030301}.

\bibitem[{\citenamefont{Chen and Dokovic}(2012)}]{2xnppt}
\bibinfo{author}{\bibfnamefont{L.}~\bibnamefont{Chen}} \bibnamefont{and}
  \bibinfo{author}{\bibfnamefont{D.~Z.} \bibnamefont{Dokovic}},
  \bibinfo{journal}{Phys. Rev. A} \textbf{\bibinfo{volume}{86}},
  \bibinfo{pages}{062332} (\bibinfo{year}{2012}),
  \urlprefix\url{https://link.aps.org/doi/10.1103/PhysRevA.86.062332}.

\bibitem[{\citenamefont{Rana}(2013)}]{Neigenv2013}
\bibinfo{author}{\bibfnamefont{S.}~\bibnamefont{Rana}}, \bibinfo{journal}{Phys.
  Rev. A} \textbf{\bibinfo{volume}{87}}, \bibinfo{pages}{054301}
  (\bibinfo{year}{2013}),
  \urlprefix\url{https://link.aps.org/doi/10.1103/PhysRevA.87.054301}.

\bibitem[{\citenamefont{Johnston}(2013)}]{Johnston2013Non}
\bibinfo{author}{\bibfnamefont{N.}~\bibnamefont{Johnston}},
  \bibinfo{journal}{Phys. Rev. A} \textbf{\bibinfo{volume}{87}},
  \bibinfo{pages}{064302} (\bibinfo{year}{2013}),
  \urlprefix\url{https://link.aps.org/doi/10.1103/PhysRevA.87.064302}.

\bibitem[{\citenamefont{Sarbicki}(2008)}]{ewspectral2008}
\bibinfo{author}{\bibfnamefont{G.}~\bibnamefont{Sarbicki}},
  \bibinfo{journal}{Journal of Physics A: Mathematical and Theoretical}
  \textbf{\bibinfo{volume}{41}}, \bibinfo{pages}{375303}
  (\bibinfo{year}{2008}),
  \urlprefix\url{https://doi.org/10.1088%2F1751-8113%2F41%2F37%2F375303}.

\bibitem[{\citenamefont{Johnston and Patterson}(2018)}]{inverseew2018}
\bibinfo{author}{\bibfnamefont{N.}~\bibnamefont{Johnston}} \bibnamefont{and}
  \bibinfo{author}{\bibfnamefont{E.}~\bibnamefont{Patterson}},
  \bibinfo{journal}{Linear Algebra and its Applications}
  \textbf{\bibinfo{volume}{550}}, \bibinfo{pages}{1 } (\bibinfo{year}{2018}),
  ISSN \bibinfo{issn}{0024-3795},
  \urlprefix\url{http://www.sciencedirect.com/science/article/pii/S002437951830154X}.

\bibitem[{\citenamefont{DiVincenzo et~al.}(2000)\citenamefont{DiVincenzo, Shor,
  Smolin, Terhal, and Thapliyal}}]{2ndis2000}
\bibinfo{author}{\bibfnamefont{D.~P.} \bibnamefont{DiVincenzo}},
  \bibinfo{author}{\bibfnamefont{P.~W.} \bibnamefont{Shor}},
  \bibinfo{author}{\bibfnamefont{J.~A.} \bibnamefont{Smolin}},
  \bibinfo{author}{\bibfnamefont{B.~M.} \bibnamefont{Terhal}},
  \bibnamefont{and} \bibinfo{author}{\bibfnamefont{A.~V.}
  \bibnamefont{Thapliyal}}, \bibinfo{journal}{Phys. Rev. A}
  \textbf{\bibinfo{volume}{61}}, \bibinfo{pages}{062312}
  (\bibinfo{year}{2000}),
  \urlprefix\url{https://link.aps.org/doi/10.1103/PhysRevA.61.062312}.

\bibitem[{\citenamefont{Kraus et~al.}(2000)\citenamefont{Kraus, Cirac, Karnas,
  and Lewenstein}}]{2Nsep2000}
\bibinfo{author}{\bibfnamefont{B.}~\bibnamefont{Kraus}},
  \bibinfo{author}{\bibfnamefont{J.~I.} \bibnamefont{Cirac}},
  \bibinfo{author}{\bibfnamefont{S.}~\bibnamefont{Karnas}}, \bibnamefont{and}
  \bibinfo{author}{\bibfnamefont{M.}~\bibnamefont{Lewenstein}},
  \bibinfo{journal}{Phys. Rev. A} \textbf{\bibinfo{volume}{61}},
  \bibinfo{pages}{062302} (\bibinfo{year}{2000}),
  \urlprefix\url{https://link.aps.org/doi/10.1103/PhysRevA.61.062302}.

\bibitem[{\citenamefont{Adhikari}(2020)}]{2nasep2020}
\bibinfo{author}{\bibfnamefont{S.}~\bibnamefont{Adhikari}},
  \emph{\bibinfo{title}{Constructing a ball of separable and absolutely
  separable states for $2\otimes d$ quantum system}} (\bibinfo{year}{2020}),
  \eprint{2007.00891}.

\bibitem[{\citenamefont{Wootters}(1998)}]{eof1998}
\bibinfo{author}{\bibfnamefont{W.~K.} \bibnamefont{Wootters}},
  \bibinfo{journal}{Phys. Rev. Lett.} \textbf{\bibinfo{volume}{80}},
  \bibinfo{pages}{2245} (\bibinfo{year}{1998}),
  \urlprefix\url{https://link.aps.org/doi/10.1103/PhysRevLett.80.2245}.

\bibitem[{\citenamefont{Gerjuoy}(2003)}]{2neof2003}
\bibinfo{author}{\bibfnamefont{E.}~\bibnamefont{Gerjuoy}},
  \bibinfo{journal}{Phys. Rev. A} \textbf{\bibinfo{volume}{67}},
  \bibinfo{pages}{052308} (\bibinfo{year}{2003}),
  \urlprefix\url{https://link.aps.org/doi/10.1103/PhysRevA.67.052308}.

\bibitem[{\citenamefont{Augusiak
  et~al.}(2011{\natexlab{a}})\citenamefont{Augusiak, Tura, and
  Lewenstein}}]{2xnoptdecew2011}
\bibinfo{author}{\bibfnamefont{R.}~\bibnamefont{Augusiak}},
  \bibinfo{author}{\bibfnamefont{J.}~\bibnamefont{Tura}}, \bibnamefont{and}
  \bibinfo{author}{\bibfnamefont{M.}~\bibnamefont{Lewenstein}},
  \bibinfo{journal}{Journal of Physics A: Mathematical and Theoretical}
  \textbf{\bibinfo{volume}{44}}, \bibinfo{pages}{212001}
  (\bibinfo{year}{2011}{\natexlab{a}}),
  \urlprefix\url{https://doi.org/10.1088%2F1751-8113%2F44%2F21%2F212001}.

\bibitem[{\citenamefont{Augusiak
  et~al.}(2011{\natexlab{b}})\citenamefont{Augusiak, Sarbicki, and
  Lewenstein}}]{optdecewpra2011}
\bibinfo{author}{\bibfnamefont{R.}~\bibnamefont{Augusiak}},
  \bibinfo{author}{\bibfnamefont{G.}~\bibnamefont{Sarbicki}}, \bibnamefont{and}
  \bibinfo{author}{\bibfnamefont{M.}~\bibnamefont{Lewenstein}},
  \bibinfo{journal}{Phys. Rev. A} \textbf{\bibinfo{volume}{84}},
  \bibinfo{pages}{052323} (\bibinfo{year}{2011}{\natexlab{b}}),
  \urlprefix\url{https://link.aps.org/doi/10.1103/PhysRevA.84.052323}.

\bibitem[{\citenamefont{Vinjanampathy and Rau}(2010)}]{gxstate2010}
\bibinfo{author}{\bibfnamefont{S.}~\bibnamefont{Vinjanampathy}}
  \bibnamefont{and} \bibinfo{author}{\bibfnamefont{A.~R.~P.}
  \bibnamefont{Rau}}, \bibinfo{journal}{Phys. Rev. A}
  \textbf{\bibinfo{volume}{82}}, \bibinfo{pages}{032336}
  (\bibinfo{year}{2010}),
  \urlprefix\url{https://link.aps.org/doi/10.1103/PhysRevA.82.032336}.

\bibitem[{\citenamefont{Sperling and Vogel}(2013)}]{mew2013}
\bibinfo{author}{\bibfnamefont{J.}~\bibnamefont{Sperling}} \bibnamefont{and}
  \bibinfo{author}{\bibfnamefont{W.}~\bibnamefont{Vogel}},
  \bibinfo{journal}{Phys. Rev. Lett.} \textbf{\bibinfo{volume}{111}},
  \bibinfo{pages}{110503} (\bibinfo{year}{2013}),
  \urlprefix\url{https://link.aps.org/doi/10.1103/PhysRevLett.111.110503}.

\bibitem[{\citenamefont{Wang}(2018)}]{wew2018}
\bibinfo{author}{\bibfnamefont{B.-H.} \bibnamefont{Wang}},
  \bibinfo{journal}{Phys. Rev. A} \textbf{\bibinfo{volume}{97}},
  \bibinfo{pages}{050302} (\bibinfo{year}{2018}),
  \urlprefix\url{https://link.aps.org/doi/10.1103/PhysRevA.97.050302}.

\bibitem[{\citenamefont{Boyd and Vandenberghe}(2004)}]{cvx2004}
\bibinfo{author}{\bibfnamefont{S.}~\bibnamefont{Boyd}} \bibnamefont{and}
  \bibinfo{author}{\bibfnamefont{L.}~\bibnamefont{Vandenberghe}},
  \emph{\bibinfo{title}{Convex Optimization}} (\bibinfo{publisher}{Cambridge
  University Press}, \bibinfo{address}{USA}, \bibinfo{year}{2004}), ISBN
  \bibinfo{isbn}{0521833787}.

\bibitem[{\citenamefont{Lewenstein et~al.}(2001)\citenamefont{Lewenstein,
  Kraus, Horodecki, and Cirac}}]{charew2001}
\bibinfo{author}{\bibfnamefont{M.}~\bibnamefont{Lewenstein}},
  \bibinfo{author}{\bibfnamefont{B.}~\bibnamefont{Kraus}},
  \bibinfo{author}{\bibfnamefont{P.}~\bibnamefont{Horodecki}},
  \bibnamefont{and} \bibinfo{author}{\bibfnamefont{J.~I.} \bibnamefont{Cirac}},
  \bibinfo{journal}{Phys. Rev. A} \textbf{\bibinfo{volume}{63}},
  \bibinfo{pages}{044304} (\bibinfo{year}{2001}),
  \urlprefix\url{https://link.aps.org/doi/10.1103/PhysRevA.63.044304}.

\bibitem[{\citenamefont{Horn and Johnson}(1985)}]{bookma}
\bibinfo{editor}{\bibfnamefont{R.~A.} \bibnamefont{Horn}} \bibnamefont{and}
  \bibinfo{editor}{\bibfnamefont{C.~R.} \bibnamefont{Johnson}}, eds.,
  \emph{\bibinfo{title}{Matrix Analysis}} (\bibinfo{publisher}{Cambridge
  University Press}, \bibinfo{address}{USA}, \bibinfo{year}{1985}), ISBN
  \bibinfo{isbn}{0521305861}.

\bibitem[{\citenamefont{Bennett et~al.}(2000)\citenamefont{Bennett, Popescu,
  Rohrlich, Smolin, and Thapliyal}}]{SLOCC2000}
\bibinfo{author}{\bibfnamefont{C.~H.} \bibnamefont{Bennett}},
  \bibinfo{author}{\bibfnamefont{S.}~\bibnamefont{Popescu}},
  \bibinfo{author}{\bibfnamefont{D.}~\bibnamefont{Rohrlich}},
  \bibinfo{author}{\bibfnamefont{J.~A.} \bibnamefont{Smolin}},
  \bibnamefont{and} \bibinfo{author}{\bibfnamefont{A.~V.}
  \bibnamefont{Thapliyal}}, \bibinfo{journal}{Phys. Rev. A}
  \textbf{\bibinfo{volume}{63}}, \bibinfo{pages}{012307}
  (\bibinfo{year}{2000}),
  \urlprefix\url{https://link.aps.org/doi/10.1103/PhysRevA.63.012307}.

\bibitem[{\citenamefont{G\"uhne et~al.}(2002)\citenamefont{G\"uhne, Hyllus,
  Bru\ss{}, Ekert, Lewenstein, Macchiavello, and Sanpera}}]{dectent2002}
\bibinfo{author}{\bibfnamefont{O.}~\bibnamefont{G\"uhne}},
  \bibinfo{author}{\bibfnamefont{P.}~\bibnamefont{Hyllus}},
  \bibinfo{author}{\bibfnamefont{D.}~\bibnamefont{Bru\ss{}}},
  \bibinfo{author}{\bibfnamefont{A.}~\bibnamefont{Ekert}},
  \bibinfo{author}{\bibfnamefont{M.}~\bibnamefont{Lewenstein}},
  \bibinfo{author}{\bibfnamefont{C.}~\bibnamefont{Macchiavello}},
  \bibnamefont{and} \bibinfo{author}{\bibfnamefont{A.}~\bibnamefont{Sanpera}},
  \bibinfo{journal}{Phys. Rev. A} \textbf{\bibinfo{volume}{66}},
  \bibinfo{pages}{062305} (\bibinfo{year}{2002}),
  \urlprefix\url{https://link.aps.org/doi/10.1103/PhysRevA.66.062305}.

\bibitem[{\citenamefont{Horodecki et~al.}(2000)\citenamefont{Horodecki,
  Lewenstein, Vidal, and Cirac}}]{mxnppt2000}
\bibinfo{author}{\bibfnamefont{P.}~\bibnamefont{Horodecki}},
  \bibinfo{author}{\bibfnamefont{M.}~\bibnamefont{Lewenstein}},
  \bibinfo{author}{\bibfnamefont{G.}~\bibnamefont{Vidal}}, \bibnamefont{and}
  \bibinfo{author}{\bibfnamefont{I.}~\bibnamefont{Cirac}},
  \bibinfo{journal}{Phys. Rev. A} \textbf{\bibinfo{volume}{62}},
  \bibinfo{pages}{032310} (\bibinfo{year}{2000}),
  \urlprefix\url{https://link.aps.org/doi/10.1103/PhysRevA.62.032310}.

\bibitem[{\citenamefont{Horodecki et~al.}(2003)\citenamefont{Horodecki, Smolin,
  Terhal, and Thapliyal}}]{mxndistill2003}
\bibinfo{author}{\bibfnamefont{P.}~\bibnamefont{Horodecki}},
  \bibinfo{author}{\bibfnamefont{J.~A.} \bibnamefont{Smolin}},
  \bibinfo{author}{\bibfnamefont{B.~M.} \bibnamefont{Terhal}},
  \bibnamefont{and} \bibinfo{author}{\bibfnamefont{A.~V.}
  \bibnamefont{Thapliyal}}, \bibinfo{journal}{Theoretical Computer Science}
  \textbf{\bibinfo{volume}{292}}, \bibinfo{pages}{589 } (\bibinfo{year}{2003}),
  ISSN \bibinfo{issn}{0304-3975}, \bibinfo{note}{algorithms in Quantum
  Information Prcoessing},
  \urlprefix\url{http://www.sciencedirect.com/science/article/pii/S0304397501003760}.

\bibitem[{\citenamefont{Chen and Dokovic}(2013{\natexlab{a}})}]{seplin2013}
\bibinfo{author}{\bibfnamefont{L.}~\bibnamefont{Chen}} \bibnamefont{and}
  \bibinfo{author}{\bibfnamefont{D.~Z.} \bibnamefont{Dokovic}},
  \bibinfo{journal}{Journal of Physics A: Mathematical and Theoretical}
  \textbf{\bibinfo{volume}{46}}, \bibinfo{pages}{275304}
  (\bibinfo{year}{2013}{\natexlab{a}}),
  \urlprefix\url{https://doi.org/10.1088%2F1751-8113%2F46%2F27%2F275304}.

\bibitem[{\citenamefont{Sun and Chen}(2020)}]{sun2020}
\bibinfo{author}{\bibfnamefont{Y.}~\bibnamefont{Sun}} \bibnamefont{and}
  \bibinfo{author}{\bibfnamefont{L.}~\bibnamefont{Chen}},
  \emph{\bibinfo{title}{Tripartite genuinely entangled states from
  entanglement-breaking subspaces}} (\bibinfo{year}{2020}),
  \eprint{2006.05128}.

\bibitem[{\citenamefont{Gong et~al.}(2019)\citenamefont{Gong, Chen, Zheng,
  Wang, Zha, Deng, Yan, Rong, Wu, Li et~al.}}]{12qubitge2019}
\bibinfo{author}{\bibfnamefont{M.}~\bibnamefont{Gong}},
  \bibinfo{author}{\bibfnamefont{M.-C.} \bibnamefont{Chen}},
  \bibinfo{author}{\bibfnamefont{Y.}~\bibnamefont{Zheng}},
  \bibinfo{author}{\bibfnamefont{S.}~\bibnamefont{Wang}},
  \bibinfo{author}{\bibfnamefont{C.}~\bibnamefont{Zha}},
  \bibinfo{author}{\bibfnamefont{H.}~\bibnamefont{Deng}},
  \bibinfo{author}{\bibfnamefont{Z.}~\bibnamefont{Yan}},
  \bibinfo{author}{\bibfnamefont{H.}~\bibnamefont{Rong}},
  \bibinfo{author}{\bibfnamefont{Y.}~\bibnamefont{Wu}},
  \bibinfo{author}{\bibfnamefont{S.}~\bibnamefont{Li}}, \bibnamefont{et~al.},
  \bibinfo{journal}{Phys. Rev. Lett.} \textbf{\bibinfo{volume}{122}},
  \bibinfo{pages}{110501} (\bibinfo{year}{2019}),
  \urlprefix\url{https://link.aps.org/doi/10.1103/PhysRevLett.122.110501}.

\bibitem[{\citenamefont{Llewellyn et~al.}(2020)\citenamefont{Llewellyn, Ding,
  Faruque, Paesani, Bacco, Santagati, Qian, Li, Xiao, Huber et~al.}}]{gme2020}
\bibinfo{author}{\bibfnamefont{D.}~\bibnamefont{Llewellyn}},
  \bibinfo{author}{\bibfnamefont{Y.}~\bibnamefont{Ding}},
  \bibinfo{author}{\bibfnamefont{I.~I.} \bibnamefont{Faruque}},
  \bibinfo{author}{\bibfnamefont{S.}~\bibnamefont{Paesani}},
  \bibinfo{author}{\bibfnamefont{D.}~\bibnamefont{Bacco}},
  \bibinfo{author}{\bibfnamefont{R.}~\bibnamefont{Santagati}},
  \bibinfo{author}{\bibfnamefont{Y.-J.} \bibnamefont{Qian}},
  \bibinfo{author}{\bibfnamefont{Y.}~\bibnamefont{Li}},
  \bibinfo{author}{\bibfnamefont{Y.-F.} \bibnamefont{Xiao}},
  \bibinfo{author}{\bibfnamefont{M.}~\bibnamefont{Huber}},
  \bibnamefont{et~al.}, \bibinfo{journal}{Nature Physics}
  \textbf{\bibinfo{volume}{16}}, \bibinfo{pages}{148} (\bibinfo{year}{2020}),
  \urlprefix\url{https://doi.org/10.1038/s41567-019-0727-x}.

\bibitem[{\citenamefont{Acin et~al.}(2000)\citenamefont{Acin, Andrianov, Costa,
  Jane, Latorre, and Tarrach}}]{3qubitlu2000}
\bibinfo{author}{\bibfnamefont{A.}~\bibnamefont{Acin}},
  \bibinfo{author}{\bibfnamefont{A.}~\bibnamefont{Andrianov}},
  \bibinfo{author}{\bibfnamefont{L.}~\bibnamefont{Costa}},
  \bibinfo{author}{\bibfnamefont{E.}~\bibnamefont{Jane}},
  \bibinfo{author}{\bibfnamefont{J.~I.} \bibnamefont{Latorre}},
  \bibnamefont{and} \bibinfo{author}{\bibfnamefont{R.}~\bibnamefont{Tarrach}},
  \bibinfo{journal}{Phys. Rev. Lett.} \textbf{\bibinfo{volume}{85}},
  \bibinfo{pages}{1560} (\bibinfo{year}{2000}),
  \urlprefix\url{https://link.aps.org/doi/10.1103/PhysRevLett.85.1560}.

\bibitem[{\citenamefont{Acin et~al.}(2001)\citenamefont{Acin, Bruss,
  Lewenstein, and Sanpera}}]{3qubitinequiv2001}
\bibinfo{author}{\bibfnamefont{A.}~\bibnamefont{Acin}},
  \bibinfo{author}{\bibfnamefont{D.}~\bibnamefont{Bruss}},
  \bibinfo{author}{\bibfnamefont{M.}~\bibnamefont{Lewenstein}},
  \bibnamefont{and} \bibinfo{author}{\bibfnamefont{A.}~\bibnamefont{Sanpera}},
  \bibinfo{journal}{Phys. Rev. Lett.} \textbf{\bibinfo{volume}{87}},
  \bibinfo{pages}{040401} (\bibinfo{year}{2001}),
  \urlprefix\url{https://link.aps.org/doi/10.1103/PhysRevLett.87.040401}.

\bibitem[{\citenamefont{Kraus}(2010)}]{luequiv2010}
\bibinfo{author}{\bibfnamefont{B.}~\bibnamefont{Kraus}},
  \bibinfo{journal}{Phys. Rev. Lett.} \textbf{\bibinfo{volume}{104}},
  \bibinfo{pages}{020504} (\bibinfo{year}{2010}),
  \urlprefix\url{https://link.aps.org/doi/10.1103/PhysRevLett.104.020504}.

\bibitem[{\citenamefont{Gour and Wallach}(2013)}]{mpsinequiv2013}
\bibinfo{author}{\bibfnamefont{G.}~\bibnamefont{Gour}} \bibnamefont{and}
  \bibinfo{author}{\bibfnamefont{N.~R.} \bibnamefont{Wallach}},
  \bibinfo{journal}{Phys. Rev. Lett.} \textbf{\bibinfo{volume}{111}},
  \bibinfo{pages}{060502} (\bibinfo{year}{2013}),
  \urlprefix\url{https://link.aps.org/doi/10.1103/PhysRevLett.111.060502}.

\bibitem[{\citenamefont{Chen and Hayashi}(2011)}]{MCSLOCC2011}
\bibinfo{author}{\bibfnamefont{L.}~\bibnamefont{Chen}} \bibnamefont{and}
  \bibinfo{author}{\bibfnamefont{M.}~\bibnamefont{Hayashi}},
  \bibinfo{journal}{Phys. Rev. A} \textbf{\bibinfo{volume}{83}},
  \bibinfo{pages}{022331} (\bibinfo{year}{2011}),
  \urlprefix\url{https://link.aps.org/doi/10.1103/PhysRevA.83.022331}.

\bibitem[{\citenamefont{Han and Kye}(2017)}]{ghzdiag2017}
\bibinfo{author}{\bibfnamefont{K.~H.} \bibnamefont{Han}} \bibnamefont{and}
  \bibinfo{author}{\bibfnamefont{S.-H.} \bibnamefont{Kye}},
  \bibinfo{journal}{Journal of Physics A: Mathematical and Theoretical}
  \textbf{\bibinfo{volume}{50}}, \bibinfo{pages}{145303}
  (\bibinfo{year}{2017}),
  \urlprefix\url{https://doi.org/10.1088%2F1751-8121%2Faa616b}.

\bibitem[{\citenamefont{Chen et~al.}(2017)\citenamefont{Chen, Han, and
  Kye}}]{xchen2017}
\bibinfo{author}{\bibfnamefont{L.}~\bibnamefont{Chen}},
  \bibinfo{author}{\bibfnamefont{K.~H.} \bibnamefont{Han}}, \bibnamefont{and}
  \bibinfo{author}{\bibfnamefont{S.-H.} \bibnamefont{Kye}},
  \bibinfo{journal}{Journal of Physics A: Mathematical and Theoretical}
  \textbf{\bibinfo{volume}{50}}, \bibinfo{pages}{345303}
  (\bibinfo{year}{2017}),
  \urlprefix\url{https://doi.org/10.1088%2F1751-8121%2Faa7f9a}.

\bibitem[{\citenamefont{Yu and Eberly}(2004)}]{xdecoherence2004}
\bibinfo{author}{\bibfnamefont{T.}~\bibnamefont{Yu}} \bibnamefont{and}
  \bibinfo{author}{\bibfnamefont{J.~H.} \bibnamefont{Eberly}},
  \bibinfo{journal}{Phys. Rev. Lett.} \textbf{\bibinfo{volume}{93}},
  \bibinfo{pages}{140404} (\bibinfo{year}{2004}),
  \urlprefix\url{https://link.aps.org/doi/10.1103/PhysRevLett.93.140404}.

\bibitem[{\citenamefont{Ali et~al.}(2010)\citenamefont{Ali, Rau, and
  Alber}}]{xdiscord2010}
\bibinfo{author}{\bibfnamefont{M.}~\bibnamefont{Ali}},
  \bibinfo{author}{\bibfnamefont{A.~R.~P.} \bibnamefont{Rau}},
  \bibnamefont{and} \bibinfo{author}{\bibfnamefont{G.}~\bibnamefont{Alber}},
  \bibinfo{journal}{Phys. Rev. A} \textbf{\bibinfo{volume}{81}},
  \bibinfo{pages}{042105} (\bibinfo{year}{2010}),
  \urlprefix\url{https://link.aps.org/doi/10.1103/PhysRevA.81.042105}.

\bibitem[{\citenamefont{Mendonca et~al.}(2017)\citenamefont{Mendonca,
  Marchiolli, and Hedemann}}]{2x3mes2017}
\bibinfo{author}{\bibfnamefont{P.~E. M.~F.} \bibnamefont{Mendonca}},
  \bibinfo{author}{\bibfnamefont{M.~A.} \bibnamefont{Marchiolli}},
  \bibnamefont{and} \bibinfo{author}{\bibfnamefont{S.~R.}
  \bibnamefont{Hedemann}}, \bibinfo{journal}{Phys. Rev. A}
  \textbf{\bibinfo{volume}{95}}, \bibinfo{pages}{022324}
  (\bibinfo{year}{2017}),
  \urlprefix\url{https://link.aps.org/doi/10.1103/PhysRevA.95.022324}.

\bibitem[{\citenamefont{Chen and Dokovic}(2013{\natexlab{b}})}]{pptsprop2013}
\bibinfo{author}{\bibfnamefont{L.}~\bibnamefont{Chen}} \bibnamefont{and}
  \bibinfo{author}{\bibfnamefont{D.~Z.} \bibnamefont{Dokovic}},
  \bibinfo{journal}{Communications in Mathematical Physics}
  \textbf{\bibinfo{volume}{323}}, \bibinfo{pages}{241}
  (\bibinfo{year}{2013}{\natexlab{b}}),
  \urlprefix\url{https://doi.org/10.1007/s00220-013-1770-6}.

\bibitem[{\citenamefont{Atkinson and Lloyd}(1983)}]{3-tensorrank1983}
\bibinfo{author}{\bibfnamefont{M.~D.} \bibnamefont{Atkinson}} \bibnamefont{and}
  \bibinfo{author}{\bibfnamefont{S.}~\bibnamefont{Lloyd}},
  \bibinfo{journal}{SIAM Journal on Computing} \textbf{\bibinfo{volume}{12}},
  \bibinfo{pages}{611} (\bibinfo{year}{1983}),
  \eprint{https://doi.org/10.1137/0212041},
  \urlprefix\url{https://doi.org/10.1137/0212041}.

\end{thebibliography}

\end{document}